\DeclareMathOperator{\E}{\mathbb{E}}
\newcommand{\bs}[1]{\boldsymbol{#1}}
\DeclareMathOperator*{\argmin}{arg\,min}
\newcommand{\indep}{\!\perp\!\!\!\perp}
\DeclareMathOperator{\vect}{vec}
\DeclareMathOperator{\rank}{rank}
\tikzstyle{process} = [rectangle, rounded corners, minimum width=3cm, minimum height=1cm, text centered, draw=black, fill=gray!10]
\tikzstyle{arrow} = [thick,->,>=stealth]
\newcommand{\EE}[2][]{\mathbb{E}_{#1}\left[#2\right]}
\newcommand{\PP}[2][]{\mathbb{P}_{#1}\left(#2\right)}
\newcommand{\Var}[2][]{\operatorname{Var}_{#1}\left(#2\right)}
\newcommand{\Cov}[2][]{\operatorname{Cov}_{#1}\left(#2\right)}
\newcommand{\Corr}[2][]{\operatorname{Corr}_{#1}\left[#2\right]}
\newcommand{\rr}[1]{#1^\text{r}}
\titleformat*{\section}{\bf\large}
\titleformat*{\subsection}{\bf\normalsize}
\newtheorem{theorem}{Theorem}
\newtheorem{remark}{Remark}
\newtheorem{assumption}{Assumption}
\newtheorem{corollary}{Corollary}
\newtheorem{lemma}{Lemma}
\newtheorem{definition}{Definition}
\newtheorem*{example}{Example}
\definecolor{WowColor}{rgb}{.75,0,.75}
\definecolor{SubtleColor}{rgb}{0,0,.50}
\newcounter{margincounter}
\title{Statistical Inference for Cell Type Deconvolution}
\date{}
\author[1]{Dongyue Xie}
\author[1]{Lin Gui}
\author[1]{Jingshu Wang \thanks{Corresponding author. Email address: jingshuw@uchicago.edu. }}
\affil[1]{Department of Statistics, The University of Chicago, Chicago, IL, USA}
\begin{document}

\doublespacing

\maketitle

\begin{abstract}

Integrating heterogeneous datasets across different measurement platforms is a fundamental challenge in many scientific applications. A common example arises in deconvolution problems, such as cell type deconvolution, where one aims to estimate the composition of latent subpopulations using reference data from a different source.
However, this task is complicated by systematic platform-specific scaling effects, measurement noise, and differences between data sources. For the problem of cell type deconvolution, existing methods often neglect the correlation and uncertainty in cell type proportion estimates, possibly leading to an additional concern of false positives in downstream comparisons across multiple individuals. We introduce MEAD, a statistical framework that provides both accurate estimation and valid statistical inference on the estimates. One of our key contributions is the identifiability result, which establishes the conditions under which cell type compositions are identifiable under arbitrary gene-specific scaling differences across platforms. MEAD also supports the comparison of cell type proportions across individuals after deconvolution, accounting for gene-gene correlations and biological variability. Through simulations and real-data analysis, MEAD demonstrates superior reliability for inferring cell type compositions in complex biological systems.

\end{abstract}

Keywords: error-in-variable models, single-cell sequencing,  transfer learning 

\newpage

\section{Introduction}\label{sec:intro}

Integrating data from diverse sources is a common strategy in modern data analysis, especially when direct measurements of necessary features are limited or unavailable. Leveraging external datasets, often collected from different individuals or using different technologies, can provide a cost-effective solution to fill information gaps. However, such integration introduces additional biases and variability. Ignoring the heterogeneity across datasets may increase the risk of false positives in downstream statistical analyses.

Solutions to these integration challenges are typically model- and context-specific. In this paper, we focus on a key example in genetics: estimating individual-level cell-type proportions through cell-type deconvolution. This task is represented by the model (Figure~\ref{fig:MEAD}):
\begin{equation}\label{eq:general_model}
    \bs y_i = \tilde{\bs X}_i \bs p_i + \bs \epsilon_i,
\end{equation}
where the goal is to estimate $\bs p_i$, the vector of cell-type proportions for individual $i$, despite the unavailability of the design matrix $\tilde{\bs X}_i$ which must be approximated using external data.

Cell-type deconvolution is a widely used computational approach to estimate $\bs p_i =\left(p_{i1},\cdots,p_{iK}\right)\in [0,1]^K$ with $\sum_{k=1}^K p_{ik} = 1$, the relative abundances of $K$ cell types in a bulk tissue sample $i$ \citep{newman2019determining,wang2019bulk,menden2020deep,dong2021scdc}. These proportions provide insights into tissue composition and are often associated with disease development \citep{fridman2012immune, mendizabal2019cell}. However, it is challenging for current experimental technologies to directly measure cell-type composition across large cohorts \citep{jew2020accurate,o2019complementary}. Instead, bulk RNA-seq provides a noisy gene expression vector $\bs y_i \in \mathbb{R}^G$ for the average gene expressions within a target tissue (individual), where $G$ is the number of genes, without access to the cell-type-specific expression matrix $\tilde{\bs X}_i\in \mathbb{R}^{G\times K}$. To address this, deconvolution methods approximate $\tilde{\bs X}_i$ using reference datasets, often single-cell RNA-seq (scRNA-seq) from other reference individuals. Cell type deconvolution is also popular for spatial transcriptomics \citep{gaspard2025cell}, and similar problems arise in other domains, such as admixture estimation in population genetics \citep{alexander2009fast}.

\begin{figure}[t]
\centering
    \includegraphics[width = 0.95\textwidth]{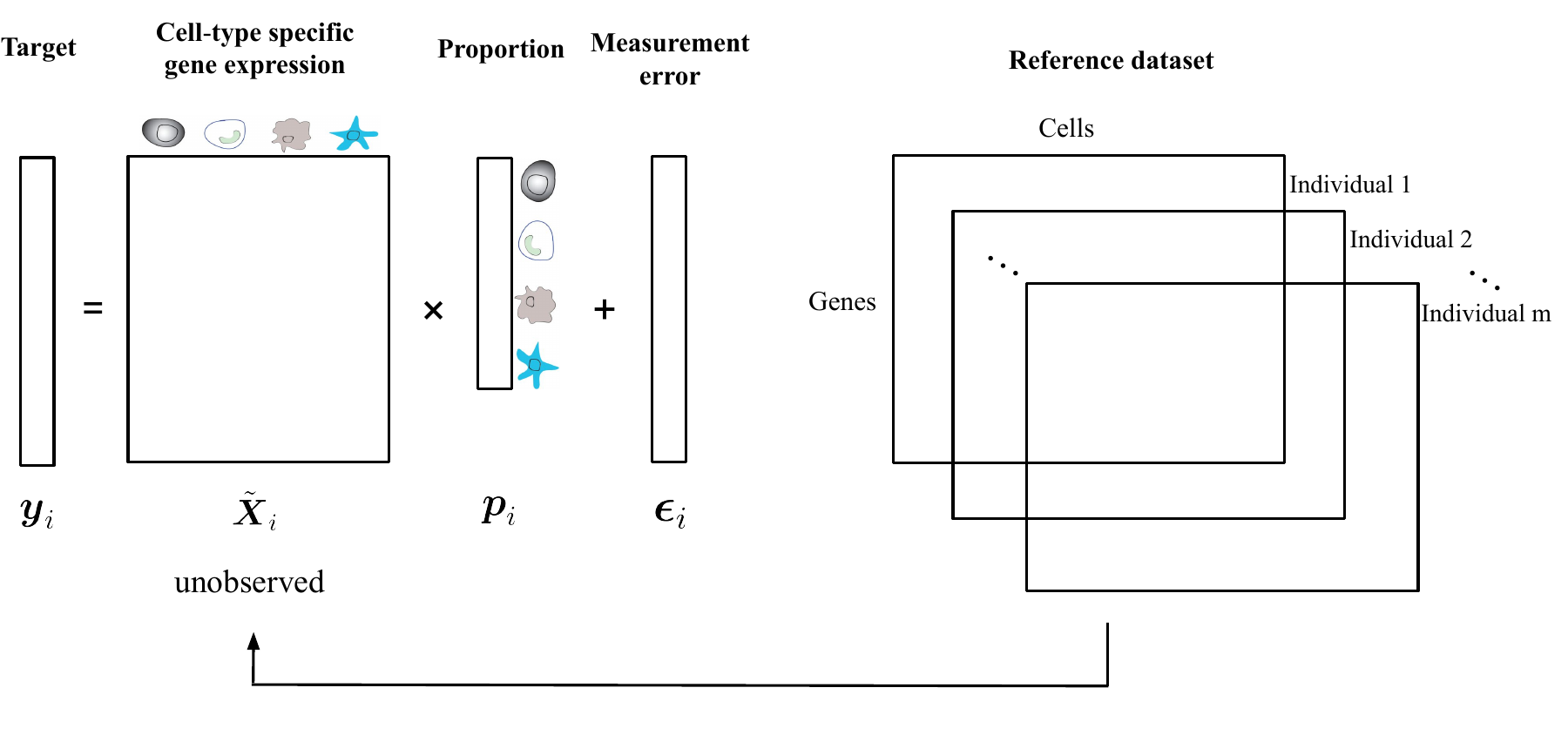}
    \caption{Overview of cell type deconvolution}
    \label{fig:MEAD}
\end{figure}

Despite the apparent simplicity of the linear model in \eqref{eq:general_model}, several challenges complicate statistical inference in practice:

First, the approximation of $\tilde{\bs X}_i$ may differ substantially from the true expression matrix due to both biological variation between target and reference individuals and platform-specific measurement biases.
 Such differences can lead to biased or unidentifiable estimates of $\tilde{\bs X}_i$. Recent methods \citep{jew2020accurate,cable2022robust} allow for unknown gene-specific scaling differences across the target and reference data platforms, but identifiability of cell-type proportions may be lost if such differences are entirely unrestricted.

 Second, inference is further complicated by gene-gene correlations and heterogeneous noise levels. To solve model \eqref{eq:general_model} for any target individual $i$, although genes can be treated as “samples” in the model, they are not independent and often vary in scale. Moreover, preprocessing steps such as normalization can induce additional dependencies among genes.


  Third, the cell-type proportion vector $\bs p_i$ must lie in the simplex, satisfying non-negativity and summing to one. These constraints complicate estimation but also help mitigate scaling differences between true and approximated expression levels.

  Finally, in downstream analyses, researchers often treat estimated proportions as known when comparing groups of target individuals and evaluate whether the cell proportions associate with any features (such as disease status) of the target individuals.
 However, this ignores uncertainty in $\widehat{\bs p}_i$ and the dependence across target individuals $i$ induced by shared reference data. Whether such simplifications affect downstream inference remains unclear.

In this paper, we introduce MEAD (Measurement Error Adjusted Deconvolution), a new method that addresses these challenges and provides valid inference for both individual-level cell-type proportions and cross-individual comparisons. MEAD improves robustness by avoiding strong distributional assumptions and explicitly correcting for measurement error, platform scaling differences, and gene-gene correlation.

One key contribution is showing that the common practice of treating $\widehat{\bs p}_i$ as truth in downstream analyses is justified when the number of target individuals $N$ is small relative to the number of genes $G$. In this regime, estimation error in $\widehat {\bs p}_i$ is negligible compared to noise across samples. However, when $N$ grows such that $N/G \not\to 0$,
the naive approach can inflate false positives, except when testing the global null hypothesis that none of the proportions change with any features of interest, for which we show it remains valid. While our theory focuses on MEAD, simulations suggest these insights generalize to other deconvolution methods.

A second contribution is establishing necessary and sufficient conditions for identifying $\bs p_i$ under arbitrary gene-specific cross-platform scaling differences. We show that simply increasing the number of target individuals is insufficient for identifying the cell type proportions, and 
additional structure is required in the cell-type-specific expression matrix of the selected genes. Our findings provide a comprehensive understanding of statistical inference in cell type deconvolution.







\section{Model Setup and Identification}\label{sec:model}

To address the challenges described in Section~\ref{sec:intro}, we first specify assumptions for both the reference and target datasets, focusing on their shared structure and key differences. We begin by introducing the data and defining the estimand of cell type proportions without major assumptions. We then present the main assumptions and identifiability conditions, and conclude with a simplified model for estimation and inference in Sections~\ref{sec:estimation}--\ref{sec:infer_multi}.

Let $K$ denote the number of cell types and $G$ be 
the number of genes measured in both the target and reference data. 
 Let $N$ and $M$ be the number of individuals in the target and reference datasets, respectively. While $N$ is often large, potentially including hundreds of individuals, $M$ is typically small, as most scRNA-seq experiments sample only a few subjects. We use unbolded lowercase letters for scalars, bold lowercase for vectors, and bold uppercase for matrices.

For a target individual $i$ measured in bulk RNA-seq, we model the observed gene expression as:
\begin{equation}\label{bulk0}
    \bm{y}_i = \gamma_i\text{diag}(\bm \alpha)\bm X_i\bm p_i + \bm\epsilon_i, \quad \E\left(\bm\epsilon_i\mid \bm X_i\right) = \bm 0.
\end{equation}
Here, matrix $\bm X_i\in \mathbb{R}^{G\times K}$ is the true cell-type specific gene expression matrix for individual $i$, averaged across cells within each cell type. The gene-specific factors $\bm \alpha = (\alpha_1, \cdots, \alpha_G)$ account for technical variation across genes, such as gene length and GC content \citep{benjamini2012summarizing}. The individual-specific scalar  $\gamma_i$ captures sequencing depth and tissue size \citep{wang2018gene}. The effective expression matrix $\tilde{\bm X}_i = \gamma_i \mathrm{diag}(\bm \alpha) \bm X_i$ as in model \eqref{eq:general_model} thus incorporates both biological and technical variation.

For a reference individual $j$ in scRNA-seq, the corresponding cell-type level pseudo-bulk expression can be modeled as (see Section~\ref{sec:cell_model} for derivation):
\begin{equation}\label{ref0}
    \rr{\bm{Z}}_j = \rr{\gamma_j}\text{diag}(\rr{\bm \alpha})\rr{\bm X_j} + \rr{\bm E_j},\quad \E\left(\rr{\bm E_j}\mid \rr{\bm X_j}\right) = \bm 0.
\end{equation}
Here, $\rr{\bm{Z}}_j \in \mathbb{R}^{G\times K}$ contains observed average expression within each cell type, serving as input to our framework. The true expression matrix $\rr{\bm X}_j\in \mathbb{R}^{G\times K}$ and scaling factors  $\rr{\gamma_j}$ and $\rr{\bm \alpha} = (\rr{\alpha}_1, \cdots, \rr{\alpha}_G)$ represent the true expression levels and technical factors in the reference data, in parallel to $\bs X_i, \gamma_i$ and $\bs \alpha$ in the target data. However, due to differences in sequencing platforms, the gene-specific factors $\rr{\alpha}_g$ in the scRNA-seq data may differ substantially from $\alpha_g$ for many genes \citep{jew2020accurate,cable2022robust}.

\subsection{Identifiability allowing arbitrary cross-platform differences}

We now introduce assumptions necessary to identify the cell-type proportions $\bs p_i$, allowing arbitrary differences in gene-specific scaling factors $\bm \alpha$ and $\rr{\bm \alpha}$ across platforms. These factors are treated as fixed, while true cell-type-specific gene expressions $\bm X_i$ and $\rr{\bm X_j}$ are modeled as random across individuals.

Our first key assumption links the gene expression distributions of the target and reference individuals by assuming they are drawn from a common population:

\begin{assumption}[Homogeneous population]\label{asp:homo_popu}
Both the target and reference individuals are independently sampled from the same population:
$$\bm X_1, \cdots, \bm X_N \overset{i.i.d.}{\sim} F_{\bm X},\quad \rr{\bm X_1}, \cdots, \rr{\bm X_M} \overset{i.i.d.}{\sim} F_{\bm X}.$$
Additionally, the noise terms $\bs\epsilon_1, \cdots, \bs \epsilon_N$ and $\rr{\bs E}_1, \cdots, \rr{\bs E}_M$ are mutually independent. 
\end{assumption}

\begin{remark}
    Assumption \ref{asp:homo_popu} implies unbiased sampling from the population, which might not always hold in practice. Potential relaxations are discussed in Section \ref{sec:discuss}. 
\end{remark}

Under Assumption~\ref{asp:homo_popu}, define the rescaled population-level cell-type true gene expression matrix  $\bs U  = \text{diag}(\rr{\bm \alpha})\EE{\bm X_i} \in \mathbb{R}^{G\times K}$, with entries $\mu_{gk}$, and let $\bs \Lambda = \text{diag}(\lambda_1, \cdots,$ $ \lambda_G)$ where $\lambda_g = \alpha_g/\rr{\alpha}_g$ captures the gene-specific cross-platform scaling ratios. Then, models \eqref{bulk0} and \eqref{ref0} become
\begin{equation}\label{bulk}
\begin{aligned}
 & \text{Target data: } \quad \bm{y}_i = \gamma_i\bm \Lambda\bm U\bm p_i + \bm\epsilon_i',\quad i = 1,\cdots,N \\
  & \text{Reference data: } \quad \rr{\bm{Z}}_j = \rr{\gamma_j}\bm U + \rr{\tilde{\bm E}_j},\quad j = 1,\cdots,M  
\end{aligned}
\end{equation}
where the error terms $\bm\epsilon_i'=\gamma_i\bs\Lambda\left(\text{diag}(\bm \alpha^r)\bm X_i - {\bm U}\right)\bm p_{i}+\bm\epsilon_i$ and $\rr{\tilde{\bm E}_j}=\rr{\gamma_j}\left(\text{diag}(\bm \alpha^r)\bm X_i - {\bm U}\right)+\rr{\bm E}_j$ both have zero means.


 Because the proportion vector $\bs p_i$ must satisfy $\bs p_i^\top \bs 1 = 1$ and the scalars $\gamma_i$ and $\gamma_j^r$ absorb overall scaling for each individual, we impose the following constraints without loss of generality:

\begin{assumption} [Scaling constraints]\label{asp:scaling_constraints}
    $\bs U$ and $\bs \Lambda$ satisfy the following scaling constraints:
    $$\sum_{k = 1}^K\sum_{g = 1}^G\mu_{gk} = KG, \quad \sum_{g = 1}^G \lambda_g = G.$$
\end{assumption}

Given the reference data, $\bs U$ is identifiable (Corollary~\ref{cor:identification1}). 
Identifying $\bs P = [\bs p_1, \cdots, \bs p_N]\in \mathbb{R}^{K \times N}$ then reduces to identifying $\bs B =(\bs \beta_1,\cdots, \bs \beta_N)$ and $\bs \Lambda$ where $\bs\beta_i = \gamma_i\bs p_i$, from known $\bs \Lambda \bs U \bs B$ and $\bs U$.
The following theorem gives necessary and sufficient conditions:

\begin{theorem}\label{thm:identify}
Under Assumption~\ref{asp:homo_popu} and $\rank(\bm P) = K$, the proportion matrix $\bm P$ in model \eqref{bulk} is identifiable if and only if:
\begin{enumerate}
    \item[a.] $\bs U$ has full rank $K$; 
    \item[b.] For any partition $\{I_1, I_2, \cdots, I_t\}$ of the genes indices
    with $t \geq 2$, the sum of the ranks of the corresponding submatrices $\bs U_{I_s}$ of $\bs U$ satisfies $\sum_{s = 1}^t \rank(\bs U_{I_s}) > K$.
\end{enumerate} 
\end{theorem}

Condition (a) ensures sufficient informative genes available to estimate the cell-type proportions, as in linear regression.
Condition (b) prevents the decomposition of the signal into disjoint gene subsets, which would confound the estimation of  $\bs \Lambda$ and $\bs P$. As a counter-example, $\bs P$ cannot be identified if all genes are perfect marker genes for the involved cell types.

\begin{example} [counter-example: perfect marker genes] 
Suppose only perfect marker genes that only express in a particular cell type are used: for each gene $g \in I_k$,  $u_{gk} > 0$ while $u_{gk'} = 0$ for $k' \neq k$. Many methods recommend such gene selection for deconvolution \citep{chen2018profiling,newman2019determining}. Define any non-negative vector $\bs \delta = (\delta_1, \cdots, \delta_K)$, and let 
$$\tilde\lambda_{g}=\lambda_g/\delta_k \text{ if } g\in I_k, \quad \tilde p_{ik}=\delta_kp_{ik}/\sum_{l=1}^K\delta_lp_{il}, \quad \tilde \gamma_i = \gamma_i\sum_{l=1}^K\delta_lp_{il}.$$
Then, for all $i$, we have:
$$\tilde \gamma_i\tilde{\bs \Lambda}\bs U\tilde{\bs p}_i = \gamma_i\bs\Lambda \bs U \bs p_i, $$
implying $\bs P$ is not identifiable.
\end{example}

Finally, Theorem~\ref{thm:identify} requires $\text{rank}(\bm P) = K$, which implies $N \geq K$, a condition highlighted in \cite{jew2020accurate} and \cite{cable2022robust}. However, this is not sufficient on its own; gene selection also plays a critical role in ensuring identifiability under arbitrary cross-platform differences.

\subsection{The final model with non-informative gene-specific scaling ratios}
While Theorem~\ref{thm:identify} allows arbitrary gene-specific scaling ratios $\{\lambda_g, g = 1, 2, \cdots, G\}$, estimating these ratios and accounting for their uncertainty is difficult, particularly when $G$ is large. To simplify inference, we follow \cite{cable2022robust} to assume non-informative $\lambda_g$, which is more flexible than assuming a constant ratio across genes, as done in many existing methods.

\begin{assumption}[Non-informative gene-specific scaling ratios]\label{asp:bias}
The scaling ratios $\{\lambda_g = \alpha_g/\alpha_g^r\}$ are independently distributed with mean $1$ and variance $\sigma_0^2$: $\lambda_g\overset{i.i.d.}{\sim }[1, \sigma_0^2]$.
\end{assumption}

Under Assumption~\ref{asp:bias}, $\lambda_g$ is independent of the expression matrix $\bs U$, allowing us to absorb its effect into the error term and simplify model~\eqref{bulk} as follows:
\begin{equation}\label{eq:final_bulk}
\begin{aligned}
   &\text{Target data: } \quad \bs y_i = \bs U \bs \beta_i + \bs e_i, \quad \beta_{ik}\geq 0, \quad \bs p_i = \bs \beta_i/|\bs \beta_i|_1,\quad i = 1,\cdots,N  \\
    & \text{Reference data: } \quad \rr{\bm{Z}}_j = \rr{\gamma_j}\bm U + \rr{\tilde{\bm E}_j}, \quad j = 1,\cdots,M
\end{aligned}
\end{equation}
where $\bs e_i =(\bs \Lambda - \bs I)\bs U\bs \beta_i + \bs\epsilon_i'$ captures both biological variation and measurement error in the target data, and satisfies $\EE{\bs e_i} = \bs 0$.

In this model, the cross-platform scaling ratios $\lambda_g$ do not cause systematic bias in estimating $\bs \beta_i$, but they do increase uncertainty and introduce dependence across individuals. In particular, the errors $\bs e_i$ are no longer independent across target individuals, complicating the inference when comparing across target individuals. 
Furthermore, the model remains flexible and does not require specific distributional assumptions, allowing for heterogeneity and correlation across genes or cell types within each individual.



\section{Model Estimation} \label{sec:estimation}

We first estimate $\bs U$ from the reference data. Estimating the coefficients $\bs \beta_i$ in model~\eqref{eq:final_bulk} then becomes an error-in-variable linear regression problem with non-negativity constraints and heteroscedastic, correlated noise. We develop a new procedure for estimating the cell-type proportions $\bs p_i$, adapting the general structure of existing deconvolution methods but incorporating corrections for the measurement errors in $\widehat{\bs U}$, as in classical error-in-variable models.

\subsection{Estimation of $ \mathbf U$}\label{sec:estimation_U}

We estimate $\bs U$ directly from the normalized reference data. Let $\rr z_{jgk}$ be the $(g,k)$-th entry of matrix $\rr{\bs Z}_j$, and $\rr{\bm z}_{jg}$ denote the $g$-th row vector of $\rr{\bs Z}_j$. Also, denote $\bs \mu_g = (\mu_{g1}, \cdots, \mu_{gK})$ as the $g$-th row vector of $\bs U$. Then we estimate the individual-specific scaling factor $\rr \gamma_j$ and compute the sample average as:
$$\widehat \gamma_j = \frac{\sum_{k = 1}^K\sum_{g = 1}^G \rr z_{jgk}}{KG}, \quad  \widehat {\bs{\mu}}_{g} = \frac{1}{M}\sum_{j = 1}^M \frac{\rr{\bs{z}}_{jg}}{\widehat \gamma_j}.$$
Let $\widehat{\bs U} =(\widehat{\bs \mu}_1, \cdots, \widehat{\bs \mu}_G)^\top$. To quantify uncertainty, define $\bm V_g = \text{Cov}\left[\widehat{\bm\mu}_g\right]$ and estimate it by 
the sample covariance: 
$$\widehat{\bm V}_g = \frac{1}{M(M-1)}\sum_{j = 1}^M \left(\frac{\rr{\bm z}_{jg}}{\widehat \gamma_j} - \widehat{\bm\mu}_g\right)\left(\frac{\rr{\bm z}_{jg}}{\widehat \gamma_j} - \widehat{\bm\mu}_g\right)^\top.$$

\begin{remark}
While $\widehat{\bs \mu}_g$ and $\widehat{\bs V}_g$ are not unbiased due to unknown $\gamma_j$, we show in Section \ref{sec:inference_one} that  $\widehat\gamma_j$ is a consistent estimator of $\rr\gamma_j$. As a result, individual $\widehat{\bs \mu}_g$ and $\widehat{\bs V}_g$ become asymptotically unbiased as $G \to \infty$.  
\end{remark}

\subsection{Estimation of cell type proportions for each individual}\label{sec:est_2}  
For the target data, model~\eqref{eq:final_bulk} 
 takes the form of a linear regression with genes as ``samples''. Since $\widehat {\bs U}$ is a noisy estimate of $\bs U$, estimating $\bs \beta_i$ becomes an error-in-variables regression problem \citep{fuller2009measurement}.
Unlike the classical setting, however, gene-level variability and correlation are substantial, requiring gene-specific weighting for efficient estimation \citep{wang2019bulk}. 

We introduce a diagonal weight matrix $\bm W = \text{diag}(\bs w)$, where $\bs w =(w_1, \cdots, w_G)$. Many existing deconvolution methods focus on choosing $\bs W$ appropriately \citep{wang2019bulk,newman2019determining}. In Section \ref{sec:weight_matrix}, we present our empirical approach to selecting $\bs W$ and compare it with existing strategies. For now, we treat $\bs W$ as fixed.

Adapting classical bias correction for errors-in-variables regression, we estimate $\bm \beta_i$ using the following equation:
\begin{equation}\label{est_eq}
\begin{split}
    \bm \phi(\bm \beta_i) &= \widehat{\bm U}^\top \bm W \bm y_i - (\widehat{\bm U}^\top \bm W \widehat{\bm U} - \widehat{\bs V}) \bm{\beta}_i = \bm 0, 
\end{split}
\end{equation}
where $\widehat{\bs V}  = \sum_{g=1}^G w_g\widehat{\bs V}_g$. 
If each $\widehat{\bs \mu}_g$ and $\widehat{\bs V}_g$ are asymptotically unbiased for $\bs \mu_g$ and $\bs V_g$ as $G \to \infty$, then $\EE{\bs \phi(\bm \beta_i)} \to \bs 0$ at the true value of $\bs \beta_i$. Notably, this remains asymptotically valid under gene-level heterogeneity and correlation.

The unconstrained solution of Equation~\eqref{est_eq} is:
\begin{equation*}\label{adjwls}
     \widehat{\bs \beta}_i = (\widehat{\bs U}^\top \bs W \widehat{\bs U} - \widehat{\bs V})^{-1}\widehat{\bs U}^\top \bs W \bs y_i.
\end{equation*}
To enforce non-negativity, we define $\widehat {\bs \beta}_i^\star$ either by truncating negative entires, i.e., $\widehat {\bs \beta}_i^\star = \widehat {\bs \beta}_i^{\text{trunc}} = \widehat {\bs \beta}_i \vee \bs 0$, or solving a contrained problem:
$$\widehat {\bs \beta}_i^\star = \widehat {\bs \beta}_i^{\text{constr}} = \argmin_{{\bs \beta}_i \succeq \bs 0}(\bs y_i - \widehat{\bs U}\bs \beta_i)^\top \bs W \bs (\bs y_i - \widehat{\bs U}\bs \beta_i) - \bs \beta_i^\top\widehat{\bs V} \bs \beta_i.$$
In either case, $\widehat {\bs \beta}_i^\star \neq \widehat {\bs \beta}_i$ only when any element $\widehat\beta_{ik} < 0$ in $\widehat {\bs \beta}_i$. The final estimate of the cell-type proportions is given by
 $  \widehat {\bs p}_i = \bs g(\widehat {\bs \beta}_i^\star) \overset{\Delta}{=} \widehat {\bs \beta}_i^\star /\|\widehat {\bs \beta}_i^\star\|_1.$

\section{Statistical inference for a single target individual}\label{sec:inference_one}

We analyze the theoretical properties of the estimator $\widehat{\bs p}_i$ and construct confidence intervals for each component $p_{ik}$, focusing on a single target individual $i$. Inference across multiple individuals is discussed in Section~\ref{sec:infer_multi}. Our analysis considers the common setting where the number of genes $G$ is large, while the number of reference individuals $M$ and cell types $K$ are relatively small. We therefore work in the asymptotic regime $G \to \infty$ with fixed $M$ and $K$.

Although model~\eqref{eq:final_bulk} allows for gene-gene dependence, formal inference requires structural assumptions on this dependence. We adopt the notion of an ``almost sparse'' gene co-expression network (GCN) \citep{langfelder2008wgcna,zhang2012weighted,russo2018cemitool}, formalized through the following concept of a dependency graph:

\begin{definition}[Dependency graph, \cite{chen2004normal}]
Let $\{X_i,i \in \mathcal V\}$ be a set of variables indexed by the vertices of a graph $\mathcal G = (\mathcal V, \mathcal E)$. Then $\mathcal G$ is a dependency graph if, for any disjoint subsets $\Gamma_1$ and $\Gamma_2$ in $\mathcal V$ with no edge connecting them, the collections $\{X_i,i \in {\Gamma}_1\}$ and $\{X_i,i \in \Gamma_2\}$ are independent.
\end{definition}

Let $\bs e_i = (e_{i1}, \cdots, e_{iG})$ and $\rr{\tilde{\bm E}_j}=\left(\rr{\bm \epsilon}_{j1}, \cdots, \rr{\bm \epsilon}_{jG}\right)^\top$ denote the gene-level error vectors in the target and reference models, respectively. We assume that the noise terms follow a structured but sparse dependency pattern across most genes, while allowing a small subset to exhibit arbitrary dependencies. This is formalized as follows: 
\begin{assumption}[Dependence structure across genes]\label{assmp:dependence}
There exists a subset $\mathcal V \subset \{1, 2, \cdots, G\}$ such that the noise terms $\left\{\left(\{e_{ig}\}, \{\rr{\bm{\epsilon}}_{jg}\}\right), g \in \mathcal V\right\}$, form a dependency graph $\mathcal G$. with maximum degree $D \leq s$ for some constant $s$. The complement satisfies $|\mathcal V^c|/\sqrt G \to 0$ as $G \to \infty$. 
\end{assumption}

\begin{remark}
 We believe that Assumption~\ref{assmp:dependence} reasonably approximates the complex gene–gene dependence in real data. Prior work often assumes sparse GCN, or at least sparsity in strong correlations \citep{langfelder2008wgcna, iacono2019single}, supported by empirical findings such as those in Figure 3 of \citet{agarwal2020data}, where most pairwise gene–gene sample correlations are close to zero. 
\end{remark}


 


\subsection{Consistency}

To establish the consistency of $\widehat {\bs p}_i$, we avoid imposing parametric distributional assumptions and instead assume that the observed data have uniformly bounded moments across genes. This ensures that a small subset of genes does not dominate the variability in gene expression.

\begin{assumption}[Bounded moments]\label{assumption_consistency}
For model~\eqref{eq:final_bulk}, assume 
\begin{itemize}
    \item[a.] As $G\to\infty$, $\frac{1}{G}\bs U^\top\bs W\bs U \to \bs\Omega$, where $\bs\Omega \succ  0$ is  positive definite and $\bs W$ is fixed. 
    \item[b.] There exists $\delta >0$ and a constant $C$ such that 
    $$\max_{i, g}\E\left[y_{ig}^{4 + \delta}\right]\leq C, \quad\max_{j, g, k}\E\left[(\rr z_{jgk})^{4 + \delta}\right]\leq C, \quad\max_{g}\E\left[\lambda_g^{4 + \delta}\right]\leq C,$$
    and row vectors of $\bs U$ are bounded: $\max_g\|\bs\mu_g\|_1\leq C$. 
    \item[c.] The weights $w_g$ are uniformly bounded, with $0\leq w_g \leq C$ for all $g$.
\end{itemize}
\end{assumption} 
Under these conditions, we can establish the following consistency result:

\begin{theorem}\label{thm:consistency}
Under Assumptions~\ref{asp:homo_popu}-\ref{assumption_consistency}, with $M$ and $K$ fixed and $G \to \infty$, we have:
\begin{equation*} \label{eq:consistency_omega}
    \widehat {\bs \Omega} \overset{\Delta}{=} \frac{1}{G}(\widehat{\bs U}^\top\bs W\widehat{\bs U}-\widehat{\bs V})\overset{p}{\to} \bs\Omega, \quad \widehat{\gamma}_j\overset{p}{\to}\rr\gamma_j, \quad \text{for each } j = 1, \dots, M,
\end{equation*}
and for any target individual $i$, $ \widehat{\bs p_i}\overset{p}{\to}\bs p_i.$
   
\end{theorem}




\subsection{Asymptotic normality}
To analyze the asymptotic distribution of $\widehat{\bs p_i}$, we require an additional condition ensuring that the variance $\Cov{\sqrt G \widehat{\bs p}_i}$ grows with $G$, even when the gene-gene correlations are present.  
To ensure this, define the average reference noise $\rr{\bar{\bs \epsilon}}_g=\frac{1}{M}\sum_{j=1}^M (\rr{\bs \epsilon}_{jg}/\rr{\gamma}_j)$ and the following items: 
$$\rr{\bm H} = \sum_{g=1}^G \left[w_g\left(\rr{\bar{\bs \epsilon}}_g(\rr{\bar{\bs \epsilon}}_g+\bs \mu_g)^\top - \text{Cov}_M\left(\rr{\bar{\bs \epsilon}}_g\right)\right) - \frac{\bm 1_K^\top\rr{\bar{\bs \epsilon}}_g}{K}\bs\Omega\right],\quad \bs s_i = \sum_{g = 1}^G w_g e_{ig}\rr{\bar{\bs \epsilon}}_g,$$
where
$\text{Cov}_M\left(\rr{\bar{\bs \epsilon}}_g\right) \overset{\Delta}{=}\left(\sum_{j=1}^M\left(\rr{{\bs \epsilon}}_{jg} - \rr{\bar{\bs \epsilon}}_g\right)\left(\rr{{\bs \epsilon}}_{jg} - \rr{\bar{\bs \epsilon}}_g\right)^\top\right)/[M(M-1)]$ and $\bm 1_K = (1, \cdots, 1)\in \mathbb{R}^K$.

We introduce the following minor technical assumption:
\begin{assumption}[Non-collapsing variance]\label{assmp:variance}
As $G \to \infty$, we assume 
$\lim_{G\to \infty}\Cov{\vect\left(\rr{\bs H}\right)}/G$
and  
$\lim_{G\to \infty}\Cov{\bs s_i}/G$
exist, and at least one of the limits is positive definite.
\end{assumption}

Empirical studies suggest gene-gene correlations are mostly positive, which increases overall variance, making Assumption~\ref{assmp:variance} a practically reasonable assumption. 
Then, using the central limit theorem for weakly dependent variables with local dependence \citep{chen2004normal}, we obtain the asymptotic distribution of $\widehat {\bs p}_i$:
\begin{theorem}\label{thm:clt}
Under Assumptions~\ref{asp:homo_popu}-\ref{assmp:variance}, for each target individual $i$, if $p_{ik} > 0$ for all $k$, then for each target individual $i$, as $G \to \infty$:
$$\sqrt G(\widehat {\bs \beta}_i^\star - \bs \beta_i) \overset{d}{\to} \mathcal{N}(\bs 0, \bs \Omega^{-1}\bs \Sigma_i \bs \Omega^{-1}),$$
where $\bm\Sigma_i \overset{\Delta}{=} \lim_{G\to \infty}\Cov{\bs \phi(\bs \beta_i)}/G   \succ 0 $. As a result,
\begin{equation*} 
    \sqrt{G}(\widehat {\bs p}_i - \bs p_i)\overset{d}{\to} \mathcal N\left(\bs 0, \nabla\bs g( \bs\beta_i)\bs\Omega^{-1}\bs\Sigma_i\bs\Omega^{-1}\nabla\bs g(\bs\beta_i)^\top\right),
\end{equation*}
where $\nabla\bs g(\bs x)$ is the Jacobian matrix of the standardizing function $\bs g(\bs x) = \bs x/|\bs x|_1$.
\end{theorem}
Theorem~\ref{thm:clt} shows that the variance of $\widehat{\bs p}_i$ increases with gene-gene correlations (through $\bs \Sigma_i$) and with homogeneity across cell types (through $\bs\Omega^{-1}$). 
To construct confidence intervals for each  $p_{ik}$, we estimate the asymptotic covariance by:
$$\widehat{\text{Cov}}\left[\sqrt G\widehat{\bm p}_i\right] = \nabla\bs g( \widehat{\bs\beta}^\star_i)\widehat{\bs\Omega}^{-1}\widehat{\bs\Sigma}_i\widehat{\bs\Omega}^{-1}\nabla\bs g(\widehat{\bs\beta}^\star_i)^\top,$$
where consistency of $\widehat{\bs\beta}^\star_i$ and $\widehat{\bs\Omega}$ is guaranteed by Theorem~\ref{thm:consistency}.
Accurate estimation of ${\bs\Sigma}_i$ remains the main challenge in the presence of gene-gene dependence. As $\bs \phi(\bs \beta) = \sum_{g = 1}^G \bs \phi_g(\bs \beta_i)$ where
\begin{equation}\label{eq:phi_g}
    \bs \phi_g(\bs \beta_i) = w_g \widehat{\bs \mu}_g^\top y_{ig} - (w_g \widehat{\bs \mu}_g^\top\widehat{\bs \mu}_g - w_g\widehat{\bs V}_g)\bs \beta_i, 
\end{equation}
    we define the sandwich-type estimator: $$\widehat{\bs\Sigma}_i= \frac{1}{G}\left(\sum_{g = 1}^G \bs\phi_g(\widehat {\bs\beta}_i^\star)\bs\phi_g(\widehat{\bs{\beta}}_i^\star)^\top + \sum_{(g_1,g_2) \in \mathcal A}\bs\phi_{g_1}(\widehat {\bs\beta}_i^\star)\bs\phi_{g_2}(\widehat {\bs\beta}_i^\star)^\top\right),$$
where the set $\mathcal A = \{(g_1, g_2): (e_{ig_1}, \rr{\bm \epsilon}_{jg_1}) \text{ and } (e_{ig_2}, \rr{\bm \epsilon}_{jg_2}) \text{ are not independent}\}$. 

In Section \ref{sec:practical}, we will discuss how we estimate $\mathcal A$ (Section \ref{sec:est_cor}) and apply finite-sample corrections (Section \ref{sec:finite_correct}) to obtain good coverage for our confidence intervals in practice. 

\subsection{Softplus transformation}

Theorem~\ref{thm:clt} requires that all $p_{ik}> 0$ to ensure that the estimator $\widehat {\bs\beta}_i^\star$ that satisfies the non-negativity constraints is asymptotically well-behaved. However, in practice, some cell types may be completely absent in a given individual, leading to a point mass at zero for $\widehat {\bs\beta}_i^\star$ (thus $\widehat{\bs p}_i$) and violating the regularity conditions for its asymptotic normality.

To address this, we apply a Softplus transformation of $\widehat {\bs\beta}_i$ instead of directly using $\widehat {\bs\beta}_i^\star$ to smooth the non-negativity constraint:
\begin{equation*}
    \widehat \beta_{ik}^{(a)} = h_a(\widehat\beta_{ik})\overset{\Delta}{=}\frac{1}{a}\log(1 + e^{a\widehat\beta_{ik}}),\quad
\widehat{\bm p}_i^{(a)} = \frac{\widehat{\bm \beta}_i^{(a)}}{|\widehat{\bm \beta}_i^{(a)}|_1},
\end{equation*}
where $\widehat{\bm \beta}_i^{(a)} =(\widehat{\beta}_{i1}^{(a)}\cdots, \widehat\beta_{iK}^{(a)})$, and $a>0$ is a tuning parameter. Let $\bm \beta_i^{(a)}$ with each element $\beta_{ik}^{(a)} = h(\beta_{ik})$, and $\bm p_{i}^{(a)} = \bm g(\bm \beta_i^{(a)})$.  As $a \to \infty$, we recover the original quantities: $\bm \beta_i^{(a)} \overset{a\to \infty}{\to}\bs \beta_i$, and hence $\bm p_i^{(a)} \to\bs p_i$.

Since $h_a(\cdot)$ is smooth, we can derive the asymptotic distribution of $\widehat {\bm p}_i^{(a)}$:
\begin{corollary}\label{cor:CLT_softplus}
    Under Assumptions~\ref{asp:homo_popu}-\ref{assmp:variance}, the Softplus-transformed estimator $\widehat{\bs p}_i^{(a)}$ is asymptotically normal as $G \to \infty$:
    \begin{equation*} 
    \sqrt{G}(\widehat {\bm p}_i^{(a)} - \bm p_i^{(a)})\overset{d}{\to} \mathcal N\left(\bm 0, \nabla\bm g(\bm\beta_i^{(a)})\bm \Gamma\bm\Omega^{-1}\bm\Sigma_i\bm\Omega^{-1}\bm \Gamma\nabla\bm g(\bm\beta_i^{(a)})^\top\right),
\end{equation*}
 where $\bs \Gamma = \text{diag}(\gamma_{11}, \cdots, \gamma_{KK}) \in \mathbb{R}^{K\times K}$ and 
$\gamma_{ii} = h_a'(\beta_{ik})=e^{a\beta_{ik}}/(1 + e^{a\beta_{ik}}).$
\end{corollary}
For any finite $a$, $\widehat{\bm p}_i^{(a)}$ is not a consistent estimator of $\bs p_i$, but the bias diminishes with $a$ since $\bm p_i^{(a)} \overset{a\to \infty}{\to}\bs p_i$. Specifically, as $a \to \infty$, $\widehat{\bs \beta}_i^{(a)}\to \widehat {\bs \beta}_i^{\text{trunc}} =\widehat{\bs \beta}_i \vee 0$, thus  $\widehat{\bs p}_i^{(a)}$ is close to $\widehat{\bs p}_i$ using the truncation estimator when $a$ is sufficiently large.
In practice, we set $a=10$, which empirically yields results close to the original truncation-based estimator. 

 We can estimate the asymptotic covariance of $\widehat{\bm p}_i^{(a)}$ as
$$\widehat{\text{Cov}}\left[\sqrt G\widehat{\bm p}_i^{(a)}\right] = \nabla\bs g( \widehat{\bs\beta}^{(a)}_i)\widehat{\bs \Gamma}\widehat{\bs\Omega}^{-1}\widehat{\bs\Sigma}_i\widehat{\bs\Omega}^{-1}\widehat{\bs \Gamma}\nabla\bs g(\widehat{\bs\beta}^{(a)}_i)^\top$$
to construct CI for each $p_{ik}$. As we will show in simulations and real data studies, CIs based on $\widehat{\bm p}_i^{(a)}$ tend to be shorter than those from the original truncation estimator when some proportion estimates are close to $0$, while still maintaining reasonable coverage.

\section{Statistical Inference across multiple individuals}\label{sec:infer_multi}

In many applications, estimating individual-level cell type proportions serves as an intermediate step. A common downstream goal is to assess how these proportions relate to covariates of interest, such as disease status, treatment assignment, age, or genetic factors \citep{fadista2014global}.

We model the true cell type proportions $\bm p_i$ using a generalized linear model (GLM):
\begin{equation} \label{eq:downstream_lm}
    \mathbb{E}\left(\bs p_{i}\right) = \bs h(\bs b_0 + {\bs A_0}^\top\bs f_i), \quad i = 1, 2, \cdots, N.
\end{equation}
where $\bm f_i \in \mathbb{R}^S$ represents individual-level covariates, $\bm b_0$ is an intercept vector, and $\bm A_0 \in \mathbb{R}^{S \times K}$ encodes how cell type proportions vary with these covariates. The function $\bs h: \mathbb{R}^K \rightarrow [0,1]^K$ is a known link function, such as the identity or softmax functions. Without loss of generality, we assume that $\bs f_i$ is already centered, satisfying $\sum_{i=1}^N \bm f_i = \bm 0$, so that the intercept and covariates are orthogonal.

\begin{remark}
    In contrast to Theorem~\ref{thm:clt}, 
which treats the cell type proportions $\bm p_i$ as fixed, the GLM framework assumes that they are random. These perspectives are reconciled by interpreting our earlier inference as conditional on $\bm p_i$. 
\end{remark}

We further assume that $\bs p_i$ are randomly drawn from the population:
\begin{assumption}[Independence]
\label{assmp:random_prop}
The vectors $\bs p_i$ are mutually independent and independent of the error $\bs \epsilon_i'$ and scaling matrix $\bs \Lambda$ in model~\eqref{bulk}. 
\end{assumption}

If the true proportions ${\bm p_1, \dots, \bm p_N}$ were observed, we may estimate $\bm A_0$ by solving the following estimating equation:
\begin{equation}
\label{eq:estimating-equation-for-A}
\bs L_N(\bs A, \bs b;\bs P):=\frac{1}{N}\sum_{i=1}^N\left\{\bs p_{i,1:(K-1)}-\bs h(\bs b+\bs A^\top \bs f_i)_{1:(K-1)}\right\}\tilde{\bs f}_i^\top=\bs 0,
\end{equation}
where $\tilde{\bs f}_i=\left(1,\bs f_i\right)$ and we drop the last entry of each composition vector since both $\bm p_i$ and $\bs h(\cdot)$ lie on the simplex. Let $(\bm A_N, \bm b_N)$ denote the solution of equation \eqref{eq:estimating-equation-for-A}.

Under standard regularity conditions, the estimator $\bm A_N$ is asymptotically normal as $N\to \infty$ (See Theorem~\ref{lemma:An-clt} for a formal proof):
 $$\sqrt N\vect{({\bs A}_N^\top - \bs A_0^\top)} \overset{d}{\to} \mathcal{N}\left(\bs 0,(\bs L_{\bs B_0}^{-1}\bs D \bs L_{\bs B_0}^{-\top})_{I_{\bs A}\times I_{\bs A}}\right),$$
 where
 \begin{equation}\label{eq:L_B0}
     \bs D=\lim_{N\to\infty}\frac{1}{N}\sum_{i=1}^N\tilde{\bs f}_i\tilde{\bs f}_i^\top\otimes \bs D_i, \quad \bs L_{\bs B_0}=\lim_{N\to\infty}\frac{1}{N}\sum_{i=1}^N\tilde{\bs f}_i\tilde{\bs f}_i^\top\otimes \dot{\bs h}(\bs b_0+\bs A_0^\top\bs f_i)_{1:(K-1)},
 \end{equation}
and $\bs D_i=\mathrm{Cov}(\bs p_{i, 1:(K-1)})$ and  $I_{\bs A}=\{2,\dots,S+1\}$ indexes the parameters in $\bm A$. 

In practice, the true proportions are unknown and we instead use the estimated proportions $\widehat{\bm p}_i$. A natural plug-in estimator $\widehat{\bm A}$ is obtained by solving:
\begin{equation} \label{eq:estimating-equation-for-A-p-hat}
\bs L_N(\bm A, \bm b;\widehat{\bm P}) = \bs 0.
\end{equation} 
To construct valid confidence intervals based on $\widehat{\bm A}$, it is crucial to understand the discrepancy $\widehat{\bm A} - \bm A_0$. Given that $\bm A_N$ is asymptotically normal, it suffices to assess whether the additional estimation error $\widehat{\bm A} - \bm A_N$ is asymptotically negligible or not. We analyze this error under the following conditions:
\begin{assumption}\label{assmp:group}
We assume the following conditions hold:
\begin{enumerate}[label=\alph*.]
\item For any $\bs A$ and $\bs b$, it holds that $\bs L_N(\bs A, \bs b;\bs P) \overset{p}{\to}\bs  L(\bs A,\bs b)$
and $(\bs A_0, \bs b_0)$ is its unique root such that $\bs L(\bs A_0, \bs b_0)=\bs 0$.
\item The equation $\bs L_N(\bs A, \bs b; \bs P) = \bs 0$ has a unique solution for any $\bs P$.  
\item $\lim_{N\to\infty}\frac{1}{N}\sum_{i=1}^N\bs f_i\bs f_i^\top\succ0$, and $\max_i \|\bs f_i\|_2 \leq C_1$ for some constant $C_1$.
\item The function $\dot{\bs h}(\cdot)_{1:(K-1)}$ is continuously differentiable. Additionally, $\bs L_{\bs B_0}$ as defined in \eqref{eq:L_B0} is invertible and its smallest singular value is larger than some constant $C_2$.
\item The scaling parameters $\gamma_i$ satisfy $C_3 \leq \gamma_i \leq C_4$ for constants $C_3, C_4 > 0$, and $\min_{i,k} p_{ik} \geq C_5$ for some $C_5 > 0$.
\end{enumerate}
\end{assumption}

Assumption~\ref{assmp:group}a-d are standard regularity conditions in GLM theory. The bounds on $\gamma_i$ in Assumption~\ref{assmp:group}e reflects typical quality control steps in RNA-seq analyses. The lower bound on $\min_{i,k}p_{ik}$ ensures that our non-negative refinement $\widehat{\bs\beta}_i^\star$ is asymptotically equivalent to $\widehat{\bs\beta}_i$, simplifying our inference procedure as in Theorem~\ref{thm:clt}.

Under these conditions, we establish sufficient conditions for when the plug-in estimator $\widehat{\bm A}$ achieves the same asymptotic distribution as $\bm A_N$:
\begin{theorem}\label{thm:two_group}
Under Assumptions~\ref{asp:homo_popu}-\ref{assmp:group}, the estimator $\widehat{\bs A}$ satisfies
$$\widehat {\bs A} - \bs A_N = o_p(\bs A_N-\bs A_0),~\text{as}~N\to\infty,$$
if either of the following holds:
\begin{enumerate}[label=(\roman*)]
    \item $N=o(G)$,
    \item $N=o(G^2)$ and the global null hypothesis $H_0: \bm A_0 = \bm 0$ holds, with $\bm p_i$ share the same mean and variance.
\end{enumerate}
\end{theorem}

Theorem~\ref{thm:two_group} demonstrates that the naive approach that ignores estimating error in $\widehat{\bs p}_i$ in downstream analyses remains valid under certain conditions. The condition $N = o(G)$ is typically satisfied in bulk RNA-seq applications, where $G$ often exceeds $10{,}000$ and $N$ is in the hundreds or fewer. However, when $N$ becomes comparable to $G$, as for large-scale transcriptomics data, or for spatial transcriptomics where each spot is a target individual, 
estimation errors in $\widehat{\bm p}_i$ may accumulate and affect inference.  The only scenario where naive confidence intervals remain valid even when $N$ is large is under the global null hypothesis $\bm A_0 = \bm 0$, where the cell type proportions do not associate with any features of the target individuals. For more general alternative hypotheses, failure to account for estimation uncertainty may lead to under-coverage of confidence intervals and inflated false positive rates.

\section{Practical considerations}\label{sec:practical}

\subsection{Choice of the weight matrix $ \mathbf W$}\label{sec:weight_matrix}
Due to the variability in gene expressions, assigning equal weights to all genes can lead to inefficient estimators. 
A common approach is to select marker genes based on differential expression in the reference data \citep{chen2018profiling}, under the intuition that such genes are more informative. However, from a linear regression perspective, removing genes (samples) does not provide efficiency gains unless noise levels differ across genes.    

A more effective strategy is to weight genes inversely by their noise variance. For instance, MuSiC \citep{wang2019bulk} estimates the variance $\sigma_g^2 = \Var{y_{ig} - \bs\mu_g^\top\bs \beta_i}$ and sets $w_g = 1/\widehat\sigma_g^2$.
MuSiC relies on residuals from the observed target data $y_{ig}$, which can severely bias our estimating equation~\eqref{est_eq}.

Instead, we estimate $\sigma_g^2$ using only the reference data. Because biological variation typically dominates technical noise in bulk RNA-seq, we approximate $\sigma_g^2$ using the reference-based variance matrix $\bs V_g$. A natural choice is $s^2_g = \bs 1^\top\widehat{\bs V}_g\bs 1$, representing average biological variability across cell types, and set $w_g = 1/s^2_g$.


To stabilize the weights, we assume $s_g^2 \overset{ind}{\sim}\sigma_g^2\chi^2_{d}/d$ with degrees of freedom $d = M - 1$ and apply the empirical Bayes method Vash \citep{lu2016variance},  which assumes a mixture of inverse-Gamma priors on $\sigma_g^2$. 
We then obtain shrinkage estimates $\tilde s_g^2$ towards the mean across genes. The final gene weights are then set to $w_g = 1/\tilde s_g^2$, reducing the effect of extreme values.

\subsection{Estimation of gene-gene dependence set $\mathcal A$}\label{sec:est_cor}

The gene-gene dependence set $\mathcal{A}$ is generally unknown. 
Since reference data often lack sufficient samples to estimate cell-type-specific gene correlations, we infer  $\mathcal{A}$ based on the sample covariance matrix of the target data.


Theoretically, under Assumption~\ref{assmp:dependence},  non-zero entries in the gene-gene covariance matrix can be identified via thresholding \citep{cai2011adaptive}, but this requires a large number of samples $N$. In practice, when $N\ll G$, the thresholding methods can be ineffective, and we instead adopt a multiple testing approach proposed by \citet{cai2016large}.

Specifically, 
let $\bs R= (\rho_{g_1g_2})_{G \times G}$ denote the gene-gene correlation matrix. We test $H_{0,g_1g_2}: \rho_{g_1g_2} = 0$ for all gene pairs using test statistics 
$$T_{g_1g_2} = \left(\sum_{i=1}^N (y_{ig_1} - \bar y_{g_1})(y_{ig_2} - \bar y_{g_2})\right)/\sqrt{N\hat\theta_{g_1g_2}},$$
where $\bar y_{g}$ is the sample mean of gene $g$, and $$\hat\theta_{g_1g_2} = \sum_{i=1}^N\left((y_{ig_1} - \bar y_{g_1})(y_{ig_2} - \bar y_{g_2})-\hat\sigma^2_{g_1g_2}\right)^2/N$$ with $\hat\sigma^2_{g_1g_2}$ being the sample covariance between genes $g_1$ and $g_2$. 
We reject $H_{0,g_1g_2}$ if $|T_{g_1g_2}|\geq \hat t$ where
\[\hat t = \inf\left\{0\leq t\leq b_G:\frac{(2-2\Phi(t))(G^2-G)/2}{\max\left(\sum_{1\leq g_1<g_2\leq G}I(|T_{g_1g_2}|\geq t),1\right)}\leq\alpha\right\},\]
with $b_G = \sqrt{4\log G - 2\log(\log G)}$, and we set $\alpha = 0.5$ to guarantee enough power and $\Phi(\cdot)$ denoting the standard normal cumulative distribution function.

If $N$ is too small, we optionally use public bulk RNA-seq datasets (e.g., GTEx \citep{lonsdale2013genotype}) to identify the top gene-gene pairs with non-zero correlations.


\subsection{Finite-sample correction}\label{sec:finite_correct}

The plug-in sandwich estimator $\widehat{\bs \Sigma}_i$ tends to underestimate the variance of $\widehat{\bs p}_i$, especially when $\widehat{\bs \beta}_i^\star$ is substituted for the true $\bs \beta_i$. 
Despite a large  $G$, high gene-gene correlations greatly reduce the effective sample size, making finite-sample corrections necessary \citep{long2000using}.  

A standard correction is the HC3 method \citep{mackinnon1985some}, which computes jackknife estimators  $\widehat {\bs \beta}_{ig}^\star$ by omitting gene $g$ and estimating:
\begin{equation}\label{eq:CV}
    \widehat{\bs \Sigma}_i^\star = \frac{1}{G}\left(\sum_{g = 1}^G \bs\phi_g(\widehat {\bs\beta}_{ig}^\star)\bs\phi_g(\widehat{\bs{\beta}}_{ig}^\star)^\top + \sum_{(g_1,g_2) \in \mathcal A}\bs\phi_{g_1}(\widehat {\bs\beta}_{ig_1}^\star)\bs\phi_{g_2}(\widehat {\bs\beta}_{ig_2}^\star)^\top\right).
\end{equation}
where $\bs \phi_g(\bs \beta_i)$ is defined in \eqref{eq:phi_g}.
However, this HC3 correction will be ineffective when genes are correlated.


To address this, we introduce a clustering-based $C$-fold cross-validation approach (default $C=10$). 
We first apply k-medoids clustering \citep{rdusseeun1987clustering} to a dissimilarity matrix  $\bs 1\bs 1^\top - \bs A$, where $\bs A \in \{0, 1\}^{G\times G}$  indicates gene-gene dependencies (membership in $\mathcal{A}$). The clustering ensures highly correlated genes are grouped into the same fold.

For each fold  $s$, we exclude it and estimate $\widehat{\bs \beta}_{is}^\star$ from the remaining data, further excluding any genes correlated with those in fold $s$ based on $\mathcal{A}$.
We then compute $\widehat{\bs \Sigma}_i^\star$ as in equation~\eqref{eq:CV}, using $\widehat{\bs \beta}_{is}^\star$ in place of $\widehat{\bs \beta}_{ig}^\star$ to for genes in fold $s$.

Finally, Figure~\ref{fig:flowchart} summarizes the MEAD pipeline. While the full procedure is complex and not fully tractable analytically, its components are well motivated and practically sound. In particular, while we use weights estimated from the reference data, they are still independent from the target data, thus should not severely impact the asymptotic validity of our estimating equation~\eqref{est_eq}, as will be shown in our simulations and real data studies.

\begin{figure}[ht!]
    \centering
    \resizebox{15cm}{!}{
\begin{tikzpicture}

\node (ref) [process, text width=9cm, yshift=0cm] {\textbf{Reference data preparation} 
\vspace{0.2cm}
\begin{enumerate}
    \item[A.] Obtain cell-type specific gene expression reference matrix, $\widehat{\boldsymbol{U}}$, and the sample covariance $\widehat{\boldsymbol{V}}_g$ for each gene, as described in Section~\ref{sec:estimation_U};
    \item[B.] Compute the weight matrix, $\boldsymbol{W}$, following the procedure in Section~\ref{sec:weight_matrix}.
\end{enumerate}
};
\node (bulk) [process, right of=ref, text width=8cm, xshift=8cm, yshift = -0.25cm] {\textbf{Target data processing} \\
\vspace{0.2cm}
    Perform multiple testing on target samples (or from other sources) to identify correlated gene pairs, $\mathcal{A}$, as described in Section~\ref{sec:est_cor}.};

\node (estimation) [process, below of=ref, text width=8cm, xshift=4.5cm, yshift = -3cm] {\textbf{Estimation and inference} \\
\vspace{0.2cm}
    For each target individual $i$, estimate $\bs p_i$ and their corresponding standard errors, and apply finite sample correction (Section~\ref{sec:finite_correct}).};

\node (downstream) [process, below of=estimation, text width=9cm, yshift=-2cm] {\textbf{Downstream tasks} \\
\vspace{0.2cm}
    Compare the change of cell types across multiple individuals, as described in Section~\ref{sec:infer_multi}.}; 

\draw [arrow] (ref.south) -- (estimation.north);
\draw [arrow] (bulk.south) -- (estimation.north);
\draw [arrow] (estimation.south) -- (downstream.north);

\end{tikzpicture}
}
    \caption{Flowchart illustrating the overall procedure of MEAD.}
    \label{fig:flowchart}
\end{figure}

 \section{Simulations}\label{sec:simu1}

We benchmark the performance of MEAD against ordinary least squares (OLS), CIBERSORT \citep{newman2015robust}, and MuSiC \citep{wang2019bulk}. We also evaluate the effect of weighting, estimating $\mathcal{A}$, finite-sample correction, and Softplus transformation on MEAD's accuracy and inference quality.

We generate synthetic data using parameters estimated from the scRNA-seq dataset of \citet{xin2016rna}. After excluding individuals with missing cell types, we use the remaining 14 individuals to compute average gene expressions per cell type and individual. The population mean $\bm U$ is set as the average of these means, and the gene-wise cell-type covariance $\bs V_g =\text{diag}(\sigma_{g1}^2, \cdots, \sigma_{gK}^2)$ is set based on empirical variances. To obtain cleaner simulation data, 
we retain 9,496 genes after filtering out the top $5\%$ lowly and overly expressed genes, following \cite{li2016comprehensive}. This helps reduce noise from extremely lowly expressed genes and avoid dominance by highly expressed genes. 
To introduce gene-gene dependence, we simulate each individual’s true expression $\bs X_j$ from a multivariate log-normal distribution. For each cell type $k$, $\bs x_{jk}$ satisfies that $\EE{\bs x_{jk}}=\bs \mu_i$, $\Var{x_{jgk}}=\sigma_{gk}^2$ and $\Corr{\log{\bs x_{jk}}}=\bs R$, where $\bs R$ is a banded correlation matrix
with entries $\rho_{g_1g_2} = \max\left(1- \frac{|g_1-g_2|}{d}, 0\right)$ and bandwidth $d= 500$. 

For reference individuals, we generate $50$ cells per cell type, simulate single-cell counts from a Negative Binomial distribution $\text{NegBinomial}\left(\mu_{igk} = x_{jgk},\theta = 5\right)$, and compute pseudo-bulk expressions by averaging across cells as each $\rr z_{jgk}$. Each simulated cell exhibits roughly $50\%$ dropout, reflecting the sparsity in scRNA-seq data. 
For target individuals, the observed counts $y_{ig}$ are generated from a Negative Binomial distribution: 
\begin{equation}\label{eq:simu_bulk}
    y_{ig}\sim \text{NegBinomial}\left(\mu_{ig} = s_i \frac{\sum_k \lambda_g x_{igk} p_{ik}}{\sum_{g'} \lambda_{g'}\sum_k x_{ig'k} p_{ik}},\theta = 10\right),
\end{equation}
with $s_i = 500\times G$, the Gamma-distributed cross-platform scaling ratios $\lambda_g \overset{i.i.d.}{\sim} \mathrm{Gamma}(1/0.3, 1/0.3)$, satisfying $\E(\lambda_g) = 1$ and $\text{Var}[\lambda_{g}] = 0.3$.
The cell type proportions $\bs{p}_i$ are drawn from a Dirichlet distribution with parameters $\bs{\alpha} = a\bs{p}_0$, where $\bs{p}_0 = (0.5, 0.3, 0.1, 0.1)$  and $a = 10$. We set $N = 50$ target and $M = 10$ reference individuals, repeating each simulation $100$ times. 

We compare MEAD using three different weight choices: (i) equal weights ($w_g = 1$), (ii) marker gene weights based on \citet{newman2019determining} (same as CIBERSORT), and (iii) the proposed weighting in Section~\ref{sec:weight_matrix}.  Table~\ref{table:rmse_simu} reports the root mean square errors (RMSE) for each method. 
MuSiC yields the lowest RMSE, but MEAD with similar weighting performs comparably. Marker gene weighting provides little improvement for either MEAD or CIBERSORT. In contrast, MEAD’s proposed weighting substantially reduces RMSE from $0.096$ to $0.073$.

We also evaluate CI coverage using MEAD with equal weights and with proposed weighting, and compare it to OLS with HC3 correction. (Table \ref{table:coverage_equal}). Without accounting for gene-gene correlations, both MEAD and OLS show severe under-coverage. We incorporate gene-gene correlation adjustment and finite-sample correction in Section~\ref{sec:finite_correct}, estimating the dependency set $\mathcal{A}$ using 100 additional synthetic target samples. The coverage improves substantially with gene-gene correlation and finite-sample corrections. Figure~\ref{fig:ci_plot_simulation} shows CIs for each cell type and target individual in one simulation, comparing MEAD with and without Softplus transformation. The Softplus transformation slightly reduces coverage, as the CIs become slightly shorter when the proportion estimates are near $0$.   

\begin{figure}[ht!]
    \centering
    \includegraphics[width=\textwidth]{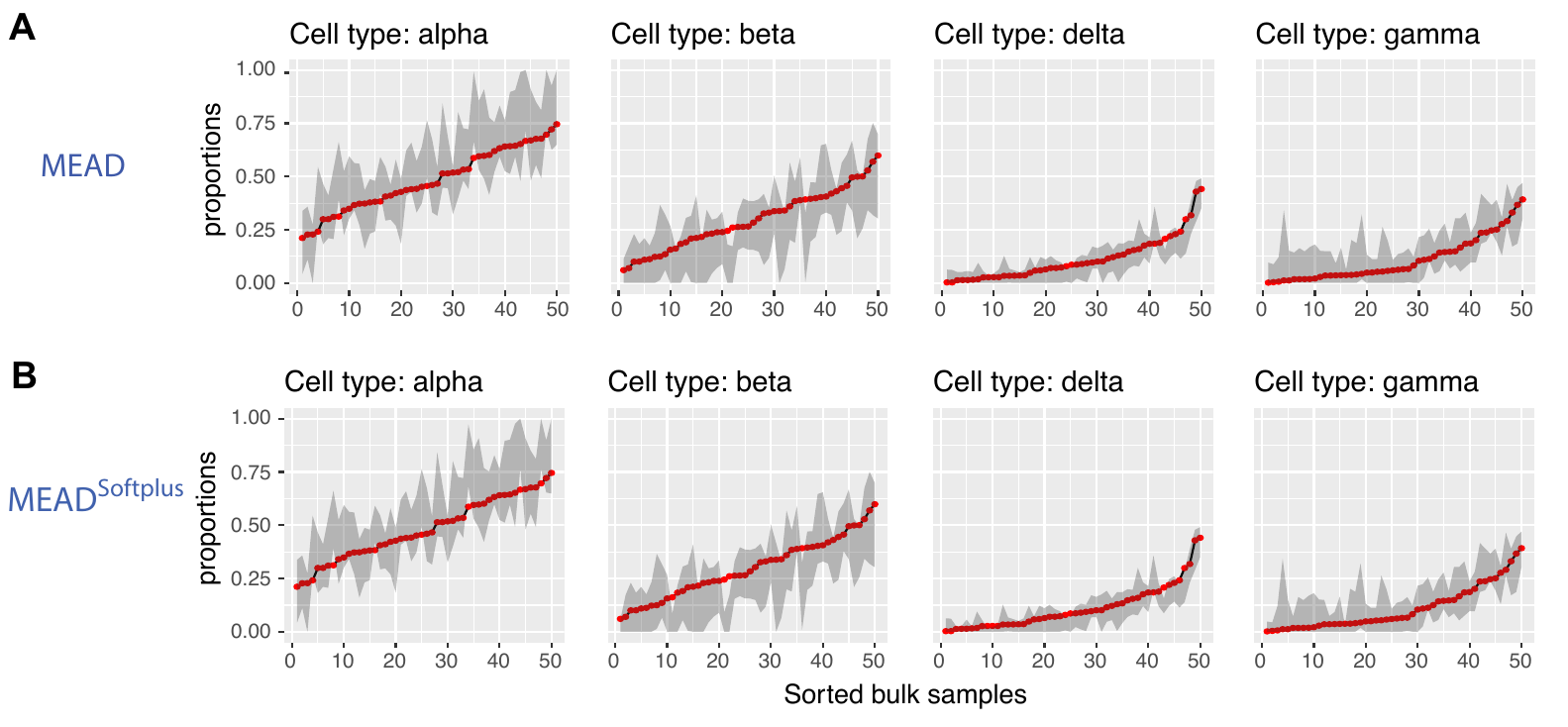}
    \caption{95\% CIs (gray shared areas) for simulated target individuals from one randomly selected simulation replicate using A) MEAD and B) MEAD with Softplus transformation. For each cell type, target individuals are sorted using their true cell-type proportions (red dots) in ascending order. }
    \label{fig:ci_plot_simulation}
\end{figure}





\section{Real data}\label{sec:simu2}

\subsection{Cross-platform pseudo-bulk data deconvolution}\label{sec:additional_analysis}

We perform deconvolution across two different sequencing platforms for human pancreatic islets, following \cite{wang2019bulk}. Pseudo-bulk target samples are constructed by averaging cell-specific gene expression from \cite{xin2016rna}, where the data is generated from the Fluidigm C1 platform (18 individuals: 12 healthy and 6 with Type 2 diabetes (T2D)). The reference dataset, from \cite{segerstolpe2016single}, includes scRNA-seq data from 10 individuals (6 healthy, 4 T2D) using the Smart-seq2 protocol. Consistent with \cite{wang2019bulk}, we use only the 6 healthy individuals as the reference.

After preprocessing, we retain 17,858 genes and 4 cell types (alpha, beta, delta, gamma). We compared MEAD to CIBERSORT, MuSiC and NNLS (Non-negative Least Squares, implemented in the MuSiC package) for estimation accuracy.  
As shown in Table~\ref{tab:platform_metrics} and Figure~\ref{fig:heatmap_platform}a, MEAD exhibits slightly worse point estimation accuracy than MuSiC for individual-level cell type proportions. However, it provides more accurate estimates of the mean differences in cell type proportions between healthy and T2D groups, whereas MuSiC tends to overestimate the differences (Table~\ref{table:cross_platform_betweengroups}). Figure~\ref{fig:heatmap_platform}bc illustrates the $95\%$ confidence intervals for each cell type. Among all $18\times 4 = 72$ cell type proportions, MEAD achieves $92\%$ coverage and the Softplus version achieves $87\%$, with substantially shorter intervals for proportions near $0$.

\begin{figure}[ht!]
    \centering
    \includegraphics[width=\textwidth]{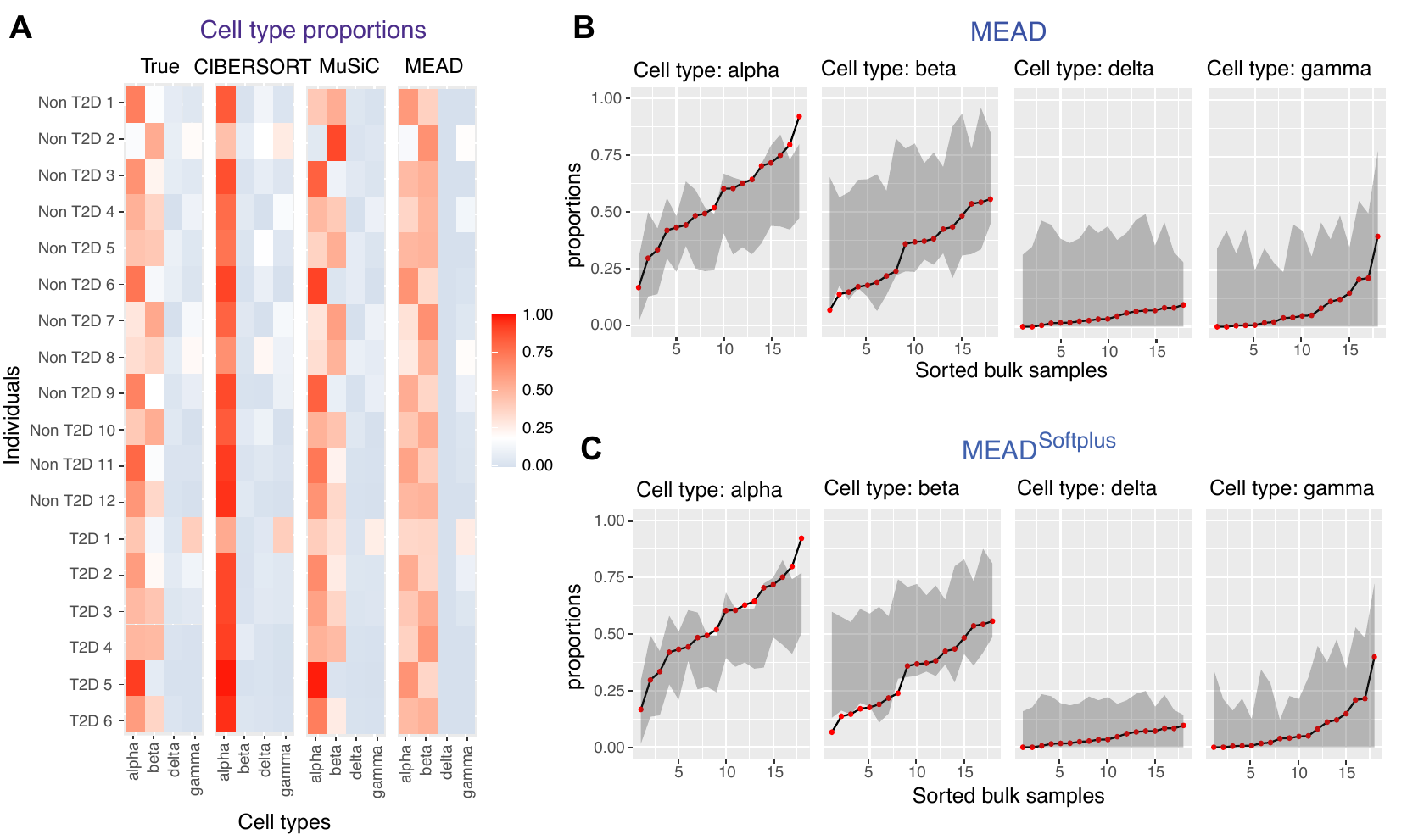}
    \caption{Comparison of estimation and inference results in the pancreas case study. A) Heatmap of estimated cell type proportions for each target individual. B) 95\% CIs (gray shared areas) for each cell type and individual using MEAD. C) 95\% CIs using MEAD with Softplus transformation. For each cell type, target individuals are sorted by their true cell-type proportions (red dots) in ascending order.}
    \label{fig:heatmap_platform}
\end{figure}


\subsection{Compare cell type proportion changes across multiple individuals}

Next, we benchmark MEAD using a population-scale scRNA-seq dataset from \citet{jerber2021population}, which profiles neuron development across 175 individuals and over 250,000 cells. 
While the original study identified seven cell types, we merged two unknown neuron subtypes (one present in $\geq 10$ cells in only 16 individuals), resulting in six cell types for analysis.  

\begin{figure}[ht!]
    \centering
    \includegraphics[width=0.9\textwidth]{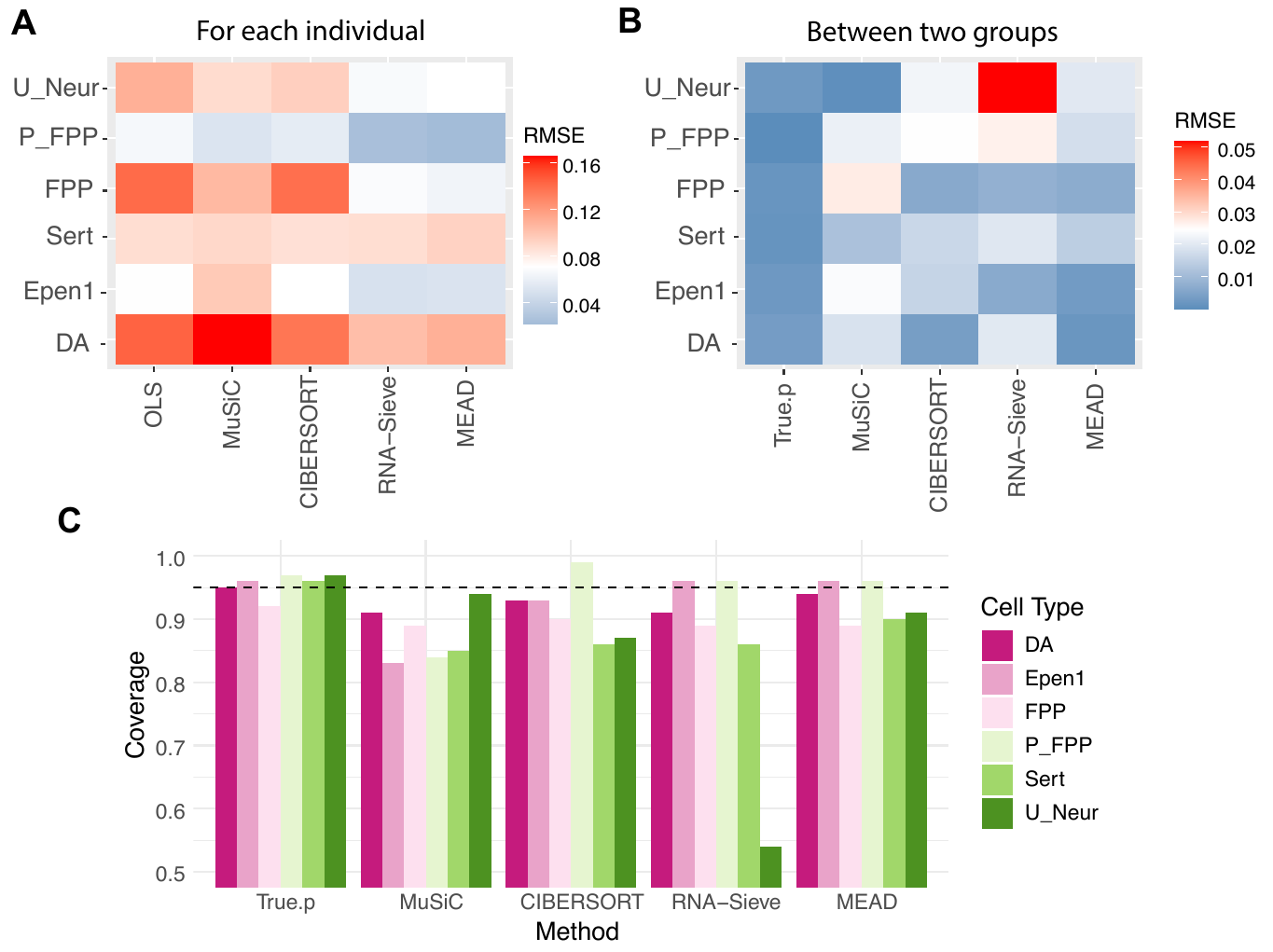}
    \caption{Comparison of RMSE and coverage across methods for the neuron development dataset with $a = 5$. A) RMSE for estimating individual-level cell type proportions under the global null. B) RMSE for estimating differences in average cell type proportions between two groups under the alternative. C) Empirical coverage of $95\%$ CIs for the between-group differences in cell type proportions under the alternative.}
    \label{fig:rmse}
\end{figure}

Each experiment is repeated $B = 100$ times.
In each round, $11$ individuals are randomly selected as reference, while the remaining $86$ individuals are used to generate $N$ target samples (with replacement if $N > 86$). The targets are split into two equal groups. Under the global null, both groups share the same mean proportions $\bs p_1=\bs p_2=(0.3,0.2,0.15,0.15,0.1,0.1)$. Under the alternative, $\bs p_1 = (0.15,0.15,0.1,0.1,0.2,0.3)$ and $\bs p_2 = (0.1,0.1,0.2,0.3,0.15,0.15)$.
The true cell type proportions for each target individual $i$ sampled from a Dirichlet distribution with concentration parameter $=\bs \alpha = a\bs p_1$ for group 1 and $\bs \alpha = a\bs p_2$ for group 2, where the scaling parameter $a=5$ (high variation) or $20$ (low variation) controls across-individual heterogeneity. Target data are generated as pseudo-bulk samples from scRNA-seq using the generated proportions, with added Poisson noise and cross-platform scaling factors $\lambda_g \overset{i.i.d.}{\sim} \mathcal{N}(1, 0.1)$. The gene-gene dependency set $\mathcal{A}$ is estimated using pseudo-bulk data from $97$ individuals.

We compare MEAD to OLS, MuSiC, CIBERSORT, and RNA-Sieve \citep{erdmann2021likelihood}, which also provides  CIs for the proportions.
Figure \ref{fig:rmse}a shows the RMSE for estimated individual-level proportions under the global null, while Figure~\ref{fig:rmse}b shows RMSE for group mean differences under the alternative (similar results for $a=20$ in Figure~\ref{fig:rmse20}), when $N = 86$.
 MEAD consistently achieves the lowest RMSE. 
Table~\ref{table:coverage_real2} and Table~\ref{table:coverage_real2_appendix} report $95\%$ CI coverage for individual-level proportions. MEAD achieves near-nominal coverage, outperforming RNA-Sieve, which does not account for inter-individual variations or gene-gene correlations. Figure~\ref{fig:neuron_ci} illustrates CI widths from one random round of the experiment.

We also evaluate coverage of mean proportion differences using a naive two-sample t-test. Figure~\ref{fig:rmse}c and Figure~\ref{fig:twogroup_coverage_bycelltype20} show CI coverage by cell type, including an oracle that uses true proportions. MEAD performs comparably to the oracle, while other methods perform well when
$a=5$, while their performance degrade when $a=20$, where true cell type proportions have lower cross-individual heterogeneity.


Finally, Table~\ref{table:more_two_group} examines CI coverage as $N$ increases from 86 to 1000. Under the global null, where there is no difference between groups, all methods maintain good coverage. Under the alternative, coverage declines with increasing $N$, consistent with Theorem~\ref{thm:two_group}. 

\section{Discussion}\label{sec:discuss}

In this paper, we introduced MEAD, a method for estimating and inferring cell type proportions that, unlike many existing approaches, provides asymptotically valid confidence intervals for both individual proportions and between-group comparisons. We also revisited the common practice of marker gene selection. Both our theoretical analysis and simulations suggest that restricting to marker genes does not necessarily improve estimation, which aligns with findings in recent empirical studies \citep{wang2019bulk,tsoucas2019accurate,cobos2020benchmarking}.

Our framework assumes target and reference individuals are drawn from the same population, but this can be relaxed by conditioning on covariates (e.g., disease status). This allows consistent estimation of cell-type-specific gene expression within subpopulations for use in deconvolution. Such generalizations preserve the validity of our inference procedure.

MEAD accommodates gene-specific cross-platform scaling factors and establishes general identifiability conditions; however, our inference procedure relies on a simplifying assumption that the scaling ratios $\lambda_g$ are non-informative, i.e., independently distributed and uncorrelated with population true gene expression levels. This assumption enables tractable estimation and asymptotically valid inference, but may not hold when the scaling ratios are correlated with mean expressions. Addressing this limitation represents a challenging direction for future research.

While our approach is frequentist, a hierarchical Bayesian model could offer an alternative by jointly modeling proportions, gene-specific biases, and downstream regression analyses. However, fully specifying the high-dimensional, non-Gaussian, and correlated gene expression distribution is difficult in practice. Existing Bayesian methods for deconvolution, such as RCTD \citep{cable2022robust}, ST-assign \citep{geras2023joint}, and BayesTME \citep{zhang2023bayestme}, typically ignore gene–gene dependence, which is acceptable for point estimation but inadequate for uncertainty quantification. Moreover, deconvolution and downstream analysis are often conducted as separate steps in practice, with the estimated cell type proportions used for various exploratory and inferential purposes beyond regression \citep{gaspard2025cell}. A joint Bayesian model may not align with this flexible workflow. Nonetheless, future work could explore a well-designed Bayesian framework to relax key assumptions in our method, such as the non-informative scaling ratio assumption, while preserving valid inference.


 Finally, we discuss the effects of missing or over-partitioned cell types. If a cell type is missing in the reference data, only a projection of its contribution can be estimated, with bias depending on its abundance and similarity to other types. Decomposing a cell type into subtypes may reduce within-type gene–gene correlation, potentially improving estimation, but may also increase similarity between cell types, making them harder to distinguish. The trade-off between these effects is beyond the scope of this paper.

\newpage

\begin{table}[h]
\centering
\scalebox{0.9}{
\begin{tabular}{@{}lccccccc@{}}
\hline
     & \multirow{2}{*}{OLS}  & \multirow{2}{*}{\makecell{MEAD \\(equal weights)}} & \multirow{2}{*}{\makecell{MEAD \\(marker)}} & \multirow{2}{*}{MEAD} & \multirow{2}{*}{MEAD\textsuperscript{Softplus}} & \multirow{2}{*}{MuSiC} & \multirow{2}{*}{CIBERSORT} \\ 
 &&&&&&&\\
 \hline
RMSE              &   0.080   
         &   0.093    
   &   0.089   
  &   0.073    &   0.073 
   &   0.059    & 0.094 \\ \hline
\end{tabular}
}
\caption{RMSE comparisons on simulated data.}
\label{table:rmse_simu}
\end{table}

\begin{table}[h]
\centering
\scalebox{0.9}{
\begin{tabular}{@{}lclll@{}}
\hline
 \textbf{Method} & \textbf{Correlation Considered} & \multicolumn{2}{c}{\textbf{Coverage}}     \\ \hline
OLS             & No & \multicolumn{2}{c}{0.45(0.065)}  \\ 
MEAD (equal weights)         & No & \multicolumn{2}{c}{0.54(0.085)}  \\ 
MEAD   & No & \multicolumn{2}{c}{0.58(0.130)}  \\ 
       &            & \multicolumn{1}{l}{\textbf{Known Cor}} & \textbf{Estimated Cor} \\
MEAD& Yes& \multicolumn{1}{l}{0.90(0.032) } & 0.92(0.051)   \\ 
MEAD\textsuperscript{Softplus} & Yes&\multicolumn{1}{l}{0.88(0.032) } & 0.91(0.049)   \\ 
MEAD+cv & Yes & \multicolumn{1}{l}{0.95(0.022)} & 0.95(0.035) \\ 
MEAD\textsuperscript{Softplus}+cv & Yes & \multicolumn{1}{l}{0.93(0.026)} & 0.94(0.034)\\ \hline
\end{tabular}}
\caption{CI Coverage in simulation. The Coverage reported is averaged over all target individuals. We report both the mean coverage over repeated simulations and its standard deviation in parentheses.}
\label{table:coverage_equal}
\end{table}

\begin{table}[ht!]
    \centering
    \scalebox{0.9}{
    \begin{tabular}{@{}lcccc@{}}
        \hline
         & MEAD  & MuSiC & NNLS & CIBERSORT \\
        \hline
     Individual   RMSE   & 0.113 & 0.099 & 0.172 & 0.246 \\
       Group mean difference RMSE   & 0.0246 &  0.0370 & 0.0254 & 0.0299 \\
        \hline
    \end{tabular}
    }
    \caption{RMSE comparisons in the cross-platform deconvolution study. The top row shows the RMSE for estimating individual-level cell type proportions, while the bottom row shows the RMSE for estimating the mean difference in cell type proportions between the $12$ healthy and $6$ T2D individuals.}
    \label{tab:platform_metrics}
\end{table}

\begin{table}[ht!]
\centering
\scalebox{0.9}{
\begin{tabular}{@{}lcccccc@{}}
\hline
     & DA   & Epen1 & Sert  &  FPP &P FPP& U$\_$Neur\\ \hline
MEAD  &0.93 & 0.89 &0.88 &0.91 & 0.94 &  0.89 \\ 

RNA-Sieve   &0.11  &0.18 &0.12 &0.16  &0.23   &0.13 \\ \hline

\end{tabular}}
\caption{Mean coverage of $95\%$ CIs for each individual's proportions and each cell type under the global null with $a=5$.}
\label{table:coverage_real2}
\end{table}

\begin{table}[ht!]
\centering
\scalebox{0.9}{
\begin{tabular}{@{}lcccccc@{}}
\hline

\multicolumn{1}{l}{\multirow{2}{*}{}} & \multicolumn{3}{c}{Equal group means}                                                     & \multicolumn{3}{c}{Different group means}                            \\ 
\multicolumn{1}{l}{}                  & \multicolumn{1}{l}{$N = 86$} & \multicolumn{1}{l}{$N = 500$} & \multicolumn{1}{l}{$N = 1000$} & \multicolumn{1}{l}{$N = 86$} & \multicolumn{1}{l}{$N = 500$} & $N = 1000$ \\ \hline
\multicolumn{1}{@{}l}{True p}            & \multicolumn{1}{c}{0.940}   & \multicolumn{1}{c}{0.958}    & \multicolumn{1}{c}{0.943}  & \multicolumn{1}{c}{0.955}   & \multicolumn{1}{c}{0.972}    &  0.957 \\ 
\multicolumn{1}{@{}l}{MuSiC}             & \multicolumn{1}{c}{0.963}   & \multicolumn{1}{c}{0.961}    & \multicolumn{1}{c}{0.938}  & \multicolumn{1}{c}{0.877}   & \multicolumn{1}{c}{0.600}    &  0.468 \\ 
\multicolumn{1}{@{}l}{CIBERSORT}         & \multicolumn{1}{c}{0.955}   & \multicolumn{1}{c}{0.961}    & \multicolumn{1}{c}{0.945}  & \multicolumn{1}{c}{0.913}   & \multicolumn{1}{c}{0.711}    &   0.610\\ 
\multicolumn{1}{@{}l}{RNA-Sieve}          & \multicolumn{1}{c}{0.958}   & \multicolumn{1}{c}{--}    & \multicolumn{1}{c}{--}     & \multicolumn{1}{c}{0.853}   & \multicolumn{1}{c}{--}       & --     \\ 
\multicolumn{1}{@{}l}{MEAD}           & \multicolumn{1}{c}{0.968}   & \multicolumn{1}{c}{0.957}    & \multicolumn{1}{c}{0.937}  & \multicolumn{1}{c}{0.927}   & \multicolumn{1}{c}{0.861}    &  0.810 \\ \hline
\end{tabular}}
\caption{Mean coverage of the $95\%$ CIs of the group difference with growing $N$ when $a = 5$. RNA-Sieve is not performed for larger samples due to its high computational cost.}
\label{table:more_two_group}
\end{table}

\newpage

\section*{Data and Code Availability}
All data used are publicly available. The scRNA-seq pancreas dataset from \cite{xin2016rna} is available at \url{https://www.ncbi.nlm.nih.gov/geo/query/acc.cgi?acc=GSE81608}. The pancreas dataset from the \cite{segerstolpe2016single} is available at \url{https://www.ebi.ac.uk/biostudies/arrayexpress/studies/E-MTAB-5061}.  The scRNA-seq data from \cite{jerber2021population} is available at \url{https://zenodo.org/record/4333872}.

The code for reproducing results in this paper is accessible at \url{https://github.com/DongyueXie/MEAD-paper}, and the R package is available at \url{https://github.com/DongyueXie/MEAD}.



\section*{Acknowledgments and Funding}
This work was supported by the National Science Foundation under grants DMS-2113646 and DMS-2238656.

\vspace{0.2in}

\begin{small}
\begingroup
\setstretch{1}

\bibliography{manuscript}

\begin{thebibliography}{}

\bibitem[Agarwal et~al., 2020]{agarwal2020data}
Agarwal, D., Wang, J., and Zhang, N.~R. (2020).
\newblock Data denoising and post-denoising corrections in single cell {RNA}
  sequencing.
\newblock {\em Statistical Science}, 35(1):112--128.

\bibitem[Alexander et~al., 2009]{alexander2009fast}
Alexander, D.~H., Novembre, J., and Lange, K. (2009).
\newblock Fast model-based estimation of ancestry in unrelated individuals.
\newblock {\em Genome research}, 19(9):1655--1664.

\bibitem[Benjamini and Speed, 2012]{benjamini2012summarizing}
Benjamini, Y. and Speed, T.~P. (2012).
\newblock Summarizing and correcting the {GC} content bias in high-throughput
  sequencing.
\newblock {\em Nucleic acids research}, 40(10):e72--e72.

\bibitem[Cable et~al., 2022]{cable2022robust}
Cable, D.~M., Murray, E., Zou, L.~S., Goeva, A., Macosko, E.~Z., Chen, F., and
  Irizarry, R.~A. (2022).
\newblock Robust decomposition of cell type mixtures in spatial
  transcriptomics.
\newblock {\em Nature biotechnology}, 40(4):517--526.

\bibitem[Cai and Liu, 2011]{cai2011adaptive}
Cai, T. and Liu, W. (2011).
\newblock Adaptive thresholding for sparse covariance matrix estimation.
\newblock {\em Journal of the American Statistical Association},
  106(494):672--684.

\bibitem[Cai and Liu, 2016]{cai2016large}
Cai, T.~T. and Liu, W. (2016).
\newblock Large-scale multiple testing of correlations.
\newblock {\em Journal of the American Statistical Association},
  111(513):229--240.

\bibitem[Chen et~al., 2018]{chen2018profiling}
Chen, B., Khodadoust, M.~S., Liu, C.~L., Newman, A.~M., and Alizadeh, A.~A.
  (2018).
\newblock Profiling tumor infiltrating immune cells with {CIBERSORT}.
\newblock {\em Methods in molecular biology (Clifton, NJ)}, 1711:243.

\bibitem[Chen and Shao, 2004]{chen2004normal}
Chen, L.~H. and Shao, Q.-M. (2004).
\newblock Normal approximation under local dependence.
\newblock {\em The Annals of Probability}, 32(3):1985--2028.

\bibitem[Cobos et~al., 2020]{cobos2020benchmarking}
Cobos, F.~A., Alquicira-Hernandez, J., Powell, J.~E., Mestdagh, P., and
  De~Preter, K. (2020).
\newblock Benchmarking of cell type deconvolution pipelines for transcriptomics
  data.
\newblock {\em Nature communications}, 11(1):1--14.

\bibitem[Dong et~al., 2021]{dong2021scdc}
Dong, M., Thennavan, A., Urrutia, E., Li, Y., Perou, C.~M., Zou, F., and Jiang,
  Y. (2021).
\newblock {SCDC}: bulk gene expression deconvolution by multiple single-cell
  {RNA} sequencing references.
\newblock {\em Briefings in bioinformatics}, 22(1):416--427.

\bibitem[Erdmann-Pham et~al., 2021]{erdmann2021likelihood}
Erdmann-Pham, D.~D., Fischer, J., Hong, J., and Song, Y.~S. (2021).
\newblock Likelihood-based deconvolution of bulk gene expression data using
  single-cell references.
\newblock {\em Genome Research}, 31(10):1794--1806.

\bibitem[Fadista et~al., 2014]{fadista2014global}
Fadista, J., Vikman, P., Laakso, E.~O., Mollet, I.~G., Esguerra, J.~L.,
  Taneera, J., Storm, P., Osmark, P., Ladenvall, C., Prasad, R.~B., et~al.
  (2014).
\newblock Global genomic and transcriptomic analysis of human pancreatic islets
  reveals novel genes influencing glucose metabolism.
\newblock {\em Proceedings of the National Academy of Sciences},
  111(38):13924--13929.

\bibitem[Fridman et~al., 2012]{fridman2012immune}
Fridman, W.~H., Pages, F., Sautes-Fridman, C., and Galon, J. (2012).
\newblock The immune contexture in human tumours: impact on clinical outcome.
\newblock {\em Nature Reviews Cancer}, 12(4):298--306.

\bibitem[Fuller, 2009]{fuller2009measurement}
Fuller, W.~A. (2009).
\newblock {\em Measurement error models}, volume 305.
\newblock John Wiley \& Sons.

\bibitem[Gaspard-Boulinc et~al., 2025]{gaspard2025cell}
Gaspard-Boulinc, L.~C., Gortana, L., Walter, T., Barillot, E., and Cavalli,
  F.~M. (2025).
\newblock Cell-type deconvolution methods for spatial transcriptomics.
\newblock {\em Nature Reviews Genetics}, pages 1--19.

\bibitem[Geras et~al., 2023]{geras2023joint}
Geras, A., Dom{\.z}a{\l}, K., and Szczurek, E. (2023).
\newblock Joint cell type identification in spatial transcriptomics and
  single-cell {RNA} sequencing data.
\newblock {\em bioRxiv}, pages 2023--05.

\bibitem[Iacono et~al., 2019]{iacono2019single}
Iacono, G., Massoni-Badosa, R., and Heyn, H. (2019).
\newblock Single-cell transcriptomics unveils gene regulatory network
  plasticity.
\newblock {\em Genome biology}, 20(1):110.

\bibitem[Jerber et~al., 2021]{jerber2021population}
Jerber, J., Seaton, D.~D., Cuomo, A.~S., Kumasaka, N., Haldane, J., Steer, J.,
  Patel, M., Pearce, D., Andersson, M., Bonder, M.~J., et~al. (2021).
\newblock Population-scale single-cell {RNA}-seq profiling across dopaminergic
  neuron differentiation.
\newblock {\em Nature genetics}, 53(3):304--312.

\bibitem[Jew et~al., 2020]{jew2020accurate}
Jew, B., Alvarez, M., Rahmani, E., Miao, Z., Ko, A., Garske, K.~M., Sul, J.~H.,
  Pietil{\"a}inen, K.~H., Pajukanta, P., and Halperin, E. (2020).
\newblock Accurate estimation of cell composition in bulk expression through
  robust integration of single-cell information.
\newblock {\em Nature communications}, 11(1):1--11.

\bibitem[Langfelder and Horvath, 2008]{langfelder2008wgcna}
Langfelder, P. and Horvath, S. (2008).
\newblock {WGCNA}: an {R} package for weighted correlation network analysis.
\newblock {\em BMC bioinformatics}, 9(1):559.

\bibitem[Li et~al., 2016]{li2016comprehensive}
Li, B., Severson, E., Pignon, J.-C., Zhao, H., Li, T., Novak, J., Jiang, P.,
  Shen, H., Aster, J.~C., Rodig, S., et~al. (2016).
\newblock Comprehensive analyses of tumor immunity: implications for cancer
  immunotherapy.
\newblock {\em Genome biology}, 17(1):1--16.

\bibitem[Long and Ervin, 2000]{long2000using}
Long, J.~S. and Ervin, L.~H. (2000).
\newblock Using heteroscedasticity consistent standard errors in the linear
  regression model.
\newblock {\em The American Statistician}, 54(3):217--224.

\bibitem[Lonsdale et~al., 2013]{lonsdale2013genotype}
Lonsdale, J., Thomas, J., Salvatore, M., Phillips, R., Lo, E., Shad, S., Hasz,
  R., Walters, G., Garcia, F., Young, N., et~al. (2013).
\newblock The genotype-tissue expression ({GTEx}) project.
\newblock {\em Nature genetics}, 45(6):580--585.

\bibitem[Lu and Stephens, 2016]{lu2016variance}
Lu, M. and Stephens, M. (2016).
\newblock Variance adaptive shrinkage (vash): flexible empirical bayes
  estimation of variances.
\newblock {\em Bioinformatics}, 32(22):3428--3434.

\bibitem[MacKinnon and White, 1985]{mackinnon1985some}
MacKinnon, J.~G. and White, H. (1985).
\newblock Some heteroskedasticity-consistent covariance matrix estimators with
  improved finite sample properties.
\newblock {\em Journal of econometrics}, 29(3):305--325.

\bibitem[Menden et~al., 2020]{menden2020deep}
Menden, K., Marouf, M., Oller, S., Dalmia, A., Magruder, D.~S., Kloiber, K.,
  Heutink, P., and Bonn, S. (2020).
\newblock Deep learning--based cell composition analysis from tissue expression
  profiles.
\newblock {\em Science advances}, 6(30):eaba2619.

\bibitem[Mendizabal et~al., 2019]{mendizabal2019cell}
Mendizabal, I., Berto, S., Usui, N., Toriumi, K., Chatterjee, P., Douglas, C.,
  Huh, I., Jeong, H., Layman, T., Tamminga, C.~A., et~al. (2019).
\newblock Cell type-specific epigenetic links to schizophrenia risk in the
  brain.
\newblock {\em Genome biology}, 20(1):1--21.

\bibitem[Newman et~al., 2015]{newman2015robust}
Newman, A.~M., Liu, C.~L., Green, M.~R., Gentles, A.~J., Feng, W., Xu, Y.,
  Hoang, C.~D., Diehn, M., and Alizadeh, A.~A. (2015).
\newblock Robust enumeration of cell subsets from tissue expression profiles.
\newblock {\em Nature methods}, 12(5):453--457.

\bibitem[Newman et~al., 2019]{newman2019determining}
Newman, A.~M., Steen, C.~B., Liu, C.~L., Gentles, A.~J., Chaudhuri, A.~A.,
  Scherer, F., Khodadoust, M.~S., Esfahani, M.~S., Luca, B.~A., Steiner, D.,
  et~al. (2019).
\newblock Determining cell type abundance and expression from bulk tissues with
  digital cytometry.
\newblock {\em Nature biotechnology}, 37(7):773--782.

\bibitem[O’sullivan et~al., 2019]{o2019complementary}
O’sullivan, E.~D., Mylonas, K.~J., Hughes, J., and Ferenbach, D.~A. (2019).
\newblock Complementary roles for single-nucleus and single-cell {RNA}
  sequencing in kidney disease research.
\newblock {\em Journal of the American Society of Nephrology}, 30(4):712--713.

\bibitem[Rousseeuw and Kaufman, 1987]{rdusseeun1987clustering}
Rousseeuw, P. and Kaufman, L. (1987).
\newblock Clustering by means of medoids.
\newblock In {\em Proceedings of the statistical data analysis based on the L1
  norm conference, neuchatel, switzerland}, volume~31.

\bibitem[Russo et~al., 2018]{russo2018cemitool}
Russo, P.~S., Ferreira, G.~R., Cardozo, L.~E., B{\"u}rger, M.~C.,
  Arias-Carrasco, R., Maruyama, S.~R., Hirata, T.~D., Lima, D.~S., Passos,
  F.~M., Fukutani, K.~F., et~al. (2018).
\newblock {CEMiTool}: a bioconductor package for performing comprehensive
  modular co-expression analyses.
\newblock {\em BMC bioinformatics}, 19(1):56.

\bibitem[Segerstolpe et~al., 2016]{segerstolpe2016single}
Segerstolpe, {\AA}., Palasantza, A., Eliasson, P., Andersson, E.-M.,
  Andr{\'e}asson, A.-C., Sun, X., Picelli, S., Sabirsh, A., Clausen, M.,
  Bjursell, M.~K., et~al. (2016).
\newblock Single-cell transcriptome profiling of human pancreatic islets in
  health and type 2 diabetes.
\newblock {\em Cell metabolism}, 24(4):593--607.

\bibitem[Tsoucas et~al., 2019]{tsoucas2019accurate}
Tsoucas, D., Dong, R., Chen, H., Zhu, Q., Guo, G., and Yuan, G.-C. (2019).
\newblock Accurate estimation of cell-type composition from gene expression
  data.
\newblock {\em Nature communications}, 10(1):1--9.

\bibitem[Van~der Vaart, 2000]{van2000asymptotic}
Van~der Vaart, A.~W. (2000).
\newblock {\em Asymptotic statistics}, volume~3.
\newblock Cambridge university press.

\bibitem[Wang et~al., 2018]{wang2018gene}
Wang, J., Huang, M., Torre, E., Dueck, H., Shaffer, S., Murray, J., Raj, A.,
  Li, M., and Zhang, N.~R. (2018).
\newblock Gene expression distribution deconvolution in single-cell {RNA}
  sequencing.
\newblock {\em Proceedings of the National Academy of Sciences},
  115(28):E6437--E6446.

\bibitem[Wang et~al., 2019]{wang2019bulk}
Wang, X., Park, J., Susztak, K., Zhang, N.~R., and Li, M. (2019).
\newblock Bulk tissue cell type deconvolution with multi-subject single-cell
  expression reference.
\newblock {\em Nature communications}, 10(1):1--9.

\bibitem[Xin et~al., 2016]{xin2016rna}
Xin, Y., Kim, J., Okamoto, H., Ni, M., Wei, Y., Adler, C., Murphy, A.~J.,
  Yancopoulos, G.~D., Lin, C., and Gromada, J. (2016).
\newblock {RNA} sequencing of single human islet cells reveals type 2 diabetes
  genes.
\newblock {\em Cell metabolism}, 24(4):608--615.

\bibitem[Zhang et~al., 2023]{zhang2023bayestme}
Zhang, H., Hunter, M.~V., Chou, J., Quinn, J.~F., Zhou, M., White, R.~M., and
  Tansey, W. (2023).
\newblock {BayesTME}: an end-to-end method for multiscale spatial
  transcriptional profiling of the tissue microenvironment.
\newblock {\em Cell systems}, 14(7):605--619.

\bibitem[Zhang et~al., 2012]{zhang2012weighted}
Zhang, J., Lu, K., Xiang, Y., Islam, M., Kotian, S., Kais, Z., Lee, C., Arora,
  M., Liu, H.-w., Parvin, J.~D., et~al. (2012).
\newblock Weighted frequent gene co-expression network mining to identify genes
  involved in genome stability.
\newblock {\em PLoS computational biology}, 8(8):e1002656.

\end{thebibliography}
\endgroup
\end{small}

\newpage

\newpage

\renewcommand{\thesection}{S\arabic{section}}   
\renewcommand{\thetable}{S\arabic{table}}   
\renewcommand{\thefigure}{S\arabic{figure}}

\setcounter{section}{0}
\setcounter{figure}{0}
\setcounter{table}{0}

\renewcommand{\thetheorem}{S\arabic{theorem}}
\renewcommand{\theassumption}{S\arabic{assumption}}
\renewcommand{\thecorollary}{S\arabic{corollary}}
\renewcommand{\theremark}{S\arabic{remark}}
\renewcommand{\thelemma}{S\arabic{lemma}}
\renewcommand{\theequation}{S\arabic{equation}}

\setcounter{equation}{0}
\setcounter{theorem}{0}
\setcounter{assumption}{0}
\setcounter{corollary}{0}
\setcounter{remark}{0}
\setcounter{lemma}{0}

\renewcommand {\thepage} {S\arabic{page}}
\setcounter{page}{1}


\newpage

\begin{center}
    \large\textbf{\MakeUppercase{Supplementary Materials for ``Statistical Inference for Cell Type Deconvolution''}}
\end{center}

\section{Supplementary tables and figures}

\begin{table}[ht!]
\centering
\begin{tabular}{@{}lcccccr@{}}
\hline
     & DA   & Epen1 & Sert  &  FPP &P FPP& U$\_$Neur\\ \hline
MEAD  &0.92 & 0.88 &0.85 &0.89 & 0.92 &  0.86 \\ 

RNA-Sieve   &0.06  &0.14 &0.07 &0.11  &0.24   &0.09 \\ \hline

\end{tabular}
\caption{Mean coverage of $95\%$ CIs for each individual's proportions and each cell type under the global null with $a=20$.}
\label{table:coverage_real2_appendix}
\end{table}

\begin{table}[h!]
\centering
\begin{tabular}{@{}lccccc@{}}
\hline
 & True & MEAD & MuSiC & NNLS & CIBERSORT \\
\hline
Alpha   & -0.0550 & -0.0221 & -0.1084 & -0.0705 & -0.0581 \\
Beta    &  0.0565 &  0.0285 &  0.1075 &  0.0266 &  0.0103 \\
Delta   &  0.0189 &  0.0000 &  0.0170 &  0.0273 &  0.0542 \\
Gamma   & -0.0204 & -0.0064 & -0.0161 &  0.0167 & -0.0064 \\
\hline
\end{tabular}
\caption{Mean proportion difference between the healthy and T2D individuals for each cell type.}
\label{table:cross_platform_betweengroups}
\end{table}

\newpage 
\begin{figure}[ht!]
    \centering
    \includegraphics[scale=0.6]{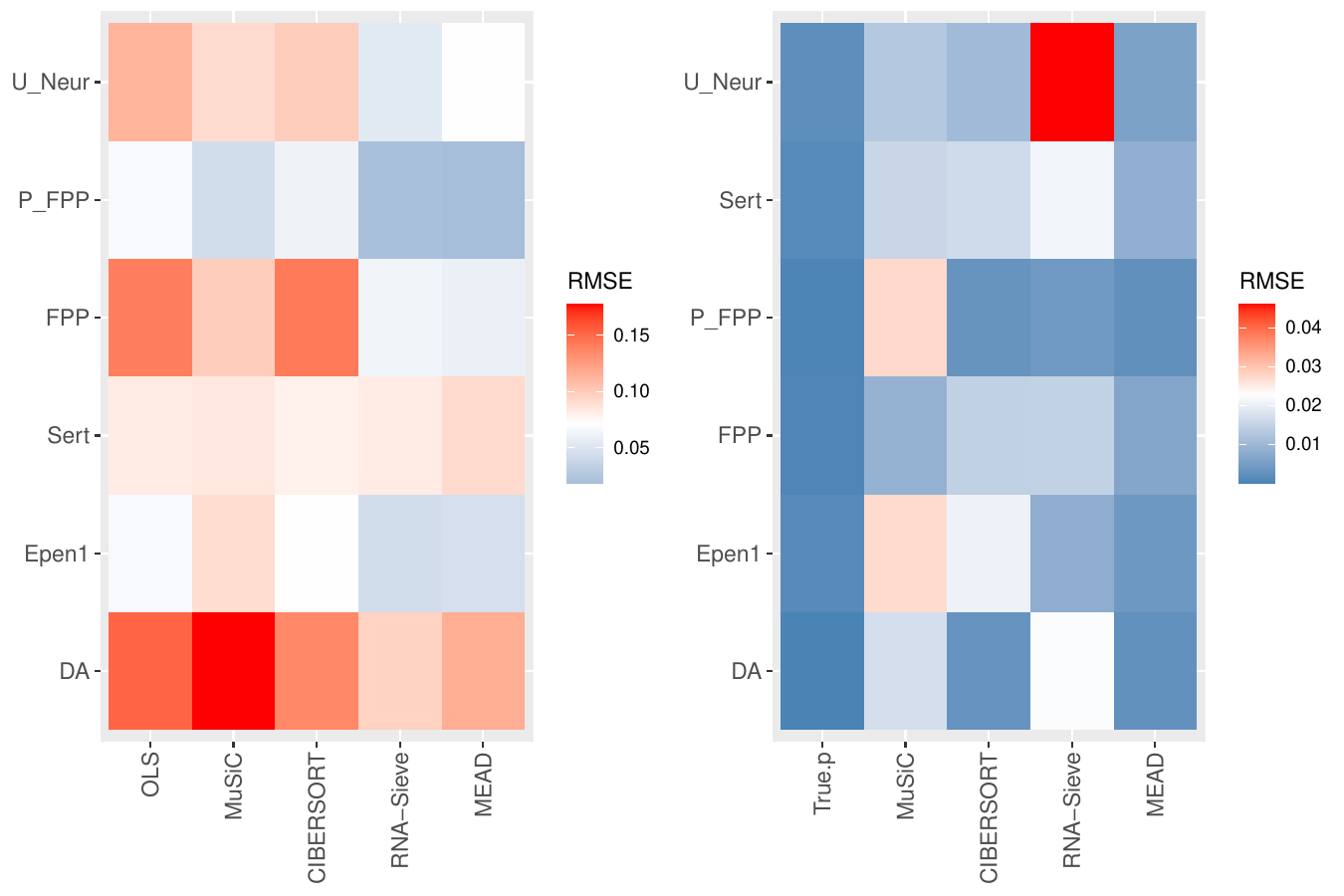}
    \caption{Comparison of RMSE across methods for the neuron development dataset with $a = 20$. Left: RMSE for estimating individual-level cell type proportions under the global null. Right: RMSE for estimating differences in average cell type proportions between two groups under the alternative.}
    \label{fig:rmse20}
\end{figure}

\begin{figure}[ht!]
    \centering
    \includegraphics[scale=0.8]{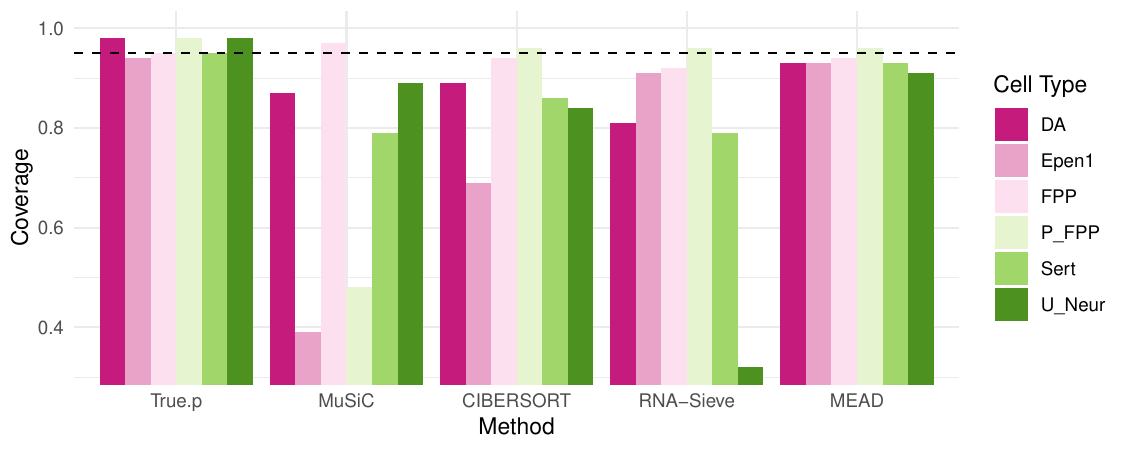}
    \caption{Comparison of the $95\%$ CIs for each individual's cell type proportions for the two group difference by cell types under the alternative for the neuron development dataset with $a = 20$. }
    \label{fig:twogroup_coverage_bycelltype20}
\end{figure}
\newpage

\begin{figure}[ht!]
    \centering
    \includegraphics[scale=0.6]{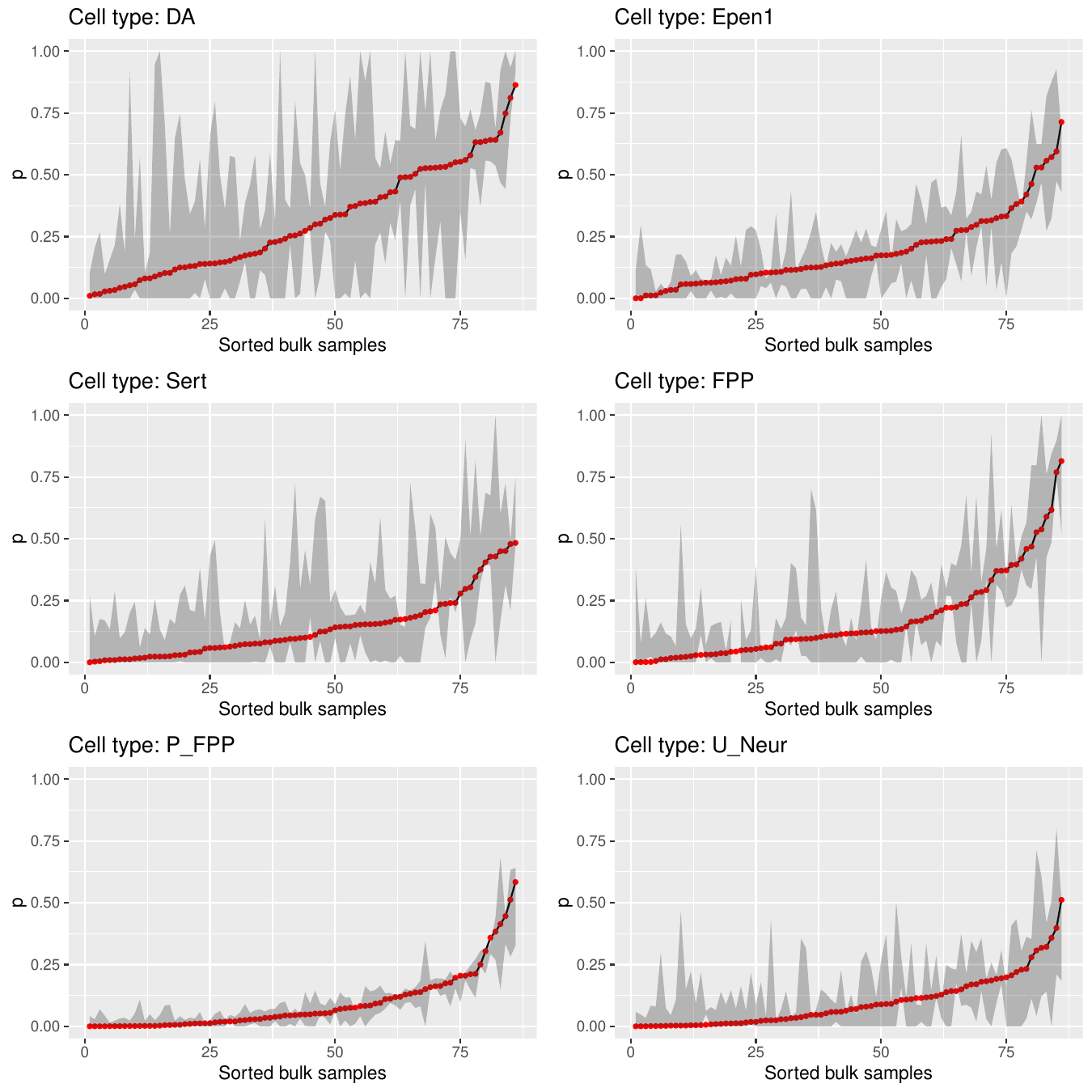}
    \caption{Confidence intervals of cell-type proportions in $86$ target individuals from one random split of the individuals. The target samples are sorted for each cell type in ascending order according to the true cell type proportions. The grey shaded areas represent the confidence intervals, and read dotted lines indicate the true proportions.}
    \label{fig:neuron_ci}
\end{figure}

\newpage

\section{Supplementary text}
\subsection{Cell level model for the reference data}\label{sec:cell_model}

In Section~\ref{sec:model} of the main text, for a reference individual $j$, we started with the cell-type level model:

\begin{equation*}
    \rr{\bm{Z}}_j = \rr{\gamma_j}\text{diag}(\rr{\bm \alpha})\rr{\bm X_j} + \rr{\bm E_j},\quad \E\left(\rr{\bm E_j}\mid \rr{\bm X_j}\right) = \bm 0.
\end{equation*}
As scRNA-seq measures gene expressions for individual cells, we now provide how the cell-type level model is derived from the raw scRNA-seq measurements. 

Specifically, denote $\rr{\bs y}_{jc}\in \mathbb{R}^{G}$ as the observed scRNA-seq counts for cell $c$ in individual $j$. Then we have
\begin{equation}\label{ref_sup}
\begin{split}
        \rr{\bs y}_{jc}  =  \rr{\gamma_{jc}}\text{diag}(\rr{\bm \alpha})\rr{\tilde{\bm x}_{jc}} + \rr{\bm e_{jc}},\quad \E\left(\rr{\bm e_{jc}}\mid \rr{\tilde{\bm x}_{jc}}, \rr{\bs X}_j\right) = \bm 0
\end{split}
\end{equation}
Here, ${\rr{\tilde{\bs x}}}_{jc}$ is the true gene expression level in individual $j$ and cell $c$ and the term $\rr{\gamma}_{jc}$ represents cell-specific measurement. The gene-specific scaling factors $\rr{\bm \alpha}$ are the same in model~\eqref{ref0} in the main text. The term $\rr{\bs e}_{jc}$ represents the centered measurement errors. 

Let $\rr{\bm{Z}}_j = (\rr{\bm z}_{j1}, \cdots, \rr{\bm z}_{jK})$ where each $\rr{\bm z}_{jk}$ represents the observed average gene expression within cell type $k$. Also, let $\rr{\bm{X}}_j = (\rr{\bm x}_{j1}, \cdots, \rr{\bm x}_{jK})$ where each $\rr{\bm x}_{jk}$ represents the true average gene expression within cell type $k$. Define the set of cells belonging to cell type $k$ in individual $j$ as $\mathcal{C}_{jk}$. Now we require the following assumptions to derive model~\eqref{ref0} in the main text from model~\eqref{ref_sup}.

\begin{assumption}
    [Unbiased sampling of the cells]\label{asp:unbiased_sampling}
scRNA-seq provides an unbiased sampling of cells within each cell type. Specifically, for a captured cell $c$ from reference individual $j$, it satisfies $\EE{\rr{\tilde{\bm x}_{jc}}\mid \rr{\bm X}_{j}} = \rr{\bm x}_{jk}$ if $ c\in \mathcal{C}_{jk}$.
\end{assumption}

\begin{assumption}[Random cell-specific efficiencies]\label{asp:scaling}
For any reference individual $j$ and any cell $c$, $$\rr{\gamma_{jc}} \indep \rr{\tilde{\bm x}_{jc}} \mid \rr{\bs X}_j.$$ Also, regardless of the cell type of cell $c$, $\EE{\rr{\gamma_{jc}}} = \rr\gamma_j$ always holds where $\rr\gamma_j$ is the subject-specific scaling factor for individual $j$, as defined in model~\eqref{ref0} of the main text.
\end{assumption}

\begin{remark}
In practice, the scaling factors may distribute differently across cell types. Then, one may relax the assumption to $\rr{\gamma_{jc}} = \rr\gamma_j/\delta_k$ for some $\delta_k$ to allow heterogeneity across cell types. However, we can then only identify a rescaled cell type proportions defined as $\tilde p_{ik} = \delta_kp_{ik}/\sum_k'\delta_{k'}p_{ik'}$. See also a similar discussion in \cite{wang2019bulk}.
\end{remark}

Given Assumptions~\ref{asp:unbiased_sampling}-\ref{asp:scaling}, if cell $, c\in \mathcal{C}_{jk}$, we can rewrite the model for $\rr{\bs y}_{jc}$ as
\begin{equation*}
\begin{split}
        \rr{\bs y}_{jc}  =  \rr{\gamma_{j}}\text{diag}(\rr{\bm \alpha})\rr{\bm x_{jk}} + \rr{\tilde{\bm e}_{jc}}
\end{split}
\end{equation*}
where $\EE{\rr{\tilde{\bm e}_{jc}}\mid \rr{\bm X_{j}}} = \EE{(\rr{\gamma_{jc}}-\rr{\gamma_{j}})\text{diag}(\rr{\bm \alpha})\rr{\tilde{\bm x}_{jc}} + \rr\gamma_j\text{diag}(\rr{\bm \alpha})(\rr{\tilde{\bm x}_{jc}}-\rr{\bm x_{jk}}) + \rr{\bm e}_{jc}\mid \rr{\bm X_{j}}}=\bm 0$.
Now we define the observed average gene expression within cell type $k$ as
$$\rr{\bm z}_{jk}=\frac{1}{|\mathcal{C}_{jk}|}\sum_{c\in \mathcal{C}_{jk}}\rr{\bs y}_{jc}= \rr{\gamma_{j}}\text{diag}(\rr{\bm \alpha})\rr{\bm x_{jk}} + \frac{1}{|\mathcal{C}_{jk}|}\sum_{c\in \mathcal{C}_{jk}}\rr{\tilde{\bm e}_{jc}}.$$

\begin{remark}
In the paper, we estimate an individual-level common scaling factor $\widehat \gamma_j$ for $\rr{\gamma_{j}}$ while a more common scaling approach for scRNA-seq data is to work with the normalized gene expressions  $\rr{\tilde{\bs y}}_{jc} = \rr{\bs y}_{jc}/\widehat{\gamma}_{jc}$ which applies different scaling
factors (library size) $\widehat{\gamma}_{jc} = \left(\rr{\bs y}_{jc}\right)^\top \bs 1$ to different cells. 
We avoid using cell-specific scaling factors for easier theoretical analysis and to account for differences in cell sizes across cell types \citep{wang2019bulk}.
\end{remark}

\newpage

\subsection{Proofs}

In model~\eqref{eq:final_bulk} of the main text, the noise term $\bs e_i$ contains the cross-platform scaling ratios $\bm \Lambda$ that are shared across all the target individuals, thus these noise terms are not independent across $i$. When we are comparing multiple target individuals as discussed in Section~\ref{sec:infer_multi}, we need to specifically account for such dependence. Thus in the proof, to simplify the description we decompose $\bs y_i$ following model~\eqref{bulk} in the main text where 
\begin{equation}\label{bulk1}
  \bm{y}_i = \bm \Lambda\bm U\bm \beta_i + \bm\epsilon_i'
\end{equation}

\begin{corollary}\label{cor:identification1}
Under Assumptions~\ref{asp:homo_popu}-\ref{asp:scaling_constraints}, the scaling factors $\rr{\tilde\gamma_j}$ for all reference individuals and $\bm U$ are both identifiable. Specifically, if there exists another set of parameters $\{\rr{\tilde\gamma_j}, j = 1, \cdots M\}$ and $\tilde{\bm U}$ that yield the same distribution of the observed reference data, then  
$$\tilde{\bm U} = \bm U, \quad \rr{\tilde\gamma_j} = \rr{\gamma_j}, \forall j = 1, \cdots, M.$$
\end{corollary}

\begin{proof}

Notice that model~\eqref{bulk} for the reference data
 is similar to a two-way ANOVA model. To show the identification of $\rr{\gamma_j}$, take an average across all entries of $\rr{\bm{Z}}_j$ in model~\eqref{bulk}:
$$\rr{\bar{z}}_{j\cdot\cdot} = \gamma_j^r\bar \mu_{\cdot\cdot} + \rr{\bar\epsilon}_{j\cdot\cdot}$$
where $\bar \mu_{\cdot\cdot} = \sum_{j,k}\mu_{jk}/KG =1$ following Assumption~\ref{asp:scaling_constraints}.
Since $\E(\rr{\bar\epsilon}_{j\cdot\cdot}) = 0$, if there exists another set of parameters $\{\rr{\tilde\gamma}_j, j =1, \cdots, M\}$ and $\tilde U$ that result in the same distribution of $\{\rr{\bm{Z}}_j, j =1, \cdots, M\}$, then we have
$\rr{\gamma}_j = \rr{\gamma}_j\bar \mu_{\cdot\cdot} = \rr{\tilde\gamma}_j\bar{\tilde \mu}_{\cdot\cdot}=\rr{\tilde\gamma}_j$.
As a consequence, we also have 
$\tilde {\bs U} =  \bs U$
which completes the proof.

\end{proof}

\subsubsection{Proof of Theorem~\ref{thm:identify}}

Similar to the proof of Corollary~\ref{cor:identification1}, in model~\eqref{bulk}, the matrix $\bs \Lambda\bs U \bs P$ is identifiable. 
If $\rank(\bs \Lambda\bs U) < K$, then $\bs P$ is not identifiable even when $\bs \Lambda\bs U$ is identifiable, so $\rank(\bs \Lambda\bs U) = K$ is a necessary condition for the identifiability of $\bs P$. As a result, we assume that $G \geq K$ and for any $g$, $\alpha_g \neq 0$ and $\mu_{gk} \neq 0$ for at least one $k$. Since under Assumptions~\ref{asp:homo_popu}-\ref{asp:scaling_constraints}, $\bs U$ is identifiable, to prove Theorem~\ref{thm:identify}, we only need to show that if $\bs\Lambda_1\bs U\bs P_1 = \bs\Lambda_2\bs U\bs P_2$ and $\text{rank}(\bs \Lambda_1\bs U\bs P_1) = K$ then there exists some constant $c$ that $\bs P_2=c\bs P_1$ if and only if for any disjoint partition $\{I_1, I_2, \cdots, I_t\}$ of $\{1, 2, \cdots, G\} = \bigcup_{s=1}^t I_s$ with $t\geq 2$, we have $\sum_{s=1}^t \rank(\bs U_{I_s})>K$. Matrix $\bs P_1$ and $\bs P_2$ do not need to satisfy the constraint that each column has sum $1$ as we can always rescale the unobserved $\gamma_i$ in each target individual so that $\bs p_i^\top \bs 1 = 1$ in model~\eqref{bulk}.

To show this, notice that since $\bs \Lambda_1\bs U,\bs \Lambda_2\bs U, \bs P_1, \bs P_2$ all have full rank, we have $\text{span}(\bs P_1) = \text{span}(\bs P_2)$. Hence there exists an invertiable matrix $\bs V\in \mathbb R^{K\times K}$ such that $\bs P_2^\top = \bs P_1^\top\bs V$. Now we only need to prove that there does not exist $\bs V \neq c_0\bs I$ for any constant $c_0$ if and only if the inequality $\sum_{s=1}^t \rank(\bs U_{I_s})>K$ holds for any disjoint partition $\{1, 2, \cdots, G\} = \bigcup_{s=1}^t I_s$ with $t\geq 2$.

Denote $\bs \mu_g$ as each row vector of the matrix $\bs U$, and let the $g$th diagonal element of $\bs \Lambda_1$ and $\bs \Lambda_2$ be $\lambda_{g1}$ and $\lambda_{g2}$. Then $\bs\Lambda_1\bs U\bs P_1 = \bs\Lambda_2\bs U\bs P_2$ is equivalent to $$\lambda_{g1}\bs P_1^\top\bs \mu_g = \lambda_{g2}\bs P_2^\top\bs \mu_g = \lambda_{g, 2}\bs P_1^\top\bs V\bs \mu_g,$$ which leads to 
$$\bs P_1^\top\left(\bs V\bs \mu_g - \frac{\lambda_{g1}}{\lambda_{g2}}\bs \mu_g\right) = 0.$$
Since $\bs P_1$ has full rank, we have $\bs V\bs \mu_g - \frac{\lambda_{g1}}{\lambda_{g2}}\bs \mu_g = \bs 0$, so $\{\lambda_{g1}/\lambda_{g2}\}$ and $\{\bs\mu_g\}$ are eigenvalues and eigenvectors of $\bs V$.

``if'': If the condition on partitions holds and there exists $\bs V \neq c_0\bs I$, then $\lambda_{g1}/\lambda_{g2}$ also takes different value $s_1,...,s_D$ where $D\geq 2$, Denote the submatrix $\bs U_d = \bs U_{\{g: \alpha_g=s_d\}}$, then we will have $\sum_{d = 1}^D \text{rank}(\bs U_d) \leq K$. To see this, notice that the matrix $\bs V$ is similar to its Jordan canonical form $\bs J$. Also, $\rank(\bs U_d)$ is at most the geometric dimension of eigenvalue $s_d$, which equals to the number of Jordan blocks corresponding to $s_d$. Let $\bs J_i$ be the $i$th Jordan block, then $\sum_{d = 1}^D \text{rank}(\bs U_d) \leq \sum_{i} \text{rank}(\bs J_i)= K$. This contradicts with the condition on the partitions.

``only if'': assume that there exists some partition with $\sum_{s = 1}^t \text{rank}(\bs U_{I_s})\leq K$. As $\sum_s \text{rank}(\bs U_{I_s}) \geq \rank(\bs U) = K$, we actually have $\sum_{s = 1}^t \text{rank}(\bs U_{I_s})= K$. Let $\tilde{\bs U}_s \in \mathbb R^{K \times n_s}$ be the matrix whose columns form the orthogonal basis of $\{\bs \mu_g: g\in I_s\}$. Then $\sum_s n_s = K$ and we can construct a rank $K$ matrix $\tilde{\bs U}_0 = (\tilde{\bs U}_1,...,\tilde{\bs U}_t) \in \mathbb R^{K \times K}$. 
Let $\bs D = \text{diag}(d_1, \cdots, d_1,  \cdots, d_t, \cdots, d_t)$ be a K-dimensional diagonal matrix where each $d_s$ replicates $n_s$ times and $d_1, \cdots, d_t$ are not all equal. Then we can construct $ \bs V= \tilde{\bs U}\bs D \tilde{\bs U}^{-1}$. As columns of $\tilde{\bs U}$ are eigenvectors of $\bs V$, and columns of each $\tilde{\bs U}_s$ share the same eigenvalue, each $\mu_g$ is also an eigenvector of $\bs V$. So if the condition on the partitions does not hold, for any values of $\{d_1, \cdots, d_t\}$ we can construct a matrix $\bs V$ satisfying $\bs V\bs \mu_g - d_s\bs \mu_g = \bs 0$ if $\bs \mu_g$ is a row of $\bs U_{I_s}$. We also have $\bs V \neq c_0 \bs I$ for any $c_0$ as long as $d_1, \cdots, d_t$ are not all equal.

\subsubsection{Proof of Theorem~\ref{thm:consistency}}

Notice that by definition in model~\eqref{bulk1},
\begin{align*}
    \bm \phi(\bm \beta_i) & = \widehat{\bm U}^\top \bm W \bm y_i - (\widehat{\bm U}^\top \bm W \widehat{\bm U} - \sum_{g=1}^G w_g \widehat {\bm V}_g) \bm{\beta}_i \\
   & = \widehat{\bm U}^\top \bm W \bs \epsilon_i' - \left(\widehat{\bm U}^\top \bm W (\widehat{\bm U} - \bs \Lambda\bm U) - \sum_{g=1}^G w_g \widehat {\bm V}_g\right) \bm{\beta}_i
\end{align*}
Define $\bs H = \widehat{\bm U}^\top \bm W (\widehat{\bm U} - \bs \Lambda\bm U) - \sum_{g=1}^G w_g \widehat {\bm V}_g$, then by definition
$$\bm \phi(\bm \beta_i)  = \widehat{\bm U}^\top \bm W \bm \epsilon_i' - \bs H \bm{\beta}_i.$$

At the same time, we define oracle ``estimators'' with know scaling factors:
$$\widehat {\bm U}^\star = (\widehat {\bm \mu}_1^\star, \cdots, \widehat {\bm \mu}_G^\star)^\top$$
where $\widehat{\bs \mu}_{g}^\star = \frac{1}{M}\sum_{j = 1}^M \rr{\bs z}_{jg}/\rr\gamma_j$.
Additionally, we denote
$$\widehat{\bm V}_g^\star = \frac{1}{M(M-1)}\sum_{j = 1}^M \left(\frac{\rr{\bs z}_{jg}}{\rr\gamma_j} - \widehat{\bm\mu}_g^\star\right)\left(\frac{\rr{\bs z}_{jg}}{\rr\gamma_j} - \widehat{\bm\mu}_g^\star\right)^\top$$
$$\widehat{\bm \Omega}^\star = \frac{1}{G}\left(\sum_g w_g \widehat{\bm\mu}_g^\star\widehat{\bm\mu}_g^{\star \top} -\sum_g w_g \widehat{\bm V}_g^\star\right)$$
$$\bs H^\star = \widehat{\bm U}^{\star \top} \bm W (\widehat{\bm U}^\star - \bs \Lambda\bm U) - \sum_{g=1}^G w_g \widehat {\bm V}_g^\star.$$

\begin{lemma}\label{lem:var_est}
Under Assumptions~\ref{asp:homo_popu}-\ref{asp:bias}, we have $\EE{\widehat{\bs V}_g^\star} = \mathrm{Cov}(\widehat{\bm\mu}_g^\star)$ for each $g = 1, 2, \cdots, G$.
\end{lemma}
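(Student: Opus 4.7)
The plan is to recognize $\widehat{\bs V}_g^\star$ as the usual sample covariance formed from the \emph{oracle} quantities $Z_j := \bar{\bs y}_{jg}/\gamma_j^r$, $j=1,\ldots,m$, and to establish unbiasedness by a direct moment computation for $\Cov{\widehat{\bs\mu}_g^\star}$, which under the oracle interpretation of $\bs V_g$ is exactly the target. The key observation that makes the argument clean is that the $Z_j$'s need only be independent with a common mean --- not identically distributed --- for the standard $1/(m-1)$ sample-covariance correction to give an unbiased estimate of $\sum_j \Cov{Z_j}/m$. This matters because Assumption~\ref{asp:scaling} allows the full distribution of $\gamma_{jc}^r$ to depend on $j$.

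First I would verify the two structural properties needed. Independence of $Z_1,\ldots,Z_m$ follows from the independence of different reference individuals: the true expressions $\{x_{jgc}\}$ are independent across $j$ by Assumption~\ref{asp:homo_popu}, the scaling factors $\{\gamma_{jc}^r\}$ involve only individual-$j$ randomness and are jointly independent of the expressions by Assumption~\ref{asp:scaling}, and the technical noise $e_{jgc}^r$ in model~\eqref{ref0} has the same across-individual independence. For the common mean, using the normalization $\alpha_g^r\equiv 1$, for any cell $c$ of type $k$ in individual $j$,
\begin{equation*}
\EE{y_{jgc}^r} \;=\; \EE{\gamma_{jc}^r}\,\EE{x_{jgc}} \;=\; \gamma_j^r\,\mu_{gk},
\end{equation*}
where the first equality uses independence of $\gamma_{jc}^r$ and $x_{jgc}$ (Assumption~\ref{asp:scaling}), and $\EE{x_{jgc}}=\EE{\EE{x_{jgc}\mid x_{jgk}}}=\EE{x_{jgk}}=\mu_{gk}$ by Assumption~\ref{asp:homo_popu}. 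Averaging over the captured cells of type $k$ (conditioning on the cell count if it is random) preserves the identity, so $\EE{Z_j}=\bs\mu_g$ for every $j$.

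Then I would carry out the classical sample-covariance expansion. Set $\Sigma_j := \Cov{Z_j}$ (possibly $j$-dependent), $W_j := Z_j - \bs\mu_g$, and $\bar W := m^{-1}\sum_l W_l$, so $Z_j - \widehat{\bs\mu}_g^\star = W_j - \bar W$ and $\EE{W_j W_l^T} = \delta_{jl}\Sigma_j$ by independence. Expanding the quadratic form gives
\begin{equation*}
\EE{(W_j-\bar W)(W_j-\bar W)^T} \;=\; \Sigma_j - \frac{2}{m}\Sigma_j + \frac{1}{m^2}\sum_{l=1}^m \Sigma_l,
\end{equation*}
and summing over $j$ collapses the cross terms,
\begin{equation*}
\EE{\sum_{j=1}^m (Z_j-\widehat{\bs\mu}_g^\star)(Z_j-\widehat{\bs\mu}_g^\star)^T} \;=\; \frac{m-1}{m}\sum_{j=1}^m \Sigma_j.
\end{equation*}

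Dividing by $m(m-1)$ yields $\EE{\widehat{\bs V}_g^\star} = m^{-2}\sum_{j=1}^m \Sigma_j$, which by independence of the $Z_j$'s equals $\Cov{\widehat{\bs\mu}_g^\star}$, i.e.\ $\bs V_g$. There is no real obstacle here; the only point that requires care is to avoid tacitly treating the $Z_j$'s as i.i.d., since Assumption~\ref{asp:scaling} only pins down the mean $\EE{\gamma_{jc}^r}=\gamma_j^r$ and leaves higher moments free to vary with $j$. The bounded-moment conditions elsewhere in the paper ensure all $\Sigma_j$ are well defined, so every expectation above is finite.
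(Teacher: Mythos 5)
Your proof is correct. The paper actually states Lemma~\ref{lem:var_est} without giving a proof, and your argument is precisely the standard one it implicitly relies on: the $1/(m(m-1))$ sample-covariance expansion for independent (not necessarily identically distributed) $Z_j = \bar{\bs y}_{jg}/\gamma_j^r$ with common mean $\bs\mu_g$, yielding $\EE{\widehat{\bs V}_g^\star} = m^{-2}\sum_j \Cov{Z_j} = \Cov{\widehat{\bs\mu}_g^\star}$. Your two points of care --- reading $\bs V_g$ as the oracle covariance $\Cov{\widehat{\bs\mu}_g^\star}$ (the exact identity fails for the version with estimated $\widehat\gamma_j$, consistent with the paper's own remark that $\widehat{\bs V}_g$ is only asymptotically unbiased) and not assuming the $Z_j$ are i.i.d. (the paper itself notes the $\bs\epsilon^r_{jk}$ need not be identically distributed) --- are both exactly right.
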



 
\begin{lemma}\label{lem:con1}
Under the assumptions of Theorem~\ref{thm:consistency}, we have
$$\widehat{\bm \Omega}^\star - \bm \Omega = O_p\left(\frac{1}{\sqrt G}\right), \quad \widehat{\bm U}^{\star \top} \bm W \bm \epsilon_i' = O_p(\sqrt G), \quad \bm H^\star = O_p(\sqrt G)$$

\end{lemma}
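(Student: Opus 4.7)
The plan is to exploit a common template: each of the three quantities is a sum $\sum_{g=1}^G T_g$ where $T_g$ depends on the reference data only through $\widehat{\bs \mu}_g^\star$ and $\widehat{\bs V}_g^\star$, on the bulk noise through $\epsilon_{ig}$, and on the cross-platform bias through $\lambda_g$, all of which are gene-indexed quantities. I would therefore prove each bound by first computing $\EE{T_g}$ in closed form and then controlling the second moment via the local dependence structure of Assumption~\ref{assmp:dependence}.

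For the mean computations, the key facts are $\EE{\widehat{\bs \mu}_g^\star} = \bs \mu_g$ and $\EE{\widehat{\bs \mu}_g^\star \widehat{\bs \mu}_g^{\star T}} = \bs \mu_g \bs \mu_g^T + \bs V_g$ from model~\eqref{eq:final_ref}, Lemma~\ref{lem:var_est} giving $\EE{\widehat{\bs V}_g^\star} = \bs V_g$, and Assumption~\ref{asp:bias} giving $\EE{\lambda_g} = 1$ with $\bs \Lambda$ independent of the reference data. These combine to give $\EE{\widehat{\bs \Omega}^\star} = G^{-1} \bs U^T \bs W \bs U$, which tends to $\bs \Omega$ by Assumption~\ref{assumption_consistency}a; $\EE{\widehat{\bs U}^{\star T}\bs W \bs \epsilon_i} = \bs 0$ by independence of the reference and bulk noise together with $\EE{\bs \epsilon_i} = \bs 0$; and expanding $\bs H^\star = \sum_g w_g(\widehat{\bs \mu}_g^\star \widehat{\bs \mu}_g^{\star T} - \lambda_g \widehat{\bs \mu}_g^\star \bs \mu_g^T - \widehat{\bs V}_g^\star)$ gives $\EE{\bs H^\star} = \bs 0$ after the $\bs \mu_g \bs \mu_g^T$ and $\bs V_g$ contributions cancel.

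For the second-moment bounds, I would decompose $\sum_g T_g = \sum_{g \in \mathcal V} T_g + \sum_{g \in \mathcal V^c} T_g$. On the sparse component $\mathcal V$, the maximum degree bound $D \leq s$ ensures that only $O(Gs)$ covariance terms $\Cov{T_{g_1}, T_{g_2}}$ are nonzero, each uniformly bounded by constants depending on the $(4+\delta)$-moments of $\bar y_{jgk}$, $y_{ig}$ and $\lambda_g$ controlled by Assumption~\ref{assumption_consistency}b--c. This gives a contribution of $O(G)$ to the second moment. The hard part, and what I expect to be the main obstacle, is the $\mathcal V^c$ contribution whose genes may be arbitrarily dependent: a crude Cauchy--Schwarz bound yields $\EE{\|\sum_{g \in \mathcal V^c} T_g\|^2} \leq |\mathcal V^c| \sum_{g \in \mathcal V^c} \EE{\|T_g\|^2} = O(|\mathcal V^c|^2)$, which is $o(G)$ precisely because Assumption~\ref{assmp:dependence} forces $|\mathcal V^c|/\sqrt G \to 0$; the $\mathcal V$-vs-$\mathcal V^c$ cross term is $o(G)$ by another Cauchy--Schwarz application between the two contributions.

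Finally, Markov's inequality applied to each statistic yields $\sum_g T_g = O_p(\sqrt G)$ directly for the bounds on $\widehat{\bs U}^{\star T}\bs W \bs \epsilon_i$ and $\bs H^\star$. For $\widehat{\bs \Omega}^\star$, dividing by $G$ gives $\widehat{\bs \Omega}^\star - \EE{\widehat{\bs \Omega}^\star} = O_p(1/\sqrt G)$, which combined with the deterministic bias $\EE{\widehat{\bs \Omega}^\star} - \bs \Omega$ (of order at most $O(1/\sqrt G)$ under the convergence in Assumption~\ref{assumption_consistency}a) yields the stated rate.
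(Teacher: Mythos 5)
Your proposal is correct and matches the paper's proof essentially step for step: exact mean computations for each gene-indexed summand (giving $\EE{\widehat{\bm \Omega}^\star} = G^{-1}\bs U^T\bs W\bs U$, $\EE{\widehat{\bm U}^{\star T}\bm W\bm\epsilon_i} = \bs 0$, $\EE{\bm H^\star} = \bs 0$ via Lemma~\ref{lem:var_est} and the independence of $\bs\Lambda$), an $O(G)$ variance bound on the dependency-graph component $\mathcal V$ from the bounded degree and uniformly bounded moments, an $o(G)$ Cauchy--Schwarz bound on $\mathcal V^c$ from $|\mathcal V^c| = o(\sqrt G)$, and Chebyshev's inequality. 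The only loose end, which is shared with (and indeed glossed over by) the paper, is that Assumption~\ref{assumption_consistency}a provides no rate for the deterministic bias $G^{-1}\bs U^T\bs W\bs U - \bs\Omega$, so your claim that it is $O(1/\sqrt G)$ is an extra (mild) assumption rather than a consequence; the paper avoids the issue by writing $\EE{\widehat{\bm\Omega}^\star} = \bm\Omega$ outright.
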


\begin{proof}[Proof of Lemma~\ref{lem:con1}]

By definition and using Lemma~\ref{lem:var_est}, it is straightforward that 
$$\EE{\widehat{\bm \Omega}^\star} = \bm \Omega, \quad \EE{\widehat{\bm U}^{\star \top} \bm W \bm \epsilon_i'} =\bm 0, \quad\EE{\bm H^\star} = \bm 0.$$
Denote $\bs \epsilon_i'=(\epsilon_{i1}',\cdots, \epsilon_{iG}')$. Then we can rewrite as 
$$ \widehat{\bm U}^{\star \top} \bm W \bm \epsilon_i'= \sum_{g = 1}^G w_g\widehat{\bs \mu}_g^\star\epsilon_{ig}'$$
$$\bm H^\star = \sum_gw_g\left(\widehat{\bs \mu}_g^\star(\widehat{\bs \mu}_g^\star - \lambda_g\bs \mu_g)^\top - \widehat {\bm V}_g^\star\right).$$

Under Assumption~\ref{assumption_consistency}b of bounded moments, for any $k_1\leq K$ and $k_2\leq K$,
$\Var{\widehat{\mu}_{gk_1}^\star\widehat{\mu}_{gk_2}^\star}$, $\Var{\widehat V_{g, k_1k_2}^\star}$ and $\Var{\lambda_g\widehat\mu_{gk}^\star}$ are all uniformly bounded across $g$ (note: $V_{g, k_1k_2}$ denotes the $(k_1k_2)$th element of $\bs V_g$). Thus 
$$\max_g\Cov{\vect\left(\widehat{\bm\mu}_g^\star\widehat{\bm\mu}_g^{\star\top}\right)} = O(1), \quad \max_g\Cov{\vect\left(\widehat {\bs V}_g^\star\right)} = O(1)$$
This indicates that
$$\max_g\Cov{\vect\left(\widehat{\bm\mu}_g^\star\widehat{\bm\mu}_g^{\star\top} -\widehat {\bs V}_g^\star \right)} = O(1)$$
and 
$$ \max_g\Cov{\vect\left(\widehat{\bs \mu}_g^\star(\widehat{\bs \mu}_g^\star - \lambda_g\bs \mu_g)^\top - \widehat {\bm V}_g^\star \right)} = O(1).$$
In addition, as $\bs \epsilon_i' \indep \widehat {\bs \mu}_g^\star$, under Assumption~\ref{assumption_consistency}b
$$\max_g\Cov{\widehat{\bs \mu}_g^\star\epsilon_{ig}'} = \max_g\EE{\epsilon_{ig}'^2}\EE{\widehat{\bm\mu}_g^\star\widehat{\bm\mu}_g^{\star\top}} = O(1).$$
Since by Assumption~\ref{assmp:dependence}, the maximal degree of $\mathcal V$ is a constant, the number of dependent gene pairs in $\mathcal V$ is $O(G)$. Accordingly, combining with Assumption~\ref{assumption_consistency}c,
$$\Cov{\sum_{g\in\mathcal{V}} w_g\widehat{\bs \mu}_g^\star\epsilon_{ig}'} = O(G), \quad\Cov{\vect\left(\sum_{g \in \mathcal{V}}w_g\left(\widehat{\bs \mu}_g^\star\widehat{\bs \mu}_g^{\star\top} - \widehat {\bm V}_g^\star\right)\right)} = O(G)$$
$$\Cov{\vect\left(\sum_{g \in \mathcal{V}}w_g\left(\widehat{\bs \mu}_g^\star(\widehat{\bs \mu}_g^\star - \lambda_g\bs \mu_g)^\top - \widehat {\bm V}_g^\star\right)\right)} = O(G).$$
On the other hand, as $|\mathcal{V}^c| = o(\sqrt G)$, thus
$$\Cov{\sum_{g\in\mathcal{V}^c} w_g\widehat{\bs \mu}_g^\star\epsilon_{ig}'} = o(G), \quad\Cov{\vect\left(\sum_{g \in \mathcal{V}^c}w_g\left(\widehat{\bs \mu}_g^\star \widehat{\bs \mu}_g^{\star\top} - \widehat {\bm V}_g^\star\right)\right)} = o(G)$$
$$\Cov{\vect\left(\sum_{g \in \mathcal{V}^c}w_g\left(\widehat{\bs \mu}_g^\star(\widehat{\bs \mu}_g^\star - \lambda_g\bs \mu_g)^\top - \widehat {\bm V}_g^\star\right)\right)} = o(G).$$
Thus,
$$\Cov{\widehat{\bm \Omega}^\star} = O\left(\frac{1}{G}\right), \quad \Cov{\vect\left(\widehat{\bm U}^{\star \top} \bm W \bm \epsilon_i'\right)} = O(G), \quad \Cov{\vect\left(\bm H^\star\right)} = O(G)$$
So using Chebyshev's inequality,
$$\widehat{\bm \Omega}^\star - \bm \Omega = O_p\left(\frac{1}{\sqrt G}\right), \quad \widehat{\bm U}^{\star \top} \bm W \bm \epsilon_i' = O_p(\sqrt G), \quad \bm H^\star = O_p(\sqrt G).$$

\end{proof}

\begin{lemma}\label{lem:gap}
Under the assumptions of Theorem~\ref{thm:consistency}, we have 
$$\widehat{\bm \Omega} - \widehat{\bm \Omega}^\star = O_p\left(\frac{1}{\sqrt G}\right), \quad \widehat{\bm U}^{\top} \bm W \bm \epsilon_i' - \widehat{\bm U}^{\star \top} \bm W \bm \epsilon_i' = O_p(1)$$
$$\bs H - \bs H^\star = -\frac{1}{M}\sum_{j=1}^M\frac{1}{\rr\gamma_j}(\widehat \gamma_j - \rr\gamma_j)\bs U^\top\bs W\bs U + O_p(1)$$
with 
$$\widehat \gamma_j = \rr\gamma_j + O_p\left(\frac{1}{\sqrt G}\right)$$

\end{lemma}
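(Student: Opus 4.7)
The plan is to first establish the scalar bound $\widehat\gamma_j=\gamma_j+O_p(1/\sqrt G)$, then Taylor-expand $1/\widehat\gamma_j$ around $1/\gamma_j$ to write $\widehat{\bs\mu}_g-\widehat{\bs\mu}_g^\star$ and $\widehat{\bs V}_g-\widehat{\bs V}_g^\star$ as linear perturbations in $\widehat\gamma_j-\gamma_j$ plus higher-order remainders, and finally substitute these expansions into the definitions of $\widehat{\bs U}^{T}\bs W\bs\epsilon_i$, $\widehat{\bs\Omega}$, and $\bs H$ to read off the claimed rates. For the first step, I would use that $\widehat\gamma_j=(KG)^{-1}\sum_{k,g}\bar y_{jgk}$ combined with $\EE{\bar y_{jgk}}=\gamma_j\mu_{gk}$ and $\sum_{k,g}\mu_{gk}=KG$ gives $\EE{\widehat\gamma_j}=\gamma_j$ exactly. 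Under Assumption~\ref{assumption_consistency}b each $\Var{\bar y_{jgk}}$ is uniformly $O(1)$, and under Assumption~\ref{assmp:dependence} the variance of $\sum_{k,g\in\mathcal V}\bar y_{jgk}$ is $O(G)$ since each gene in $\mathcal V$ has at most $s+1$ dependent partners, while Cauchy--Schwarz bounds the contribution from $\mathcal V^c$ by $O(|\mathcal V^c|^2)=o(G)$. Thus $\Var{\widehat\gamma_j}=O(1/G)$ and Chebyshev's inequality yields the stated rate.

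Next, writing $\alpha_j=(\widehat\gamma_j-\gamma_j)/\gamma_j=O_p(1/\sqrt G)$, the scalar expansion $1/\widehat\gamma_j-1/\gamma_j=-\alpha_j/\gamma_j+O_p(1/G)$ gives
$$\widehat{\bs\mu}_g-\widehat{\bs\mu}_g^\star=-\frac{1}{M}\sum_j\frac{\alpha_j}{\gamma_j}\bar{\bs y}_{jg}+O_p(\|\bar{\bs y}_{jg}\|/G).$$
Substituting $\bar{\bs y}_{jg}=\gamma_j\bs\mu_g+\gamma_j\bs\eta_{jg}$ with $\bs\eta_{jg}$ a mean-zero noise of bounded fourth moment splits this into a deterministic-in-$g$ rank-one piece $-\bs\mu_g\bar\alpha$ (with $\bar\alpha=M^{-1}\sum_j\alpha_j$) plus a noise piece of the same order. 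An analogous expansion of $\bar{\bs y}_{jg}/\widehat\gamma_j-\widehat{\bs\mu}_g$ controls $\widehat{\bs V}_g-\widehat{\bs V}_g^\star$ in exactly the same way.

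With these expansions in hand, $\widehat{\bs U}^{T}\bs W\bs\epsilon_i-\widehat{\bs U}^{\star T}\bs W\bs\epsilon_i=\sum_g w_g(\widehat{\bs\mu}_g-\widehat{\bs\mu}_g^\star)\epsilon_{ig}$ reduces at leading order to $-\sum_j\alpha_j(M\gamma_j)^{-1}\sum_g w_g\bar{\bs y}_{jg}\epsilon_{ig}$; independence of $\bs\epsilon_i$ from the reference data combined with Assumption~\ref{assmp:dependence} gives $\Var{\sum_g w_g\bar{\bs y}_{jg}\epsilon_{ig}}=O(G)$, so the total is $O_p(1/\sqrt G)\cdot O_p(\sqrt G)=O_p(1)$. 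The bound $\widehat{\bs\Omega}-\widehat{\bs\Omega}^\star=O_p(1/\sqrt G)$ then follows by dividing the analogous expansion of $\widehat{\bs U}^{T}\bs W\widehat{\bs U}-\widehat{\bs V}$ minus its starred counterpart by $G$: the leading cross-terms are $O_p(\sqrt G)$. For $\bs H-\bs H^\star$, combining the expansions of $\widehat{\bs U}^{T}\bs W\widehat{\bs U}$, $\widehat{\bs U}^{T}\bs W\bs\Lambda\bs U$ and $\sum_g w_g\widehat{\bs V}_g$, replacing $\lambda_g$ by its mean $1$ via Assumption~\ref{asp:bias}, and concentrating $\sum_g w_g\bs\mu_g\bs\mu_g^T$ around $\bs U^T\bs W\bs U$ isolates the stated leading term $-\sum_j\gamma_j^{-1}(\widehat\gamma_j-\gamma_j)\bs U^T\bs W\bs U$; every remaining piece is a product of an $O_p(1/\sqrt G)$ factor from $\widehat\gamma-\gamma$ with an $O_p(\sqrt G)$ linear statistic in $\bs\eta_{jg}$ or $\lambda_g-1$, hence $O_p(1)$.

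The main obstacle is the precise accounting in this last step. Identifying the exact coefficient of $\bs U^T\bs W\bs U$ in $\bs H-\bs H^\star$ requires tracking how the $1/M$ averaging in $\widehat{\bs\mu}_g$ interacts with the $1/[M(M-1)]$ averaging in $\widehat{\bs V}_g$ across the three pieces of $\bs H$, so that the individual $O_p(\sqrt G)$ contributions collapse to the claimed form rather than to some other multiple of $\bs U^T\bs W\bs U$. In addition, verifying that all quadratic-in-$(\widehat\gamma_j-\gamma_j)$ remainders, when multiplied by the growing $\bs U^T\bs W\bs U=O(G)$, yield only $O_p(1)$ contributions relies on the sharpness of the Step 1 rate together with the bounded-degree dependence in Assumption~\ref{assmp:dependence} to propagate moment bounds through all the cross-terms.
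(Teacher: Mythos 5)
Your overall strategy coincides with the paper's: establish $\widehat\gamma_j-\gamma_j=O_p(1/\sqrt G)$ via the exact unbiasedness $\EE{\widehat\gamma_j}=\gamma_j$, the uniform moment bounds of Assumption~\ref{assumption_consistency}b and the dependency-graph structure of Assumption~\ref{assmp:dependence}, and then exploit the product structure ``$O_p(1/\sqrt G)$ scalar times a centered linear statistic of size $O_p(\sqrt G)$'' to obtain the $O_p(1)$ remainders. The arguments you give for $\widehat\gamma_j$, for $\widehat{\bm U}^{T}\bm W\bm\epsilon_i-\widehat{\bm U}^{\star T}\bm W\bm\epsilon_i$ (independence of $\bm\epsilon_i$ from the reference data makes the inner sum $O_p(\sqrt G)$), and for $\widehat{\bm\Omega}-\widehat{\bm\Omega}^\star$ are correct and match the paper essentially line by line; the only cosmetic difference is that you expand $1/\widehat\gamma_j$ around $1/\gamma_j$ term by term, whereas the paper factors out quantities such as $\gamma_{j_1}\gamma_{j_2}/(\widehat\gamma_{j_1}\widehat\gamma_{j_2})-1$ directly.

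The one place where you stop short is exactly the substantive content of the third display: identifying the leading coefficient of $\bs U^T\bs W\bs U$ in $\bs H-\bs H^\star$ as $-\sum_j\gamma_j^{-1}(\widehat\gamma_j-\gamma_j)$, which you label ``the main obstacle'' without resolving it. Without that exact coefficient the display degrades to $\bs H-\bs H^\star=O_p(\sqrt G)$, which is not enough for the later use in Lemma~\ref{lem:score}, where this linear term reappears as the $\sum_j\gamma_j^{-1}(\widehat\gamma_j-\gamma_j)\bs\Omega\bs\beta_i$ correction inside the CLT. The paper's device for taming the bookkeeping you worry about is to observe that, after the three expansions, every $\widehat\gamma$-dependent coefficient multiplies the \emph{same} matrix $\bs U^T\bs W\bs U$, so that $\bs H-\bs H^\star=h(\widehat\gamma_1,\dots,\widehat\gamma_m)\,\bs U^T\bs W\bs U+O_p(1)$ for an explicit scalar function $h$; after cancellation, $h(x_1,\dots,x_m)=\frac{1}{m(m-1)}\sum_{j_1\neq j_2}\gamma_{j_1}\gamma_{j_2}/(x_{j_1}x_{j_2})-\frac{1}{m}\sum_j\gamma_j/x_j$. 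Since $h(\gamma_1,\dots,\gamma_m)=0$, a first-order Taylor expansion isolates the linear term and pushes the quadratic remainder into $O_p(1/G)\cdot O(G)=O_p(1)$, which is exactly the quadratic-remainder concern you raise at the end. If you carry out your plan, you should reduce to this scalar function (or an equivalent explicit coefficient) and compute its gradient at the true $\gamma$'s --- that is the step your proposal defers, and it is the step this part of the lemma is really about.
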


\begin{proof}[Proof of Lemma~\ref{lem:gap}]
First, we show that for each reference individual $j$, the estimate $\widehat\gamma_j - \rr\gamma_j = O_p(1/\sqrt G)$ when $G \to \infty$. 
Notice that Under Assumption~\ref{assumption_consistency}b and 
Assumption~\ref{assmp:dependence}
$$\text{Var}\left(\widehat \gamma_j\right) =  \frac{1}{G^2}\Var{\sum_{g = 1}^G\frac{\sum_{k = 1}^K\rr z_{jgk}}{K}} = O\left(\frac{1}{G}\right).$$
As $\EE{\widehat \gamma_j} = \rr\gamma_j$, we have $\widehat \gamma_j - \rr\gamma_j = O_p(1/\sqrt G)$ by Chebyshev's inequality.

Next, by definition
\begin{align*}
& \widehat{\bs U}^\top\bs W\widehat{\bs U}-\widehat{\bs U}^{\star\top}\bs W\widehat{\bs U}^\star
=\sum_g w_g \widehat{\bs \mu}_g\widehat{\bs \mu}_g^\top - \sum_g w_g \widehat{\bm\mu}_g^\star \widehat{\bm\mu}_g^{\star\top} \\
= & \frac{1}{M^2}\sum_{j_1, j_2 = 1}^M\left(\frac{\rr\gamma_{j_1}\rr\gamma_{j_2}}{\widehat\gamma_{j_1}\widehat\gamma_{j_2}}- 1\right)\sum_{g = 1}^G w_g \frac{\rr{\bm z}_{j_1g}}{\rr\gamma_{j_1}}\frac{(\rr{\bm z}_{j_2g})^\top}{\rr\gamma_{j_2}} \\
= & \frac{1}{M^2}\sum_{j_1, j_2 = 1}^M\left(\frac{\rr\gamma_{j_1}\rr\gamma_{j_2}}{\widehat\gamma_{j_1}\widehat\gamma_{j_2}}- 1\right)\left(\sum_{g = 1}^G w_g \frac{\rr{\bm z}_{j_1g}}{\rr\gamma_{j_1}}\frac{(\rr{\bm z}_{j_2g})^\top}{\gamma_{j_2}} - \sum_g w_g\bs \mu_g\bs \mu_g^\top\right)\\ 
&+ \frac{1}{M^2}\sum_{j_1, j_2 = 1}^M\left(\frac{\rr\gamma_{j_1}\rr\gamma_{j_2}}{\widehat\gamma_{j_1}\widehat\gamma_{j_2}}- 1\right)\bs U^\top\bs W\bs U.
\end{align*}
Using the same logic as in the proof of Lemma~\ref{lem:con1}, we have 
$$\sum_{g = 1}^G w_g \frac{\rr{\bm z}_{j_1g}}{\rr\gamma_{j_1}}\frac{(\rr{\bm z}_{j_2g})^\top}{\gamma_{j_2}} - \sum_{g = 1}^G w_g\EE{ \frac{\rr{\bm z}_{j_1g}}{\rr\gamma_{j_1}}\frac{(\rr{\bm z}_{j_2g})^\top}{\gamma_{j_2}}} = O_p(\sqrt G)$$
with $\EE{\frac{\rr{\bm z}_{j_1g}}{\rr\gamma_{j_1}}\frac{(\rr{\bm z}_{j_2g})^\top}{\gamma_{j_2}}} = \bs \mu_g\bs \mu_g^\top$. As $\frac{\rr\gamma_{j_1}\rr\gamma_{j_2}}{\widehat\gamma_{j_1}\widehat\gamma_{j_2}}- 1 = O_p(1/\sqrt G)$ and $M$ is fixed,
\begin{align*}
&\widehat{\bs U}^\top\bs W\widehat{\bs U}-\widehat{\bs U}^{\star\top}\bs W\widehat{\bs U}^\star
=\sum_g w_g \widehat{\bs \mu}_g\widehat{\bs \mu}_g^\top - \sum_g w_g \widehat{\bm\mu}_g^\star\left(\widehat{\bm\mu}_g^\star\right)^\top\\
=& \frac{1}{M^2}\sum_{j_1, j_2 = 1}^M\left(\frac{\rr\gamma_{j_1}\rr\gamma_{j_2}}{\widehat\gamma_{j_1}\widehat\gamma_{j_2}}- 1\right)\bs U^\top\bs W\bs U + O_p(1).    
\end{align*}
Similarly, under Assumption~\ref{assumption_consistency}b of bounded moments,
\begin{align*}
& \sum_{j = 1}^M\left(\frac{(\rr\gamma_j)^2}{\widehat \gamma_j^2}-1\right)\sum_g w_g \frac{\rr{\bm z}_{jg}}{\rr\gamma_{j}}\frac{(\rr{\bm z}_{jg})^\top}{\gamma_{j}}\\
= & \sum_{j = 1}^M\left(\frac{(\rr\gamma_j)^2}{\widehat \gamma_j^2}-1\right)\left(\sum_g w_g \frac{\rr{\bm z}_{jg}}{\rr\gamma_{j}}\frac{(\rr{\bm z}_{jg})^\top}{\gamma_{j}} - \sum_g w_g\bs \mu_g \bs \mu_g^\top\right) + \sum_{j = 1}^M\left(\frac{(\rr\gamma_j)^2}{\widehat \gamma_j^2}-1\right)\bs U^\top\bs W\bs U\\
= & \sum_{j = 1}^M\left(\frac{(\rr\gamma_j)^2}{\widehat \gamma_j^2}-1\right)\bs U^\top\bs W\bs U + O_p(1).
\end{align*}
Then, it holds that
\begin{align*}
& \widehat {\bm V}-\widehat {\bm V}^\star
=\sum_g w_g\widehat {\bm V}_g - \sum_g w_g\widehat{\bm V}_g^\star \\
= & \frac{1}{M(M-1)}\sum_{j = 1}^M\left(\frac{(\rr\gamma_j)^2}{\widehat \gamma_j^2}-1\right)\sum_g w_g
\frac{\rr{\bm z}_{jg}}{\rr\gamma_{j}}\frac{(\rr{\bm z}_{jg})^\top}{\gamma_{j}} - \frac{1}{M-1}\sum_g w_g \left(\widehat{\bm\mu}_g\widehat{\bm\mu}_g^\top -\widehat{\bm\mu}_g^\star\widehat{\bm\mu}_g^{\star \top}\right)\\
=&\frac{1}{M(M-1)}\left[\sum_{j = 1}^M\left(\frac{(\rr\gamma_j)^2}{\widehat \gamma_j^2}-1\right)-\frac{1}{M}\sum_{j_1, j_2 = 1}^M\left(\frac{\rr\gamma_{j_1}\rr\gamma_{j_2}}{\widehat\gamma_{j_1}\widehat\gamma_{j_2}}- 1\right)\right]\bs U^\top\bs W\bs U + O_p(1).
\end{align*}
As a result, with Assumption~\ref{assumption_consistency}a,
\begin{align*}
&\widehat{\bm \Omega} - \widehat{\bm \Omega}^\star
=\frac{1}{G}\left((\widehat{\bm U}^{\top} \bm W \widehat{\bm U} -  \widehat{\bm U}^{\star \top} \bm W \widehat{\bm U}^\star) -(\sum_{g=1}^G w_g \widehat {\bm V}_g-\sum_{g=1}^G w_g \widehat {\bm V}_g^\star)\right)\\
= & \frac{1}{M(M-1)}\left[\sum_{j_1, j_2 = 1}^M\left(\frac{\rr\gamma_{j_1}\rr\gamma_{j_2}}{\widehat\gamma_{j_1}\widehat\gamma_{j_2}}- 1\right) - \sum_{j = 1}^M\left(\frac{(\rr\gamma_j)^2}{\widehat \gamma_j^2}-1\right)\right]\frac{\bs U^\top\bs W\bs U}{G} + O_p(1/G) \\
=& O_p(1/\sqrt{G})
\end{align*}
On the other hand,
\begin{align*}
& \bs H - \bs H^\star \\
= & (\widehat{\bm U}^\top \bm W \widehat{\bm U} -  \widehat{\bm U}^{\star \top} \bm W \widehat{\bm U}^\star) -  (\widehat{\bm U}^\top \bm W \bs \Lambda\bm U - \widehat{\bm U}^{\star \top} \bm W \bs \Lambda\bm U) - (\sum_{g=1}^G w_g \widehat {\bm V}_g-\sum_{g=1}^G w_g \widehat {\bm V}_g^\star)\\
=& \frac{1}{M(M-1)}\left\{\sum_{j_1, j_2 = 1}^M\left(\frac{\rr\gamma_{j_1}\rr\gamma_{j_2}}{\widehat\gamma_{j_1}\widehat\gamma_{j_2}}- 1\right) - \sum_{j = 1}^M\left(\frac{(\rr\gamma_j)^2}{\widehat \gamma_j^2}-1\right)\right\}{\bs U^\top\bs W\bs U}\\
&- (\widehat{\bm U}^\top \bm W \bs \Lambda\bm U - \widehat{\bm U}^{\star \top} \bm W \bs \Lambda\bm U) + O_p(1).
\end{align*}
Since
\begin{align*}
  &\widehat{\bm U}^\top \bm W \bs \Lambda\bm U - \widehat{\bm U}^{\star \top} \bm W \bs \Lambda\bm U\\
= & \frac{1}{M}\sum_{j = 1}^M\left(\frac{\rr\gamma_{j}}{\widehat\gamma_{j}}- 1\right)\sum_{g = 1}^G w_g \lambda_g\frac{\rr{\bm z}_{jg}}{\gamma_{j}}\bs\mu_{g}^\top \\
= & \frac{1}{M}\sum_{j = 1}^M\left(\frac{\rr\gamma_{j}}{\widehat\gamma_{j}}- 1\right)\left(\sum_{g = 1}^G w_g\lambda_g \frac{\rr{\bm z}_{jg}}{\gamma_{j}}\bs\mu_{g}^\top -\sum_g w_g\bs \mu_g \bs \mu_g^\top \right) + \frac{1}{M}\sum_{j = 1}^M\left(\frac{\rr\gamma_{j}}{\widehat\gamma_{j}}- 1\right)\bs U^\top\bs W\bs U \\
=&  \frac{1}{M}\sum_{j = 1}^M\left(\frac{\rr\gamma_{j}}{\widehat\gamma_{j}}- 1\right)\bs U^\top\bs W\bs U + O_p(1).
\end{align*}
Combining all above, we 
get 
$$\bs H - \bs H^\star = h(\widehat{\gamma}_1, \cdots, \widehat{\gamma}_m)\bs U^\top\bs W\bs U + O_p(1)$$
where the function
$$h(x_1, \cdots, x_M) = \frac{1}{M(M-1)}\sum_{j_1\neq j_2}\frac{\rr\gamma_{j_1}\rr\gamma_{j_2}}{x_{j_1}x_{j_2}} - \frac{1}{M}\sum_j \frac{\rr\gamma_j}{x_j} $$
Taking the derivative, we find that for this function we have 
$$\frac{\partial h}{\partial x_j}{(\rr\gamma_1, \cdots, \rr\gamma_M)} = -\frac{1}{M\rr\gamma_j}.$$
So if we take Taylor expansion of $h(\cdot)$ at the true value $(\rr\gamma_1, \cdots, \rr\gamma_m)$, then we have
$$h(\widehat{\gamma}_1, \cdots, \widehat{\gamma}_M) = h(\rr\gamma_1, \cdots, \rr\gamma_M)  - \frac{1}{M}\sum_{j = 1}^M \frac{1}{\rr\gamma_j}(\widehat \gamma_j - \rr\gamma_j) + O_p\left(\frac{1}{G}\right)$$
As $h(\rr\gamma_1, \cdots, \rr\gamma_M) = 0$, we further have
$$\bs H - \bs H^\star = -\frac{1}{M} \sum_{j = 1}^M \frac{1}{\rr\gamma_j}(\widehat \gamma_j - \rr\gamma_j)\bs U^\top\bs W\bs U + O_p(1).$$

Finally, 
\begin{align*}
  \widehat{\bm U}^\top \bm W \bm \epsilon_i' - \widehat{\bm U}^{\star \top} \bm W \bm \epsilon_i' 
   =  \frac{1}{M}\sum_{j = 1}^M\left(\frac{\rr\gamma_{j}}{\widehat\gamma_{j}}- 1\right)\sum_{g = 1}^G w_g \epsilon_{ig}'\frac{\rr{\bm z}_{jg}}{\rr\gamma_{j}}.
\end{align*}
As $\epsilon_{ig}' \indep \rr{\bm z}_{jg}$ (as they come from two different individuals), we have $\EE{\epsilon_{ig}'\frac{\rr{\bm z}_{jg}}{\rr\gamma_{j}}} = \bs 0$. So $\sum_{g = 1}^G w_g \epsilon_{ig}'\frac{\rr{\bm z}_{jg}}{\rr\gamma_{j}} = O_p(\sqrt G)$ and 
$$\widehat{\bm U}^\top \bm W \bm \epsilon_i' - \widehat{\bm U}^{\star \top} \bm W \bm \epsilon_i'  = O_p(1).$$

\end{proof}

\begin{proof}[Proof of Theorem~\ref{thm:consistency}]
Lemma~\ref{lem:gap} guarantees that $\widehat{\gamma}_j\overset{p}{\to}\rr\gamma_j$ for any $j$ when $G \to \infty$. 
Using Lemma~\ref{lem:con1} and Lemma~\ref{lem:gap}, we also have 
$$\widehat{\bm \Omega} = \widehat{\bm \Omega} - \widehat{\bm \Omega}^\star + \widehat{\bm \Omega}^\star \overset{p}{\to} \bm \Omega \succ 0$$
By the Continuous mapping theorem, additionally we have $\widehat{\bm \Omega}^{-1} \overset{p}{\to} \bm \Omega^{-1}$.

Now we show that $\widehat{\bs \beta}^\star_i \overset{p}{\to} \bs \beta_i$ where $\widehat{\bs \beta}^\star_i$ is either the truncation estimator $\widehat{\bs \beta}^\star_i = \widehat {\bs\beta}_i \vee \bs 0 $ or the constrained estimator from non-negative least squares. 
 For the truncation estimator $\widehat{\bs \beta}^\star_i = \widehat {\bs\beta}_i \vee \bs 0$, 
Since we have 
$$\frac{1}{G}\bm \phi(\bs \beta_i) =\frac{1}{G}\widehat{\bm U}^\top \bm W \bm \epsilon_i' - \frac{1}{G}\bs H \bm{\beta}_i  \overset{p}{\to} \bm 0,$$
then
 $$\widehat {\bs\beta}_i - \bs \beta_i = \frac{1}{G}\widehat{\bm \Omega}^{-1}\bm \phi(\bs \beta_i) \overset{p}{\to} \bm 0.$$
Thus,
 $\widehat{\bs \beta}^\star_i = \widehat {\bs\beta}_i \vee \bs 0 \overset{p}{\to} \bs\beta_i \vee \bs 0 = \bs \beta_i$. 

 For the constrained estimator from non-negative least squares where  
 $$\widehat {\bs \beta}_i^\star = \argmin_{{\bs \beta}_i \succeq \bs 0}(\bs y_i - \widehat{\bs U}\bs \beta_i)^\top \bs W \bs (\bs y_i - \widehat{\bs U}\bs \beta_i) - \bs \beta_i^\top\widehat{\bs V} \bs \beta_i \overset{\Delta}{=}\argmin_{{\bs \beta}_i \succeq \bs 0}l(\bs \beta_i),$$
plug in model~\eqref{bulk1} for $\bs y_i$ and denote the true $\bs \beta_i$ as $\bs \beta_{0i}$, we have 
\begin{align*}
    l(\bs \beta_i) 
    = \tilde l(\bs \beta_i) + 2 (\bs \Lambda\bs U \bs \beta_{0i}-\widehat{\bs U}\bs \beta_i)^\top \bs W \bs \epsilon_i' + \text{const} 
    & =  \tilde l(\bs \beta_i) + 2(\bs \beta_{0i}-\bs \beta_i)^\top \widehat{\bs U}^\top\bs W \bs \epsilon_i' + \text{const}\\
    & = \tilde l(\bs \beta_i) + O_p(\sqrt G)\|\bs \beta_{0i}-\bs \beta_i\|_2  + \text{const}
\end{align*}
where $\tilde l(\bs \beta_i)=(\bs \Lambda\bs U \bs \beta_{0i} - \widehat{\bs U}\bs \beta_i)^\top \bs W \bs (\bs \Lambda\bs U \bs \beta_{0i} - \widehat{\bs U}\bs \beta_i) - \bs \beta_i^\top\widehat{\bs V} \bs \beta_i$.
Additionally, expand\\ 
$\bs \Lambda\bs U \bs \beta_{0i} - \widehat{\bs U}\bs \beta_i = \bs \Lambda\bs U (\bs\beta_{0i}-\bs \beta_i) + (\bs \Lambda\bs U -\widehat{\bs U})\bs\beta_i$, we have
\begin{align*}
\tilde l(\bs \beta_i) = & (\bs \beta_{0i}-\bs \beta_i)^\top\bs U^\top\bs \Lambda \bs W\bs \Lambda \bs U(\bs \beta_{0i}-\bs \beta_i) + 2(\bs \beta_{0i}-\bs \beta_i)^\top\bs U^\top\bs \Lambda \bs W (\bs \Lambda\bs U -\widehat{\bs U})\bs\beta_i \\
& + \bs\beta_i^\top\left[(\bs \Lambda\bs U -\widehat{\bs U})^\top \bs W (\bs \Lambda\bs U -\widehat{\bs U})-\bs {\widehat V}\right]\bs\beta_i
\end{align*}
For the 2nd and 3rd terms, using results in Lemma~\ref{lem:con1} and Lemma~\ref{lem:gap}, we have
$$\bs U^\top\bs \Lambda \bs W (\bs \Lambda\bs U -\widehat{\bs U}) = O_p(\sqrt G), \quad (\bs \Lambda\bs U -\widehat{\bs U})^\top \bs W (\bs \Lambda\bs U -\widehat{\bs U})-\bs {\widehat V}=O_p(\sqrt G).$$
Thus, 
$$ (\bs \beta_{0i}-\bs \beta_i)^\top\bs U^\top\bs \Lambda \bs W (\bs \Lambda\bs U -\widehat{\bs U})\bs\beta_i = O_p(\sqrt G)\|\bs \beta_{0i}-\bs \beta_i\|_2 + O_p(\sqrt G)\|\bs \beta_{0i}-\bs \beta_i\|_2^2$$
$$  \bs\beta_i^\top\left[(\bs \Lambda\bs U -\widehat{\bs U})^\top \bs W (\bs \Lambda\bs U -\widehat{\bs U})-\bs {\widehat V}\right]\bs\beta_i = O_p(\sqrt G)\|\bs \beta_{0i}-\bs \beta_i\|_2 + O_p(\sqrt G)\|\bs \beta_{0i}-\bs \beta_i\|_2^2 + \mathrm{const}.$$
Additionally, as $\lambda_g\overset{i.i.d.}{\sim}[1,\sigma_0^2]$ across $g$ with bounded 4th moments by Assumption~\ref{assumption_consistency}b, it is easy to show by the Chebyshev inequality that 
$$\bs U^\top\bs \Lambda \bs W\bs \Lambda \bs U = (\sigma_0^2+1)\bs U^\top\bs W\bs U + O_p(\sqrt G).$$
Thus, we have
\[
\tilde l(\bs \beta_i)= (\sigma_0^2+1)(\bs \beta_{0i}-\bs \beta_i)^\top\bs U^\top\bs W\bs U(\bs \beta_{0i}-\bs \beta_i) + O_p(\sqrt G)\|\bs \beta_{0i}-\bs \beta_i\|_2 + O_p(\sqrt G)\|\bs \beta_{0i}-\bs \beta_i\|_2^2 + \mathrm{const},
\]
indicating that
$$l(\bs \beta_i)= (\sigma_0^2+1)(\bs \beta_{0i}-\bs \beta_i)^\top\bs U^\top\bs W\bs U(\bs \beta_{0i}-\bs \beta_i) + O_p(\sqrt G)\|\bs \beta_{0i}-\bs \beta_i\|_2 + O_p(\sqrt G)\|\bs \beta_{0i}-\bs \beta_i\|_2^2 + \mathrm{const}.$$
Thus, as $\bs U^\top\bs W\bs U^\top/G \overset{G\to \infty}{\to} \bs \Omega \succ 0$ under Assumption~\ref{assumption_consistency}a, for any $\epsilon>0$, when $G \to \infty$ we have
\begin{align*}
    &\PP{\|\widehat {\bs \beta}_i^\star - \bs \beta_{0i}\|_2\leq \epsilon} \geq \PP{l(\bs \beta_{0i})< \min_{\|\bs\beta_i -\bs \beta_{0i}\|_2 > \epsilon}l(\bs \beta_{i})} \\
    = &\PP{\min_{\|\bs\beta_i -\bs \beta_{0i}\|_2 > \epsilon}\left\{(\bs \beta_{0i}-\bs \beta_i)^\top\bs U^\top\bs W\bs U(\bs \beta_{0i}-\bs \beta_i) + O_p(\sqrt G)\|\bs \beta_{0i}-\bs \beta_i\|_2 + O_p(\sqrt G)\|\bs \beta_{0i}-\bs \beta_i\|_2^2\right\} > 0}\\
    = &  \PP{\min_{\|\bs\beta_i -\bs \beta_{0i}\|_2 > \epsilon}(\bs \beta_{0i}-\bs \beta_i)^\top\bs \Omega(\bs \beta_{0i}-\bs \beta_i) + O_p(1/\sqrt G)\|\bs \beta_{0i}-\bs \beta_i\|_2 + O_p(1/\sqrt G)\|\bs \beta_{0i}-\bs \beta_i\|_2^2 > 0}\\
    \geq & \PP{\lambda_{\text{min}}(\bs \Omega)\epsilon^2 - |O_p(1/\sqrt G)\epsilon| - |O_p(1/\sqrt G)\epsilon^2|  > 0}   \to  1,
\end{align*}
where $\lambda_{\text{min}}(\bs \Omega) > 0$ is a minimum eigenvalue of matrix $\bs \Omega$. Thus
$$\widehat {\bs \beta}_i^\star \overset{p}{\to} \bs \beta_{0i}.$$
 
 Finally, by the Continuous mapping theorem, we have 
$$\widehat{\bs p_i} = \frac{\widehat{\bs \beta}^\star_i}{\widehat{\bs \beta}_i^{\star \top} \bs 1}\overset{p}{\to}\frac{{\bs \beta}_i}{{\bs \beta}_i^\top \bs 1} = \bs p_i$$ 
when $G \to \infty$ for any target individual $i$.

\end{proof}

\subsubsection{Proof of Theorem~\ref{thm:clt}}
We need the following lemmas.



\begin{lemma}[Theorem 2.7 of Chen and Shao]\label{lem:chenzhao}
Let $\{X_i, i \in \mathcal{V}\}$ be random variables indexed by the vertices of a dependency graph and let $D$ be the maximum degree. Put $W = \sum_{i \in \mathcal{V}} X_i$. Assume that $\EE{W^2} = 1$, $\EE{X_i} = 0$ and $\EE{|X_i|^{p}} \leq \theta^{p}$ 
for $i \in \mathcal{V}$ and for some $\theta > 0$. Then
$$\sum_z \|\PP{W \leq z} - \Phi(z)\|\leq 75 D^{5(p - 1)}|V|\theta^p$$

\end{lemma}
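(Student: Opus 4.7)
The plan is to prove this via Stein's method for normal approximation adapted to the local-dependence setting, following Chen (1998) and Chen--Shao (2004). The target is a Kolmogorov-type distance bound (I read $\sum_z \|\cdot\|$ as the standard $\sup_z | \PP{W\leq z} - \Phi(z)|$), and the natural route is to bound that supremum by controlling $\EE{f_z'(W) - W f_z(W)}$ for the solution $f_z$ of Stein's equation $f'(w) - w f(w) = \mathbb{1}(w \leq z) - \Phi(z)$, then exploit the dependency graph to estimate this expectation.

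First I would collect the standard regularity properties of $f_z$: $\|f_z\|_\infty \leq \sqrt{2\pi}/4$, $\|f_z'\|_\infty \leq 1$, and the concentration-type inequality $|(w+u)f_z(w+u) - w f_z(w)| \leq (|w| + \sqrt{2\pi}/4)|u|$ which handles the non-smoothness of the indicator on the right-hand side of Stein's equation. These bounds ultimately drive all error estimates.

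Next I would exploit the dependency graph. For each $i \in \mathcal{V}$ let $N_i$ be $i$ together with its graph neighbors (so $|N_i| \leq D+1$) and $N_i^{(2)} = \bigcup_{j \in N_i} N_j$ (so $|N_i^{(2)}| \leq D^2+1$), and set $W_i = \sum_{j \in N_i} X_j$ with $\widetilde W_i = W - W_i$. The defining property of the dependency graph is that $X_i \indep \widetilde W_i$, which lets me write $\EE{X_i f_z(W)} = \EE{X_i [f_z(W) - f_z(\widetilde W_i)]}$. Summing over $i$, a Taylor expansion around $\widetilde W_i$ and a second neighborhood decomposition using $N_i^{(2)}$ convert $\EE{W f_z(W)}$ into $\EE{f_z'(W)}$ plus remainder terms that are sums of products of at most three $X_j$'s drawn from $N_i^{(2)}$. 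Applying Hölder's inequality with $\EE{|X_j|^p}\leq \theta^p$ and counting that each neighborhood contributes at most $O(D^k)$ terms yields a bound proportional to $|\mathcal{V}|D^{5(p-1)}\theta^p$.

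The main obstacle will be handling the case $2 \leq p \leq 3$ where only a $p$-th moment is assumed: a truncation argument is needed, replacing $X_i$ by $X_i \mathbb{1}(|X_i| \leq \tau)$ for an optimally chosen $\tau$ and separately bounding the contribution from the tail using the concentration inequality above. Tracking the constants precisely enough to land on the specific factor $75$ (as opposed to a generic $C_p$) is the delicate part, and requires careful bookkeeping of how many times each neighborhood appears in the decomposition and how the $\|f_z\|$ bounds interact with the truncation level. Because this statement is invoked only as an off-the-shelf tool (cited directly as Chen and Shao's Theorem 2.7), the pragmatic proof in the paper is simply to appeal to that reference rather than to reproduce this delicate bookkeeping; what I would write is a short remark sketching the Stein-equation setup above and pointing to Chen--Shao for the constants.
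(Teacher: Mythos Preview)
Your reading is correct: the paper does not prove this lemma at all but simply quotes it verbatim as ``Theorem 2.7 of Chen and Shao'' and invokes it as a black box in the proof of Lemma~\ref{lem:score}. Your sketch of the Stein-equation/dependency-graph argument is an accurate summary of how Chen--Shao establish the result, and your concluding remark that the appropriate ``proof'' here is a citation matches exactly what the paper does.
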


\begin{lemma}\label{lem:pos_var}
   Under Assumptions~\ref{asp:homo_popu}-\ref{assmp:variance}, we have 
$    \bm\Sigma_i \overset{\Delta}{=} \lim_{G\to \infty}\Cov{\bs \phi(\bs \beta_i)}/G   \succ 0 $
for the target individual $i$. 
\end{lemma}

\begin{proof}[Proof of Lemma~\ref{lem:pos_var}]

First, notice that by definition of $\widehat{\bm U}^{\star \top}$ and $\widehat {\bm V}_g^\star$, and following model~\eqref{eq:final_bulk},
$$\widehat{\bm U}^{\star \top} \bm W (\widehat{\bm U}^\star - \bm U) - \sum_{g=1}^G w_g \widehat {\bm V}_g^\star = \sum_{g=1}^G w_g\left[\rr{\bar{\bs \epsilon}}_g(\rr{\bar{\bs \epsilon}}_g+\bs \mu_g)^\top - \text{Cov}_M\left(\rr{\bar{\bs \epsilon}}_g\right)\right] $$
On the other hand, given the definition of $\bs e_i$ in model~\eqref{eq:final_bulk}, we have
\begin{align*}
 \widehat{\bm U}^{\star \top} \bm W \bs \epsilon_i' - \bs H^\star\bs \beta_i & = 
\widehat{\bm U}^{\star \top} \bm W \bs e_i  -\left(\widehat{\bm U}^{\star \top} \bm W (\widehat{\bm U}^\star - \bm U) - \sum_{g=1}^G w_g \widehat {\bm V}_g^\star\right)\bs \beta_i   \\
& = 
\bs s_i  -\left(\widehat{\bm U}^{\star \top} \bm W (\widehat{\bm U}^\star - \bm U) - \sum_{g=1}^G w_g \widehat {\bm V}_g^\star\right)\bs \beta_i
\end{align*}
Thus using Lemma~\ref{lem:gap}, we have 
\begin{align*}
    \bm \phi(\bm \beta_i) & = \widehat{\bm U}^\top \bm W \bs \epsilon_i - \bs H\bs \beta_i \\
    & = \widehat{\bm U}^{\star \top} \bm W \bs \epsilon_i - \bs H^\star\bs \beta_i  + \frac{1}{M}\sum_{j=1}^M\frac{1}{\rr\gamma_j}(\widehat \gamma_j - \rr\gamma_j)\bs U^\top\bs W\bs U\bs \beta_i + O_p(1)\\
    &= \bs s_i  -\left(\widehat{\bm U}^{\star \top} \bm W (\widehat{\bm U}^\star - \bm U) - \sum_{g=1}^G w_g \widehat {\bm V}_g^\star\right)\bs \beta_i +\frac{G}{M}\sum_{j=1}^M\frac{1}{\rr\gamma_j}(\widehat \gamma_j - \rr\gamma_j)\bs \Omega\bs \beta_i + o_p(\sqrt G).
\end{align*}
Since 
$$\widehat{\gamma}_j-\gamma_j=\sum_{g = 1}^G\frac{\sum_{k = 1}^K\rr z_{jgk}}{KG} - \gamma_j = \sum_{g = 1}^G\frac{\sum_{k = 1}^K\rr \epsilon_{jgk}}{KG},$$
given the definition of $\rr{\bs H}$, we can rewrite $ \bm \phi(\bm \beta_i)$ as 
$$ \bm \phi(\bm \beta_i) = \bs s_i  -\rr{\bs H}\bs \beta_i  + o_p(\sqrt G).$$
As $\widehat {\bs U}^\star$ and $\rr{\bs H}$ only depends on the reference data, while $\bs e_i$ only depends on the target data, we have $\bs e_i \indep (\widehat {\bs U}^\star, \rr{\bs H})$. Also $\EE{\bs e_i} = \bs 0$, thus given that $\bs s_i = \widehat{\bm U}^{\star \top} \bm W \bs e_i$, we have
$$\Cov{\bs s_i, \rr{\bs H}\bs \beta_i} = \bs 0$$
Thus, under Assumption~\ref{assmp:variance}, we have
$$\bs \Sigma_i = \lim_{G\to \infty}\frac{\Cov{ \phi(\bm \beta_i) }}{G} =\lim_{G\to \infty}\frac{\Cov{\bs s_i - \rr{\bs H} \bs \beta_i}}{G} = \lim_{G\to \infty}\frac{\Cov{\bs s_i} + \Cov{\rr{\bs H} \bs \beta_i}}{G}\succ 0.$$
\end{proof}


\begin{lemma}\label{lem:score}
Under the assumptions of Theorem~\ref{thm:clt}, for each target individual $i$ the score function $\bs \phi(\bs \beta_i)$ satisfy
$$\frac{1}{\sqrt G} \bs \phi(\bs \beta_i) \overset{d}{\to} \mathcal{N}(\bs 0, \bs \Sigma_i).$$
\end{lemma}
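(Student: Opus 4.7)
The plan is to prove asymptotic normality via the Cramér--Wold device and a CLT for locally dependent random variables (Lemma~\ref{lem:chenzhao}) applied to a suitable decomposition of $\bs \phi(\bs \beta_i)$ as a sum over genes. The key obstacle is that $\bs \phi(\bs \beta_i)$ is \emph{not} a clean sum of gene-wise independent pieces because $\widehat{\bs \mu}_g$ and $\widehat{\bs V}_g$ both involve the pooled estimator $\widehat \gamma_j$, which couples all genes. So my first task is to linearize away the dependence on $\widehat \gamma_j$ and reduce the score to a sum $\sum_g \bs \psi_g$ where each $\bs \psi_g$ depends only on gene $g$'s data (plus deterministic population quantities).

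First, using Lemma~\ref{lem:gap}, I would write
\[
\bs \phi(\bs \beta_i) \;=\; \bs \phi^\star(\bs \beta_i) \;-\; \sum_{j=1}^m \frac{\widehat\gamma_j-\gamma_j}{\gamma_j}\,\bs U^T\bs W\bs U\,\bs \beta_i \;+\; O_p(1),
\]
and substitute the definition $\widehat\gamma_j-\gamma_j = \frac{1}{KG}\sum_g\sum_k(\bar y_{jgk}-\gamma_j\mu_{gk})$. Writing $\bs A_G = \tfrac{1}{G}\bs U^T\bs W\bs U\bs \beta_i$ (which is $O(1)$ by Assumption~\ref{assumption_consistency}a), this produces
\[
\bs \phi(\bs \beta_i) \;=\; \sum_{g=1}^G \bs \psi_g \;+\; O_p(1), \qquad \bs \psi_g \;=\; \bs \phi_g^\star(\bs \beta_i) \;-\; \frac{1}{K}\sum_{j=1}^m \frac{1}{\gamma_j}\sum_k(\bar y_{jgk}-\gamma_j\mu_{gk})\,\bs A_G.
\]
A short calculation using Lemma~\ref{lem:var_est}, $\EE{\bs \Lambda}=\bs I$, and independence between bulk and reference data gives $\EE{\bs \psi_g}=\bs 0$, and each $\bs \psi_g$ depends only on $(\{\epsilon_{ig}\},\{\epsilon^r_{jgk}\},\lambda_g)$, so the $\bs \psi_g$'s inherit the dependency graph $\mathcal G$ from Assumption~\ref{assmp:dependence}.

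Next, I fix an arbitrary $\bs a \in \mathbb R^K$ and show $\bs a^T\bs \phi(\bs \beta_i)/\sqrt G \overset{d}{\to} \mathcal N(0, \bs a^T \bs \Sigma_i \bs a)$ via Cramér--Wold. I split $\sum_g \bs a^T\bs \psi_g = \sum_{g\in\mathcal V}\bs a^T\bs \psi_g + \sum_{g\in\mathcal V^c}\bs a^T\bs \psi_g$. For the hub genes, the fourth-moment bounds in Assumption~\ref{assumption_consistency}b give $\Var{\bs a^T\bs \psi_g}=O(1)$ uniformly, so by Cauchy--Schwarz
\[
\Var\!\left(\sum_{g\in\mathcal V^c}\bs a^T\bs \psi_g\right) \;\leq\; |\mathcal V^c|^2 \cdot \max_g \Var{\bs a^T\bs \psi_g} \;=\; o(G),
\]
which via Chebyshev makes this piece $o_p(\sqrt G)$. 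For the main piece $\sum_{g\in\mathcal V}\bs a^T\bs \psi_g$, I apply Lemma~\ref{lem:chenzhao} with $p=4+\delta$: the maximum degree $D\leq s$ is constant by Assumption~\ref{assmp:dependence}, the $(4+\delta)$-th moments of $\bs \psi_g$ are uniformly bounded by Assumption~\ref{assumption_consistency}b--c, and Assumption~\ref{assmp:variance} combined with Lemma~\ref{lem:pos_var} guarantees that $\sigma_G^2 := \Var{\sum_{g\in\mathcal V}\bs a^T\bs \psi_g}$ grows like $G$. Chen--Shao's bound $75D^{5(p-1)}|\mathcal V|\,(\theta/\sigma_G)^p = O(G^{1-p/2}) \to 0$ then yields the Berry--Esseen-type conclusion, giving asymptotic normality after normalization by $\sqrt G$.

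Finally, I identify the limiting variance. By construction $\bs \Sigma_i = \lim_{G\to\infty}\Cov{\bs \phi(\bs \beta_i)}/G$, and since the gap $\bs \phi - \sum_g\bs \psi_g = O_p(1)$ contributes negligibly to the variance at scale $G$, we get $\Var{\bs a^T\sum_g\bs \psi_g}/G \to \bs a^T\bs \Sigma_i\bs a$ automatically. The positive-definiteness needed for a non-degenerate limit is already provided by Lemma~\ref{lem:pos_var}. The main technical obstacle throughout is exactly the linearization step at the beginning: if one naively tried to apply Chen--Shao directly to $\bs \phi_g(\bs \beta_i)$ (with $\widehat{\bs \mu}_g$ rather than $\widehat{\bs \mu}_g^\star$), the terms would fail to respect any sparse dependency graph because $\widehat\gamma_j$ is a global functional of all genes; the explicit linearization above is what makes the CLT machinery applicable and what ensures the correct $\bs \Sigma_i$ appears in the limit.
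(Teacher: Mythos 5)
Your proposal follows essentially the same route as the paper's proof: linearize away the pooled estimators $\widehat\gamma_j$ via Lemma~\ref{lem:gap}, substitute $\widehat\gamma_j-\gamma_j$ as a sum over genes to obtain a per-gene decomposition $\sum_g \bs\psi_g$ (the paper's $\bs Z_g$), discard the hub genes $\mathcal V^c$ at order $o_p(\sqrt G)$, and conclude by Cram\'er--Wold together with the Chen--Shao local-dependence CLT, with the variance identified through Lemma~\ref{lem:pos_var}. The one inaccuracy is your claim that the $(4+\delta)$-th moments of $\bs\psi_g$ are uniformly bounded: since $\bs\psi_g$ contains products of two quantities each with only $(4+\delta)$-th moments under Assumption~\ref{assumption_consistency}b, Cauchy--Schwarz yields uniform bounds only at order $2+\delta/2$, which is exactly what the paper uses and still suffices for the Chen--Shao bound $O(G^{1-p/2})\to 0$ since $p>2$.
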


\begin{proof}[Proof of Lemma \ref{lem:score}]
As shown in the proof of Lemma~\ref{lem:pos_var} 
\begin{align*}
    \frac{1}{G}\bm \phi(\bm \beta_i) 
    & = \frac{1}{G}\left(\widehat{\bm U}^{\star \top} \bm W \bs \epsilon_i' - \bs H^\star\bs \beta_i\right)  + \frac{1}{M}\sum_{j=1}^M\frac{1}{\rr\gamma_j}(\widehat \gamma_j - \rr\gamma_j)\bs \Omega\bs \beta_i + o_p(1/\sqrt G) \\
    &=\frac{1}{G} \sum_{g = 1}^G\bs \eta_g + o_p(\frac{1}{\sqrt G}).
\end{align*}
where 
$$\bs \eta_g \overset{\Delta}{=} w_g \epsilon_{ig}'\widehat {\bs \mu}_g^\star - w_g\left(\widehat{\bs \mu}_g^\star(\widehat{\bs \mu}_g^\star - \lambda_g\bs \mu_g)^\top - \widehat{\bs V}_g^\star\right)\bs \beta_i   + \sum_j\frac{1}{\rr\gamma_j}\frac{\sum_k \rr z_{jgk} - \sum_k \rr\gamma_j \mu_{gk}}{K}\bs \Omega\bs \beta_i.$$
Each $\EE{\bs \eta_g} = \bs 0$ and given Assumption~\ref{assmp:variance} and Lemma~\ref{lem:pos_var}, $\lim_{G \to \infty}\Var{\sum_{g = 1}^G\bs \eta_g}/G = \bs \Sigma_i \succ 0$.
Also, similar to our argument in the proof of Lemma~\ref{lem:con1}, under Assumption~\ref{assumption_consistency}bc, let $\bs \eta_g = (\eta_{g1}, \cdots, \eta_{gK})$, then
$\EE{\eta_{gk}^{2 + \delta/2}}$ is uniformly bounded across all genes $g$. 

Further under Assumption~\ref{assmp:dependence}, 
$\Cov{\sum_{g \in \mathcal{V}^c}\bs \eta_g} = o(G)$ as $|\mathcal{V}^c| = o(\sqrt G)$. Thus $\sum_{g \in \mathcal{V}^c}\bs \eta_g = o_p(\sqrt G)$ and
$$\frac{1}{G}\bs \phi(\bs \beta_i) = \frac{1}{G} \sum_{g \in \mathcal{V}}\bs \eta_g + o_p\left(\frac{1}{\sqrt G}\right).$$

Now let $\bm t \in \mathbb{R}^K$ be a non-random vector with $\|\bm t\|_2 = 1$. Then under Assumption~\ref{assmp:dependence}, $\{\bs 
\eta_g^\top \bs t, g \in \mathcal V\}$ forms a dependency graph with maximum degree $D = O(1)$. 
Additionally, we have
\[
\max_g\EE{(\bs \eta_g^\top \bs t)^{2 + \delta/2}} \leq c
\]
for some constant $c$.
Also, as $\Var{\sum_{g \in \mathcal V} \bs \eta_g^\top\bs t}/G  = \Var{\bs\phi(\bs \beta_i)\bs t}/G $ + o(1), we have 
$$\lim_{G\to \infty}\Var{\sum_{g \in \mathcal V} \bs \eta_g^\top\bs t}/G = \bm t^\top \bm\Sigma_{i} \bm t > 0.$$ Using Lemma~\ref{lem:chenzhao}, we have
$$\frac{1}{\sqrt G}\left(\sum_{g \in \mathcal V} \bs \eta_g^\top\bs t
\right) \overset{d}{\to} N(0, \bm t^\top \bm\Sigma_{i} \bm t).$$
Then, using the Cramer-wold theorem, we can obtain
$$\frac{1}{\sqrt G}\sum_{g \in \mathcal{V} } \bs \eta_g \overset{d}{\to} \mathcal{N}(\bs 0, \bs \Sigma_i).$$
which implies that
$$\frac{1}{\sqrt G} \bs \phi(\bs \beta_i) \overset{d}{\to} \mathcal{N}(\bs 0, \bs \Sigma_i).$$

\end{proof}

\begin{proof}[Proof of Theorem \ref{thm:clt}]

Notice that 
$$\sqrt G\left(\widehat {\bs \beta}_i - \bs \beta_i\right) = \widehat {\bs \Omega}^{-1} \bs \phi(\bs \beta_i)/\sqrt G $$

Then combining Theorem~\ref{thm:consistency} and Lemma~\ref{lem:score}, we have 
\begin{equation}\label{eq:CLT_beta}
    \sqrt G(\widehat {\bs \beta}_i - \bs \beta_i) \overset{d}{\to} \mathcal{N}(\bs 0, \bs \Omega^{-1}\bs \Sigma_i \bs \Omega^{-1})
\end{equation}
Next, notice that for either the truncated on the constrained estimator, $\widehat{\bs \beta}_i^\star \neq \widehat {\bm \beta}_i$ only when at least one $\widehat \beta_{ik} <0$. For a target individual $i$, if $p_{ik} > 0$ for any $k$, then $\beta_{ik}> 0$ for any $k$. Thus, for any $\epsilon > 0$,
$$\PP{\|\sqrt G(\widehat{\bs \beta}_i^\star - \widehat{\bs \beta}_i)\|_2 > \epsilon} \leq \PP{\widehat{\bs \beta}_i^\star \neq \widehat{\bs \beta}_i} \leq  \sum_{k = 1}^K\PP{\widehat{\beta}_{ik} < 0} \overset{G \to \infty}{\to} 0$$
where the last limit is due to the consistency of $\widehat \beta_{ik}$. This indicates that $\sqrt G(\widehat{\bs \beta}_i^\star - \widehat{\bs \beta}_i) \overset{p}{\to} \bs 0$, thus 
$$\sqrt G(\widehat {\bs \beta}_i^\star - \bs \beta_i) \overset{d}{\to} \mathcal{N}(\bs 0, \bs \Omega^{-1}\bs \Sigma_i \bs \Omega^{-1})$$

Finally, the cell type proportions $\widehat{\bs p}_i = \bs g(\widehat{\bs\beta}^\star_i)$ is the standardized $\widehat{\bs\beta}^\star_i$. Using the Delta method, we have 
\begin{equation}
    \sqrt{G}(\widehat{\bs p}_i - \bs p_i)\overset{d}{\to} N(\bs 0, \nabla\bs g( \bs\beta_i)\bs\Omega^{-1}\bs\Sigma_i\bs\Omega^{-1}\nabla\bs g(\bs\beta_i)^\top).
\end{equation}
\end{proof}

\begin{proof}[Proof of Corollary~ \ref{cor:CLT_softplus}]

Given the CLT of $\widehat{\bs\beta}_i$ in \eqref{eq:CLT_beta}, and the fact that 
$\widehat{\bs p}_i^{(a)} = \bs g(\widehat{\bs\beta}_i^{(a)}) = \bs g\circ\bs h(\widehat{\bs\beta}_i)$ is smooth function of $\widehat{\bs\beta}_i$ with the Jacobian matrix 
$ \nabla \bs g\circ\bs h(\bs \beta_i) = \nabla \bs g(\bs\beta_i^{(a)})\bs \Gamma.$
Thus, we complete the proof using the Delta method.

\end{proof}

\subsubsection{Asymptotic Normality of  $\mathbf{A}_N$}

\label{subsec:proof-of-multiple-individual-clt}

For simplicity, we use $\bs p_i^0$ and $\bs h^0(\cdot)$ to denote $\bs p_{i,1:(K-1)}$ and $\bs h(\cdot)_{1:(K-1)}$.
\begin{assumption}[Regularity Conditions for Asymptotic Normality]
\label{assmp:for-clt}
The following holds for the link function $\bs h$ and the estimating equation \eqref{eq:estimating-equation-for-A}:
\begin{enumerate}[label=\alph*.]
\item Covariates $\bs f_i$ satisfy
\begin{enumerate}[label=(\roman*)]
\item For any $\bs b,~\bs A$, it holds that
\[
\frac{1}{N}\sum_{i=1}^N\left[\left\{\bs p_i^0-\bs h^0(\bs b+\bs A^\top\bs f_i)\right\}\tilde{\bs f}_i^\top\right]\overset{p}{\to}\bs L(\bs A,\bs b),
\]
and $(\bs b_0,\bs A_0)$ is its unique root such that $\bs L(\bs A,\bs b)=\bs 0$.
\item $\bs D=\lim_{N\to\infty}\frac{1}{N}\sum_{i=1}^N\{(\tilde{\bs f}_i\tilde{\bs f}_i^\top)\otimes \bs D_i\}\succ\bs 0$, where $\bs D_i=\mathrm{Cov}(\bs p_i^0)$,
\item there exists a constant $C_1$ such that $\max_i\|\bs f_i\|_2\le C_1$
\end{enumerate}
\item The function $h^0$ is continuously differentiable. Its derivative $\dot{h}^0$ satisfies that 
$$\bs L_{\bs B_0}=\lim_{N\to\infty}\frac{1}{N}\sum_{i=1}^N\{(\tilde{\bs f}_i\tilde{\bs f}_i^\top)\otimes \dot{h}^0(\bs b_0+\bs A_0^\top\bs f_i)\}$$ is invertible.
\item The equation $\bs L_N(\bs A, \bs b; \bs P) = \bs 0$ has a unique solution for any $\bs P$.
\end{enumerate}
\end{assumption}

Specifically, the asymptotic normality of $\bs A_N$ is as follows:
\begin{theorem}
\label{lemma:An-clt}
    Under Assumptions~\ref{assmp:random_prop} and \ref{assmp:for-clt}, when $N\to \infty$, we have
    $$\sqrt N\vect{({\bs A}_N^\top - \bs A_0^\top)} \overset{d}{\to} \mathcal{N}\left(\bs 0,(\bs L_{\bs B_0}^{-1}\bs D \bs L_{\bs B_0}^{-\top})_{I_{\bs A}\times I_{\bs A}}\right),$$
where $I_{\bs A}=\{2,\dots,S+1\}$.
\end{theorem}

\begin{proof}
For simplicity, we define $\bs B=(\bs b,\bs A^\top)$, and $\bs B_N$ and $\bs B_0$ are defined correspondingly. Then $\bs L_N(\bs A, \bs b;\bs P)$ is rewritten as $\bs L_N(\bs B;\bs P)$, and $\bs L(\bs A,\bs b)=\bs L(\bs B)$.

First, we prove the consistency. By Assumption~\ref{assmp:for-clt}, $\bs L_N(\bs B;\bs P)\overset{p}{\to}\bs L(\bs B)$.
Since $\bs h$ is continuous, $\bs L_N(\bs B;\bs P)$ is continuous. 
Moreover, since both $\bs L_N(\bs B;\bs P)=\bs 0$ and $\bs L(\bs B)=\bs 0$ have unique roots, by Lemma 5.10 of \citet{van2000asymptotic}, $\bs B_N\overset{p}{\to}\bs B_0$ as $N\to\infty$. That is, $\bs B_N-\bs B_0 = o_p(1)$.

Then we do Taylor expansion at $\bs B_0$,
\begin{align*}
\bs 0=\vect\left\{\bs L_N(\bs B_N;\bs P)\right\}
=\vect\left\{\bs L_N(\bs B_0;\bs P)\right\}+\dot{\bs L}_N(\bs B_0)\vect\left(\bs B_N-\bs B_0\right)+O\left(\left\|\bs B_N-\bs B_0\right\|_F^2\right),
\end{align*}
where
\[
\dot{\bs L}_N(\bs B_0)=-\frac{1}{N}\sum_{i=1}^N\tilde{\bs f}_i\tilde{\bs f}_i^\top\otimes \dot{\bs h^0}(\bs B_0\tilde{\bs f}_i)
\]
and it is independent of $\bs P$.
By Assumption~\ref{assmp:for-clt}b, $\bs L_{\bs B_0}=-\lim_{N\to\infty}\dot{\bs L}_N(\bs B_0)$ exists and it is invertible.
Since
\[
\lim_{N\to\infty}\dot{\bs L}_N(\bs B_0)+o(1)=-\bs L_{\bs B_0},
\]
for sufficiently large $N$s,
\[
\sqrt{N}\vect\left(\bs B_N-\bs B_0\right)=-\left\{\dot{\bs L}_N(\bs B_0)+o_p(1)\right\}^{-1}\sqrt{N}\vect\left\{\bs L_N(\bs B_0;\bs P)\right\}.
\]
Now we only need to investigate the asymptotic normality of the following quantity:
\[
\vect\left\{\bs L_N(\bs B_0;\bs P)\right\}=\frac{1}{N}\sum_{i=1}^N\vect\left[\left\{\bs p_i^0-\bs h^0(\bs B_0 \tilde{\bs f}_i)\right\}\tilde{\bs f}_i^\top\right].
\]
For each term, the variance is
\begin{align*}
&\mathrm{Cov}\left(\vect\left[\left\{\bs p_i^0-\bs h^0(\bs B_0 \tilde{\bs f}_i)\right\}\tilde{\bs f}_i^\top\right]\right)
=\mathrm{Cov}\left\{\vect\left(\bs p_i^0\tilde{\bs f}_i^\top\right)\right\}
=\mathrm{Cov}\left\{(\tilde{\bs f}_i\otimes \bs I_K)\bs p_i^0\right\}\\
&=(\tilde{\bs f}_i\otimes \bs I_K)\bs D_i (\tilde{\bs f}_i^\top\otimes \bs I_K)
=\tilde{\bs f}_i\tilde{\bs f}_i^\top\otimes\bs D_i.
\end{align*}
Then the variance of $\sqrt{N}\vect\left\{\bs L_N(\bs B_0;\bs P)\right\}$ is
\[
\frac{1}{N}\sum_{i=1}^N\tilde{\bs f}_i\tilde{\bs f}_i^\top\otimes\bs D_i\to\bs D\succ0.
\]
For any $\bs t\in\mathbb{R}^{K(S+1)}$, we check the asymptotic normality of $\sqrt{N}\bs t^\top\vect\left\{\bs L_N(\bs B_0;\bs P)\right\}$. The Lyapunov condition is
\begin{align*}
&\lim_{N\to\infty}\frac{\sum_{i=1}^N\mathbb{E}\left(\bs t^\top\vect\left[\left\{\bs p_i^0-\bs h^0(\bs B_0 \tilde{\bs f}_i)\right\}\tilde{\bs f}_i^\top\right]\right)^{2+\delta}}{\left[\sum_{i=1}^N\bs t^\top\mathrm{Cov}\left(\vect\left[\left\{\bs p_i^0-\bs h^0(\bs B_0 \tilde{\bs f}_i)\right\}\tilde{\bs f}_i^\top\right]\right)\bs t\right]^{1+\frac{\delta}{2}}}\\
&\le\lim_{N\to\infty}\frac{\|\bs t\|_2^{2+\delta}\sum_{i=1}^N\mathbb{E}\left\|\vect\left[\left\{\bs p_i^0-\bs h^0(\bs B_0 \tilde{\bs f}_i)\right\}\tilde{\bs f}_i^\top\right]\right\|_2^{2+\delta}}{\left[\bs t^\top\left\{\sum_{i=1}^N\mathrm{Cov}\left(\vect\left[\left\{\bs p_i^0-\bs h^0(\bs B_0 \tilde{\bs f}_i)\right\}\tilde{\bs f}_i^\top\right]\right)\right\}\bs t\right]^{1+\frac{\delta}{2}}}\\
&=\lim_{N\to\infty}\frac{\|\bs t\|^{2+\delta}\sum_{i=1}^N\mathbb{E}\left\|\left(\tilde{\bs f}_i\otimes\bs I_K\right)\left(\bs p_i^0-\bs h^0(\bs B_0 \tilde{\bs f}_i)\right)\right\|_2^{2+\delta}}{\left(\bs t^\top\left[\sum_{i=1}^N\tilde{\bs f}_i\tilde{\bs f}_i^\top\otimes\bs D_i\right]\bs t\right)^{1+\frac{\delta}{2}}}\\
&\le\lim_{N\to\infty}\frac{1}{N^{\delta/2}}\frac{\frac{1}{N}
\sum_{i=1}^N\|\tilde{\bs f}_i\|_2^{2+\delta}\mathbb{E}\left\|\bs p_i^0-\bs h^0(\bs B_0\tilde{\bs f}_i)\right\|_2^{2+\delta}}{\sigma_{\mathrm{min}}\left(\frac{1}{N}\sum_{i=1}^N\tilde{\bs f}_i\tilde{\bs f}_i^\top\otimes\bs D_i\right)^{2+\delta}}
=0,
\end{align*}
where the last equality is because $\|\bs f_i\|_2$ are uniformly upper-bounded (see Assumption~\ref{assmp:for-clt}a) and the fact that $\|\bs p_i\|_1=1$.
Since $\bs t$ is arbitrary, with the Cram\'er-Wold theorem, it follows that
\begin{equation}
\label{eq:LN-clt}
\sqrt{n}\vect\left\{\bs L_N(\bs B_0;\bs P)\right\}
\overset{d}{\to}\mathcal{N}\left(\bs 0, \bs D\right),
\end{equation}
where $\bs D=\lim_{N\to\infty}\frac{1}{N}\sum_{i=1}^N(\tilde{\bs f}_i\tilde{\bs f}_i^\top)\otimes\bs D_i$. 
Hence
\begin{equation}
\label{eq:b-clt}
\sqrt{n}\vect\left(\bs B_N-\bs B_0\right)
\overset{d}{\to}\mathcal{N}\left(\bs 0, \bs L_{\bs B_0}^{-1}\bs D \bs L_{\bs B_0}^{-\top}\right).
\end{equation}
Then, by the definition of $\bs B$, we have
$$\sqrt N\vect{(\widehat {\bs A}^\top - \bs A^\top)} \overset{d}{\to} \mathcal{N}\left(\bs 0,(\bs L_{\bs B_0}^{-1}\bs D \bs L_{\bs B_0}^{-\top})_{I_{\bs A}\times I_{\bs A}}\right),$$
where $I_{\bs A}=\{2,\dots,S+1\}$ is the set of indices corresponding to $\bs A$.
\end{proof}

\subsubsection{Proof of Theorem~\ref{thm:two_group}}

 We investigate the difference $\widehat{\bs A}-\bs A_N$ in the following part. We start with several preliminary lemmas and ends with the proof of Theorem~\ref{thm:two_group}.
\begin{lemma}\label{lem:normalizing}
Let the normalization function be $\bs g(\bs z) = \frac{\bs z}{\bs z^\top \bs 1}$ where $\bs z$ is some non-negative k-dimensional vector. Then for any $\bs z$ and any $\bs z_0$ satisfying $\|\bs z_0\|_1 > \delta$ with some $\delta > 0$, there exists some $L_0 > 0$ satisfying
$$\|\bs g(\bs z) - \bs g(\bs z_0)\|_2 \leq L_0\|\bs z - \bs z_0\|_2$$
and some $L_1 >0$ satisfying that
$$\|\bs g(\bs z) - \bs g(\bs z_0) - \nabla \bs g(\bs z_0)(\bs z - \bs z_0)\|_2 \leq L_1\|\bs z - \bs z_0\|_2^2$$
\end{lemma}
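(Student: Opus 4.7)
The plan is to prove both inequalities via a case analysis on the size of $\|\bs z\|_1$, combined with the mean-value inequality for the first bound and Taylor's theorem with integral remainder for the second. First I would compute the Jacobian of $\bs f$ by direct differentiation, obtaining
\[
\nabla \bs f(\bs w) = \frac{1}{\bs 1^T \bs w}\left(\bs I - \bs f(\bs w)\bs 1^T\right),
\]
which is well-defined on the open set $\{\bs w : \bs 1^T \bs w \neq 0\}$. A similar calculation gives a closed form for the second derivative of each component $f_k$, whose entries scale like $1/(\bs 1^T \bs w)^2$. On the slab of non-negative vectors with $\|\bs w\|_1 \geq \delta/2$, both are bounded in operator norm by constants depending only on $\delta$ and $K$.

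For the Lipschitz bound I split into two cases. When $\|\bs z\|_1 \geq \delta/2$, the convex combination $\bs w_t = (1-t)\bs z_0 + t\bs z$ satisfies $\|\bs w_t\|_1 = (1-t)\|\bs z_0\|_1 + t\|\bs z\|_1 \geq \delta/2$ for all $t \in [0,1]$ by non-negativity, so the mean-value inequality yields $\|\bs f(\bs z) - \bs f(\bs z_0)\|_2 \leq \sup_t \|\nabla \bs f(\bs w_t)\|_{\mathrm{op}}\|\bs z - \bs z_0\|_2 \leq (C/\delta)\|\bs z - \bs z_0\|_2$. When $\|\bs z\|_1 < \delta/2$, the reverse triangle inequality gives $\|\bs z - \bs z_0\|_1 \geq \|\bs z_0\|_1 - \|\bs z\|_1 > \delta/2$, hence $\|\bs z - \bs z_0\|_2 \geq \delta/(2\sqrt K)$. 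Since $\bs f(\bs z)$ and $\bs f(\bs z_0)$ both lie on the probability simplex, $\|\bs f(\bs z) - \bs f(\bs z_0)\|_2 \leq \sqrt 2$, and the inequality holds trivially with constant $2\sqrt{2K}/\delta$. Taking the maximum of the two constants delivers the Lipschitz bound.

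For the quadratic bound I apply the same case split. On the good region $\|\bs z\|_1 \geq \delta/2$, Taylor's theorem with integral remainder
\[
\bs f(\bs z) - \bs f(\bs z_0) - \nabla \bs f(\bs z_0)(\bs z - \bs z_0) = \int_0^1 (1-t)\,\bs H(\bs w_t)[\bs z - \bs z_0, \bs z - \bs z_0]\,dt,
\]
combined with the uniform bound on the second derivative $\bs H$ across $\bs w_t$, immediately gives the desired $L_1\|\bs z - \bs z_0\|_2^2$ estimate. On the bad region $\|\bs z\|_1 < \delta/2$, I combine the already-established bound $\|\bs f(\bs z) - \bs f(\bs z_0) - \nabla \bs f(\bs z_0)(\bs z - \bs z_0)\|_2 \leq \sqrt 2 + (C/\delta)\|\bs z - \bs z_0\|_2$ with the lower bound $\|\bs z - \bs z_0\|_2 \geq \delta/(2\sqrt K)$ to absorb both terms into a constant multiple of $\|\bs z - \bs z_0\|_2^2$.

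The main obstacle is that $\bs f$ is not globally Lipschitz or $C^2$ with uniform bounds on $\mathbb{R}^K_{\geq 0}$, since its derivatives blow up as $\bs 1^T \bs z \to 0$, so one cannot simply apply a single derivative estimate across all non-negative $\bs z$. The case split above is the essential workaround: in the degenerate regime where smoothness estimates fail, the desired inequalities hold for free because the distance $\|\bs z - \bs z_0\|_2$ is itself bounded below by a constant depending only on $\delta$ and $K$, so any trivial upper bound on the left-hand side can be reexpressed as a linear or quadratic bound in $\|\bs z - \bs z_0\|_2$.
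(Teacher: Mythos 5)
Your proof is correct, but it takes a genuinely different route from the paper's. The paper avoids any case analysis: for the Lipschitz bound it expands $\|\bs f(\bs x)-\bs f(\bs y)\|_2^2$ algebraically over a common denominator and derives a global inequality of the form $\|\bs f(\bs x)-\bs f(\bs y)\|_2^2 \le \frac{2(1+k)}{\|\bs x\|_1^2}\|\bs x-\bs y\|_2^2$, valid for \emph{every} non-negative $\bs y$ and requiring only $\|\bs x\|_1=\|\bs z_0\|_1>\delta$; for the quadratic bound it exploits the exact identity $\bs f(\bs z)-\bs f(\bs z_0)-\nabla\bs f(\bs z_0)(\bs z-\bs z_0)=\frac{(\bs z-\bs z_0)^T\bs 1}{\bs z_0^T\bs 1}\left(\bs f(\bs z_0)-\bs f(\bs z)\right)$, which reduces the second-order estimate directly to the first-order one with no second-derivative computation at all. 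You instead invoke the generic machinery (mean-value inequality and Taylor's theorem with integral remainder along the segment), which forces the two-case split because the derivatives of $\bs f$ blow up as $\bs 1^T\bs z\to 0$; your handling of the degenerate regime --- lower-bounding $\|\bs z-\bs z_0\|_2$ by $\delta/(2\sqrt K)$ via the reverse triangle inequality and using the trivial simplex bound $\sqrt 2$ to absorb everything into linear or quadratic terms --- is sound, as is the check that the segment $\bs w_t$ stays in the region $\bs 1^T\bs w\ge\delta/2$. Your argument is more portable (it would apply to any map that is $C^2$ with controlled derivatives away from a singular set bounded away from $\bs z_0$), while the paper's is shorter, requires no second derivatives, and yields explicit constants. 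One cosmetic caveat shared by both arguments: $\bs f(\bs z)$ is undefined at $\bs z=\bs 0$, so the statement implicitly requires $\bs z\neq\bs 0$.
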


\begin{proof}
By definition, we can calculate that 
$$\nabla \bs g(\bs z_0) = \frac{1}{\bs z_0^\top \bs 1}\left(\bs I - \frac{\bs z_0\bs 1^\top}{\bs z_0^\top \bs 1}\right)$$
where $\bs I$ is a $k \times k$ identity matrix. Then we have
\begin{equation}\label{eq:taylor}
    \bs g(\bs z) - \bs g(\bs z_0) - \nabla \bs g(\bs z_0)(\bs z - \bs z_0) =
    \frac{(\bs z - \bs z_0)^\top \bs 1}{\bs z_0^\top \bs 1}\left(\bs g(\bs z_0) - \bs g(\bs z)\right).
\end{equation}
In general, notice that for any two k-dimensional non-negative  vectors $\bs x$ and $\bs y$
\begin{align*}
    \|\bs g(\bs x) - \bs g(\bs y)\|_2^2 & = \frac{\sum_l(\sum_k y_k x_l - \sum_k x_k y_l)^2}{(\sum_k x_k)^2(\sum_k y_k)^2}\\
    & = \frac{\sum_l\big(\sum_k y_k (x_l - y_l) + (\sum_k y_k- \sum_k x_k) y_l\big)^2}{(\sum_k x_k)^2(\sum_k y_k)^2}\\
   & \leq 2\frac{(\sum_k y_k)^2\sum_l (x_l - y_l)^2 + (\sum_k y_k- \sum_k x_k)^2 \sum_ly_l^2}{(\sum_k x_k)^2(\sum_k y_k)^2}
\end{align*}
Notice that $\sum_l y_l^2 < (\sum_k y_k)^2$ as $\bs y$ is non-negative and $(\sum_k y_k- \sum_k x_k)^2\leq k\sum_k(y_k - x_k)^2$, we have

$$\|\bs g(\bs x) - \bs g(\bs y)\|_2^2 \leq \|\bs x - \bs y\|_2^2 \frac{2(1 + k)}{\|\bs x\|_1}$$

As $\|\bs z_0\|_1$ is bounded below, there exists some $L_0 > 0$ so that  $\|\bs g(\bs z_0) - \bs g(\bs z)\|_2\leq L_0 \|\bs z - \bs z_0\|_2$.

As $\|\bs z - \bs z_0\|_1 \leq \sqrt k \|\bs z - \bs z_0\|_2$, from \eqref{eq:taylor} we also have
\begin{equation*}
    \|\bs g(\bs z) - \bs g(\bs z_0) - \nabla \bs g(\bs z_0)(\bs z - \bs z_0)\|_2 \leq 
    \frac{\|\bs z - \bs z_0)\|_1}{\|\bs z_0\|_1}L_0\|\bs z - \bs z_0\|_2\leq \frac{L_0\sqrt k}{\delta}\|\bs z - \bs z_0\|_2^2
\end{equation*}
\end{proof}

\begin{lemma}\label{lem:phi_unif}
Under Assumptions~\ref{asp:homo_popu}-\ref{assmp:variance} and Assumptions~\ref{assmp:group},  if $N/G^2 \to 0$ when $G \to \infty$, we then have 
$$\frac{1}{N}\sum_{i = 1}^N\big\|\bs \phi(\bs \beta_i)\big\|_2 = O_p(\sqrt G),
\quad \frac{1}{N}\sum_{i = 1}^N\big\|\bs \phi(\bs \beta_i)\big\|_2^2 = O_p(G), \quad \max_{i}\big\|\bs \phi(\bs \beta_i)\big\|_2 = o_p(G)$$

\end{lemma}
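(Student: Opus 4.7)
The plan is to decompose $\bs\phi(\bs\beta_i) = \widehat{\bs U}^T\bs W\bs\epsilon_i - \bs H\bs\beta_i$ exactly as in the definition used throughout Section~\ref{sec:inference_one}, and to exploit the fact that $\widehat{\bs U}$ and $\bs H$ do not depend on $i$. The dependence on $i$ enters only through $\bs\epsilon_i$ (which are mutually independent conditional on the reference data) and through the bounded vectors $\bs\beta_i$ (bounded above and below by Assumption~\ref{assmp:group}b). This structural separation reduces all three statements to (i) a uniform control of $\bs H$ that is the \emph{same} for every $i$, and (ii) sharp moment control of the per-individual term $\widehat{\bs U}^T\bs W\bs\epsilon_i$.

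First I would handle the $\bs H\bs\beta_i$ piece. By Lemma~\ref{lem:con1}, $\bs H^\star = O_p(\sqrt G)$, and by Lemma~\ref{lem:gap}, $\bs H - \bs H^\star$ is a combination of the $O_p(1/\sqrt G)$ scaling errors $\widehat\gamma_j - \gamma_j$ multiplying $\bs U^T\bs W\bs U = O(G)$, plus an $O_p(1)$ remainder; therefore $\|\bs H\|_F = O_p(\sqrt G)$. Since $K$ is fixed and $\max_i \|\bs\beta_i\|_2 \leq M\sqrt K$ by Assumption~\ref{assmp:group}b, we obtain the uniform bound $\max_i\|\bs H\bs\beta_i\|_2 = O_p(\sqrt G)$, which is already $o_p(G)$. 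This also gives $\frac{1}{N}\sum_i\|\bs H\bs\beta_i\|_2^2 = O_p(G)$ trivially.

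The main work, and the step I expect to be the main obstacle, is to obtain the uniform-over-$i$ bound $\max_i\|\widehat{\bs U}^T\bs W\bs\epsilon_i\|_2 = o_p(G)$, because the crude second-moment argument only gives $O_p(\sqrt{NG})$ which is not $o(G)$ under $N/G^2\to 0$. To sharpen this, I plan to bound the fourth moment of each coordinate $X_{ik} = \sum_g w_g\widehat\mu_{gk}\epsilon_{ig}$. Using the mutual independence of $\widehat{\bs U}$ and $\bs\epsilon_i$, and expanding $\E[X_{ik}^4]$ into a sum over $4$-tuples $(g_1,\dots,g_4)$, the dependency graph in Assumption~\ref{assmp:dependence} with maximal degree $O(1)$ forces $\E[\epsilon_{ig_1}\epsilon_{ig_2}\epsilon_{ig_3}\epsilon_{ig_4}] = 0$ unless the four indices can be grouped so that every index has a neighbor in the group. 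Standard book-keeping (with a negligible contribution from $g\in\mathcal V^c$ since $|\mathcal V^c|=o(\sqrt G)$) leaves only the $O(G)$ tuples with all four indices in a common neighborhood and the $O(G^2)$ tuples forming two pairs; combining this with the uniformly bounded $4+\delta$ moments on $\widehat\mu_{gk}$ and $\epsilon_{ig}$ from Assumption~\ref{assumption_consistency}b and Cauchy--Schwarz across the $\widehat{\bs U}$ factors yields $\E\|\widehat{\bs U}^T\bs W\bs\epsilon_i\|_2^4 = O(G^2)$ uniformly in $i$. A union bound with Markov then gives, for any fixed $t>0$,
\[
\PP{\max_i\|\widehat{\bs U}^T\bs W\bs\epsilon_i\|_2 > tG} \leq \frac{N\cdot O(G^2)}{t^4 G^4} = O\!\left(\frac{N}{G^2}\right)\to 0,
\]
which delivers the required $o_p(G)$ uniform bound.

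Finally, for the averaged statements, I would use $\E\|\widehat{\bs U}^T\bs W\bs\epsilon_i\|_2^2 = O(G)$ (the second-moment calculation already done inside the proof of Lemma~\ref{lem:con1}) together with Markov applied to the arithmetic mean to conclude $\frac{1}{N}\sum_i\|\widehat{\bs U}^T\bs W\bs\epsilon_i\|_2^2 = O_p(G)$; combined with the $O_p(G)$ bound on $\frac{1}{N}\sum_i\|\bs H\bs\beta_i\|_2^2$ this gives the second claim, and Cauchy--Schwarz $\frac{1}{N}\sum_i\|\bs\phi(\bs\beta_i)\|_2\leq \bigl(\frac{1}{N}\sum_i\|\bs\phi(\bs\beta_i)\|_2^2\bigr)^{1/2}$ gives the first.
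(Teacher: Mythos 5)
Your proposal is correct and follows essentially the same route as the paper: the same decomposition $\bs\phi(\bs\beta_i) = \widehat{\bs U}^T\bs W\bs\epsilon_i - \bs H\bs\beta_i$, the same $O_p(\sqrt G)$ control of $\bs H$ via Lemmas~\ref{lem:con1}--\ref{lem:gap}, and the same key step of a fourth-moment bound $\E\|\widehat{\bs U}^T\bs W\bs\epsilon_i\|_2^4 = O(G^2)$ combined with a union bound to exploit $N/G^2 \to 0$. The only detail you gloss over is that the moment bounds must first be established for $\widehat{\bs U}^{\star T}\bs W\bs\epsilon_i$ (with the true $\gamma_j$ in the denominator, since Assumption~\ref{assumption_consistency}b controls moments of $\bar y_{jgk}$ rather than of $\bar y_{jgk}/\widehat\gamma_j$) and the difference $\widehat{\bs U}^T\bs W\bs\epsilon_i - \widehat{\bs U}^{\star T}\bs W\bs\epsilon_i$ bounded separately via $\widehat\gamma_j/\gamma_j - 1 = O_p(1/\sqrt G)$, exactly as the paper does at the end of its proof.
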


\begin{proof}[Proof of Lemma~\ref{lem:phi_unif}]
Recall that 
$\bm \phi(\bm \beta_i)= \widehat{\bm U}^\top \bm W \bm \epsilon_i' - \bm H \bm{\beta}_i$,
so 
$$\big\|\bs \phi(\bs \beta_i)\big\|_2 \leq  \left\|\widehat{\bm U}^\top \bm W \bm \epsilon_i'\right\|_2 + \left\|\bm H\right\|_2\|\bm{\beta}_i\|_2$$
$$\big\|\bs \phi(\bs \beta_i)\big\|_2^2 \leq  2\left\|\widehat{\bm U}^\top \bm W \bm \epsilon_i'\right\|_2^2 + 2\left\|\bm H\right\|_2^2\|\bm{\beta}_i\|_2^2$$
Using Lemma~\ref{lem:con1} and Lemma~\ref{lem:gap}, we have $\|\bm H\|_2 = O_p\left(\sqrt G\right)$. Also, as $\bs \beta_i$ is always non-negative,  $\|\bm \beta_i\|_2^2 \leq \|\bs \beta_i\|_1^2= \gamma_i^2$ since by definition $\bs \beta_i = \gamma_i \bs p_i$.  Under Assumption~\ref{assmp:group}e,
\[
\frac{1}{N}\sum_{i = 1}^N\|\bs\beta_i\|_2 \leq \frac{1}{N}\sum_{i = 1}^N\gamma_i =  O_p(1),
\]
\[
\frac{1}{N}\sum_{i = 1}^N\|\bs\beta_i\|_2^2 \leq \frac{1}{N}\sum_{i = 1}^N\gamma_i^2 =  O_p(1),
\]
\[
\max_i \|\bs\beta_i\|_2 \leq \max_i \gamma_i = O_p(1).
\]
So $
\left\|\bm H\right\|_2\max_i\|\bm{\beta}_i\|_2 = o_p\left(G\right)$, 
$
\left\|\bm H\right\|_2\frac{1}{N}\sum_{i = 1}^N\|\bm{\beta}_i\|_2 = O_p\left( \sqrt G\right)$ and $\left\|\bm H\right\|_2^2\frac{1}{N}\sum_{i = 1}^N\|\bm{\beta}_i\|_2^2 = O_p\left(G\right)$.

Next, consider $\widehat{\bm U}^{\star \top} \bm W \bm \epsilon_i' = \sum_gw_g\epsilon_{ig}'\widehat{\bm \mu}_g^\star$ where $\widehat{\bm U}^
\star$ is defined as in Lemma~\ref{lem:con1}. 
Under Assumptions \ref{assumption_consistency}bc, 
$$\max_{i, g}\EE{\|w_g\epsilon_{ig}'\widehat{\bm \mu}_g^\star\|_2^4}\leq \max_{i, g}\left(w_g^4\EE{\epsilon_{ig}'^{4}}\EE{\|\widehat{\bm \mu}_g^\star\|_2^4}\right) \leq c$$ for some constant $c$. So all lower moments are also uniformly bounded. In addition, we have the inequality 
$$\max_i\EE{\left\|\widehat{\bm U}^{\star \top} \bm W \bm \epsilon_i'\right\|_2^4} \leq \max_i8\left(\EE{\left\|\sum_{g \in \mathcal{V}} w_g\epsilon_{ig}'\widehat{\bm \mu}_g^\star\right\|_2^4} + \EE{\left\|\sum_{g \in \mathcal{V}^c} w_g\epsilon_{ig}'\widehat{\bm \mu}_g^\star\right\|_2^4}\right).$$
Since $|\mathcal{V}^c| = o(\sqrt G)$, so we have $\max_i\EE{\left\|\sum_{g \in \mathcal{V}^c} w_g\epsilon'_{ig}\widehat{\bm \mu}_g^\star\right\|_2^4} = o(G^2)$. Also, under the sparse dependence structure in  Assumption~\ref{assmp:dependence} and the fact that $\bs \epsilon'_{i}$ is mutually independent from $\widehat{\bm U}^\star$ with $\EE{\bm\epsilon'_{i}} = \bm 0$, we have
\begin{align*}
   & \EE{\left\|\sum_{g \in \mathcal{V}}  w_g\epsilon'_{ig}\widehat{\bm \mu}_g^\star\right\|_2^4} \\
   = & 
\sum_{g \in \mathcal{V}} \EE{w_g^4\epsilon_{ig}'^4}\EE{\left\| \widehat{\bm \mu}_g^\star\right\|_2^4} + 
4\sum_{g_1 \neq g_2 \in \mathcal{V}} \EE{w_{g_1}^3w_{g_2}\epsilon_{ig_1}'^3\epsilon'_{ig_2}}\EE{\left\| \widehat{\bm \mu}_{g_1}^\star\right\|_2^2\widehat{\bm \mu}_{g_1}^{\star \top}\widehat{\bm \mu}_{g_2}^\star} 
\\ & +  \sum_{g_1 \neq g_2 \in \mathcal{V}} \EE{w_{g_1}^2w_{g_2}^2\epsilon_{ig_1}'^2\epsilon_{ig_2}'^2}
\EE{\left\| \widehat{\bm \mu}_{g_1}^\star\right\|_2^2\left\| \widehat{\bm \mu}_{g_2}^\star\right\|_2^2 + 2(\widehat{\bm \mu}_{g_1}^{\star \top}\widehat{\bm \mu}_{g_2}^\star)^2}  \\
& + 
2\sum_{g_1 \neq g_2 \neq g_3 \in \mathcal{V}} \EE{w_{g_1}^2w_{g_2}w_{g_3}\epsilon_{ig_1}'^2\epsilon'_{ig_2}\epsilon'_{ig_3}}\EE{\left\| \widehat{\bm \mu}_{g_1}^\star\right\|_2^2\widehat{\bm \mu}_{g_2}^{\star \top}\widehat{\bm \mu}_{g_3}^\star + 2\widehat{\bm \mu}_{g_1}^{\star \top}\widehat{\bm \mu}_{g_2}^\star\widehat{\bm \mu}_{g_1}^{\star \top}\widehat{\bm \mu}_{g_3}^\star}
\\ & +
\sum_{g_1 \neq g_2 \neq g_3\neq g_4 \in \mathcal{V}} \EE{w_{g_1}w_{g_2}w_{g_3}w_{g_4}\epsilon'_{ig_1}\epsilon'_{ig_2}\epsilon'_{ig_3}\epsilon'_{ig_4}}\EE{\widehat{\bm \mu}_{g_1}^{\star \top}\widehat{\bm \mu}_{g_2}^\star\widehat{\bm \mu}_{g_3}^{\star \top}\widehat{\bm \mu}_{g_4}^\star} \\
& = O(G) + O(G) + O(G^2) + O(G^2) + O(G^2) = O(G^2)
\end{align*}
The last term has an order of $O(G^2)$ as $\EE{\epsilon'_{ig_1}\epsilon'_{ig_2}\epsilon'_{ig_3}\epsilon'_{ig_4}}\neq 0$ only when every node has an edge. For any selected node $g_1$, there is at least one other node $g_2$ that has an edge with $g_1$ (so there are $O(1)$ choices of $g_2$), and the other two nodes are either connected to $g_1$ and $g_2$ (at most $O(1)$ choices), or are connected with each other (at most $O(G)$ choices).  
Similarly, we can also show 
$\max_i\EE{\left\|\widehat{\bm U}^{\star \top} \bm W \bm \epsilon'_i\right\|_2^4} = O(G^2)$. At the same time,  we can also obtain 
$\max_i\EE{\left\|\widehat{\bm U}^{\star \top} \bm W \bm \epsilon'_i\right\|_2^2} = O(G)$ and $\max_i\EE{\left\|\widehat{\bm U}^{\star \top} \bm W \bm \epsilon'_i\right\|_2} = O(\sqrt G)$.
Then for any $\epsilon > 0$,
$$\PP{\frac{1}{G}\max_i\left\|\widehat{\bm U}^{\star \top} \bm W \bm \epsilon'_i\right\|_2 > \epsilon}
\leq \sum_{i = 1}^N \frac{\EE{\left\|\widehat{\bm U}^{\star \top} \bm W \bm \epsilon'_i\right\|_2^4}}{G^4\epsilon^4} = O\left(\frac{N}{G^2}\right) \to 0$$
and for any $\Delta > 0$ and any $G$, there is a constant $\tilde C$ for when $N$ is sufficiently large
$$\PP{\frac{1}{\sqrt GN}\sum_{i = 1}^N\left\|\widehat{\bm U}^{\star \top} \bm W \bm \epsilon'_i\right\|_2 > \Delta }
\leq \frac{\EE{\sum_{i}\left\|\widehat{\bm U}^{\star \top} \bm W \bm \epsilon'_i\right\|_2 }}{ \Delta N\sqrt G}\leq \frac{\max_i\EE{\left\|\widehat{\bm U}^{\star \top} \bm W \bm \epsilon'_i\right\|_2 }}{ \Delta \sqrt G} \leq \frac{\tilde C}{\Delta}.$$
$$\PP{\frac{1}{GN}\sum_{i = 1}^N\left\|\widehat{\bm U}^{\star \top} \bm W \bm \epsilon'_i\right\|_2^2 > \Delta }
\leq \frac{\EE{\sum_{i}\left\|\widehat{\bm U}^{\star \top} \bm W \bm \epsilon'_i\right\|_2^2 }}{ \Delta NG}\leq \frac{\max_i\EE{\left\|\widehat{\bm U}^{\star \top} \bm W \bm \epsilon'_i\right\|_2^2 }}{ \Delta G}  \leq \frac{\tilde C}{\Delta}.$$
Thus, we have $\max_i\big\|\widehat{\bm U}^{\star \top} \bm W \bm \epsilon'_i\big\|_2 = o_p(G)$, $\sum_{i = 1}^N\big\|\widehat{\bm U}^{\star \top} \bm W \bm \epsilon'_i\big\|_2/N = O_p(\sqrt G)$ and the relationship $\sum_{i = 1}^N\big\|\widehat{\bm U}^{\star \top} \bm W \bm \epsilon'_i\big\|_2^2/N = O_p(G)$.

Finally, consider the term $\widehat{\bm U}^\top \bm W \bm \epsilon'_i - \widehat{\bm U}^{\star \top} \bm W \bm \epsilon'_i$. 
Notice that, 
\begin{align*}
  \widehat{\bm U}^\top \bm W \bm \epsilon'_i - \widehat{\bm U}^{\star \top} \bm W \bm \epsilon'_i 
   =  \frac{1}{M}\sum_{j = 1}^M\left(\frac{\rr\gamma_{j}}{\widehat\gamma_{j}}- 1\right)\sum_{g = 1}^G w_g \epsilon'_{ig}\frac{\rr{\bm z}_{jg}}{\rr\gamma_{j}}.
\end{align*}
Similar to our previous argument, we also have 
\begin{equation*}
\max_i\left\|\sum_{g = 1}^G w_g \epsilon'_{ig}\frac{\rr{\bm z}_{jg}}{\rr\gamma_{j}}\right\|_2 = o_p(G)
\end{equation*}
and 
\begin{equation*}
    \frac{1}{N}\sum_{i = 1}^N\left\|\sum_{g = 1}^G w_g \epsilon'_{ig}\frac{{\bm z}_{jg}}{\rr\gamma_{j}}\right\|_2 = O_p(\sqrt G), \quad\frac{1}{N}\sum_{i = 1}^N\left\|\sum_{g = 1}^G w_g \epsilon'_{ig}\frac{\rr{\bm z}_{jg}}{\rr\gamma_{j}}\right\|_2^2 = O_p(G).
\end{equation*}
As $\widehat \gamma_j - \rr\gamma_j = O_p(1/\sqrt G)$ and $M$ is fixed when $G \to\infty$, we obtain 
$$\max_i\left\|\widehat{\bm U}^\top \bm W \bm \epsilon'_i - \widehat{\bm U}^{\star \top} \bm W \bm \epsilon'_i\right\|_2 = o_p(\sqrt G),$$
$$\frac{1}{N}\sum_{i = 1}^N\left\|\widehat{\bm U}^\top \bm W \bm \epsilon'_i - \widehat{\bm U}^{\star \top} \bm W \bm \epsilon'_i\right\|_2 = O_p(1), \quad
\frac{1}{N}\sum_{i = 1}^N\left\|\widehat{\bm U}^\top \bm W \bm \epsilon'_i - \widehat{\bm U}^{\star \top} \bm W \bm \epsilon'_i\right\|_2^2 = O_p(1)
.$$
Combining all above, we get $\sum_{i = 1}^N\big\|\widehat{\bm U}^\top \bm W \bm \epsilon'_i\big\|_2/N = O_p(\sqrt G)$, $\sum_{i = 1}^N\big\|\widehat{\bm U}^\top \bm W \bm \epsilon'_i\big\|_2^2/N = O_p(G)$ and $\max_i\big\|\widehat{\bm U}^\top \bm W \bm \epsilon'_i\big\|_2 = o_p(G)$, 
and we prove the lemma.
\end{proof}

\begin{proof}[Proof of Theorem~\ref{thm:two_group}]
We start with analyzing the difference
\[
\frac{1}{N}\sum_{i = 1}^N\big\|\bs{\widehat p}_i - \bs p_i\big\|_2
\]
under the condition $N/G^2\to0$ as $G\to\infty$.
First, recall that $\widehat {\bs \beta}_i - \bs \beta_i = \widehat{\bs \Omega}^{-1}\bs\phi(\bs \beta_i)/G$. Also, notice that from Theorem~\ref{thm:consistency}, $\widehat{\bs \Omega}^{-1} \overset{p}{\to} \bs \Omega^{-1}$, indicating $\|\widehat{\bs \Omega}^{-1}\|_2 = O_p(1)$. Thus, using Lemma~\ref{lem:phi_unif} we have
$$\frac{1}{N}\sum_{i = 1}^N\big\|\bs{\widehat \beta}_i - \bs \beta_i\big\|_2 \leq \|\widehat{\bs \Omega}^{-1}\|_2\frac{1}{N}\sum_{i = 1}^N\big\|\bs \phi(\bs \beta_i)\big\|_2/G = O_p(G^{-1/2})$$
and 
$$\max_i\big\|\bs{\widehat \beta}_i - \bs \beta_i\big\|_2 \leq \|\widehat{\bs \Omega}^{-1}\|_2\max_i\big\|\bs \phi(\bs \beta_i)\big\|_2/G = o_p(1).$$
Then, using Lemma~\ref{lem:normalizing}, we also have
$$\frac{1}{N}\sum_{i = 1}^N\big\|\bs g(\bs{\widehat \beta}_i) - \bs g(\bs \beta_i)\big\|_2 \leq \frac{L}{N}\sum_{i = 1}^N\big\|\bs{\widehat \beta}_i - \bs \beta_i\big\|_2  = O_p(G^{-1/2})$$
as Assumption~\ref{assmp:group}e guarantees that $\min_i\|\bs \beta_i\|_1 = \min_i \gamma_i\geq C_3$. In addition, for any $\epsilon > 0$ and truncated estimator $\bs{\widehat \beta}_i^\star$,
\begin{align*}
& \mathbb{P}\left(\frac{\sqrt{G}}{N}\sum_{i = 1}^N\big\|\bs g(\bs{\widehat \beta}_i^\star) - \bs g(\widehat{\bs \beta}_i)\big\|_2 > \epsilon\right) \\
&\leq \mathbb{P}\left(\frac{\sqrt{G}}{N}\sum_{i = 1}^N\big\|\bs g(\bs{\widehat \beta}_i^\star) - \bs g(\widehat{\bs \beta}_i)\big\|_2 \neq 0\right) \\
&=\mathbb{P}\left(\cup_{i = 1}^N \{\bs{\widehat \beta}_i^\star \neq \bs{\widehat \beta}_i\}\right) = \mathbb{P}\left(\cup_{i = 1}^N \cup_{k = 1}^K\{{\widehat \beta}_{ik} < 0\}\right)\\
&\leq \mathbb{P}\left(\max_i\|\bs{\widehat \beta}_i - \bs{ \beta}_i\|_2 > C_2\}\right) \overset{G \to \infty}{\to} 0
\end{align*}
So $\sum_{i = 1}^N\big\|\bs g(\bs{\widehat \beta}_i^\star) - \bs g(\widehat{\bs \beta}_i)\big \|_2/N = o_p(G^{-1/2})$ and thus,
\begin{equation}
\label{eq:-est-p-true-p-diff}
\frac{1}{N}\sum_{i = 1}^N\big\|\bs{\widehat p}_i - \bs p_i\big\|_2
=\frac{1}{N}\sum_{i = 1}^N\big\|\bs g(\bs{\widehat \beta}_i^\star) - \bs g(\bs \beta_i)\big\|_2  
= O_p(G^{-1/2}).
\end{equation}

Then, we investigate the consistency of $\widehat{\bs B}$. For any $\bs B$,
\begin{align*}
    \bs L_N(\bs B;\widehat{\bs P}) = \bs L_N(\bs B;\bs P)+\frac{1}{N}\sum_{i=1}^N\left\{\left(\widehat{\bs p^0_i}-\bs p^0_i\right)\tilde{\bs f}_i^\top\right\}.
\end{align*}
Since $\|\tilde{\bs f}_i\|_2$ are uniformly bounded above by Assumption~\ref{assmp:group}a and \eqref{eq:-est-p-true-p-diff} holds, further with Assumption~\ref{assmp:group}c,
\[
\bs L_N(\bs B;\widehat{\bs P}) = \bs L(\bs B)+o_p(1)+O_p(1/\sqrt{G}).
\]
Since $h$ is continuous and both $\bs L_N(\bs B;\widehat{\bs P})=\bs 0$ and $\bs L(\bs B)=\bs 0$ have unique roots by Assumption~\ref{assmp:group}a-b, 
\[
\widehat{\bs B}\overset{p}{\to}\bs B_0
\]
following Lemma 5.10 of \citet{van2000asymptotic}. With this, we do Taylor expansion of $\bs L_N(\bs B;\widehat{\bs P})$ at $\bs B_0$,
\begin{align*}
&0=\vect\left\{\bs L_N(\widehat{\bs B};\widehat{\bs P})\right\}
=\vect\left\{\bs L_N(\bs B_0;\widehat{\bs P})\right\}+\dot{\bs L}_N(\bs B_0)\vect\left(\widehat{\bs B}-\bs B_0\right)+O(\|\widehat{\bs B}-\bs B_0\|_F^2),
\end{align*}
where
\[
\dot{\bs L}_N(\bs B_0)=-\frac{1}{N}\sum_{i=1}^N\left\{(\tilde{\bs f}_i\tilde{\bs f}_i^\top)\otimes \dot{\bs h^0}(\bs B_0\tilde{\bs f}_i)\right\}
\]
and it is free of $\bs P$.
By Assumption~\ref{assmp:group}d, $\bs L_{\bs B_0}=\lim_{N\to\infty}\dot{\bs L}_N(\bs B_0)$ exists and it is invertible.
Since 
\begin{align*}
\lim_{N\to\infty}\dot{\bs L}_N(\bs B_0)+o_p(1)=-\bs L_{\bs B_0},
\quad\widehat{\bs B}\overset{p}{\to}\bs B_0,
\end{align*}
we have
\[
\vect\left(\widehat{\bs B}-\bs B_0\right)
=-\left\{\dot{\bs L}_N(\bs B_0)+o_p(1)\right\}^{-1}\vect\left\{\bs L_N(\bs B_0;\bs P)+L_N(\bs B_0;\widehat{\bs P})-L_N(\bs B_0;\bs P)\right\}.
\]
For sufficiently large $N$s, with equation \eqref{eq:LN-clt} and the fact that
\begin{align*}
\lim_{N\to\infty}\dot{\bs L}_N(\bs B_0)+o_p(1)&=-\bs L_{\bs B_0},
\quad
L_N(\bs B_0;\widehat{\bs P})-L_N(\bs B_0;\bs P)=O_p(1/\sqrt{G}),
\end{align*}
we have
\[
\vect\left(\widehat{\bs B}-\bs B_0\right)
=\left\{\bs L_{\bs B_0}+o_p(1)\right\}^{-1}\vect\left\{\bs L_N(\bs B_0;\bs P)\right\}+O_p(1/\sqrt{G}).
\]
Thus, it holds that
\[
\vect\left(\widehat{\bs B}-\bs B_0\right)=O_p(1/\sqrt{N}+1/\sqrt{G})
\]
We do Taylor expansion again for both $\bs L_N(\bs B;\bs P)$ and $\bs L_N(\bs B;\widehat{\bs P})$ at $\bs B_0$,
\begin{align*}
&0=\vect\left\{\bs L_N(\bs B_N;\bs P)\right\}
=\vect\left\{\bs L_N(\bs B_0;\bs P)\right\}+\dot{\bs L}_N(\bs B_0)\vect\left(\bs B_N-\bs B_0\right)+O\left(\left\|\bs B_N-\bs B_0\right\|_F^2\right),\\
&0=\vect\left\{\bs L_N(\widehat{\bs B};\widehat{\bs P})\right\}
=\vect\left\{\bs L_N(\bs B_0;\widehat{\bs P})\right\}+\dot{\bs L}_N(\bs B_0)\vect\left(\widehat{\bs B}-\bs B_0\right)+O(\|\widehat{\bs B}-\bs B_0\|_F^2)
\end{align*}
Reorganizing these two equations we get
\begin{equation}
\label{eq:reorg-taylor-b-diff}
\begin{split}
&\vect\left\{\frac{1}{N}\sum_{i=1}^N\left(\widehat{\bs p^0_i}-\bs p_i^0\right)\tilde{\bs f}_i^\top\right\}
=\dot{\bs L}_N(\bs B_0)\vect\left(\bs B_N-\widehat{\bs B}\right)+O(\|\bs B_N-\bs B_0\|^2_F)+O(\|\widehat{\bs B}-\bs B_0\|^2_F)\\
&\Rightarrow ~\vect\left(\bs B_N-\widehat{\bs B}\right)
=\left(\dot{\bs L}_N(\bs B_0)\right)^{-1}\vect\left\{\frac{1}{N}\sum_{i=1}^N\left(\widehat{\bs p^0_i}-\bs p_i^0\right)\tilde{\bs f}_i^\top\right\}+O_p\left(\frac{1}{N}+\frac{1}{G}\right),    
\end{split}
\end{equation}
where the last equation is because by Assumption~\ref{assmp:group}d, $\bs L_{\bs B_0}=\lim_{N\to\infty}\dot{\bs L}_N(\bs B_0)$ exists and it is invertible.

\textbf{Part I: when $\bm{N/G\to0}$.}
Since by the definition of Frobenius norm,
\[
 \|\widehat{\bs A} - \bs A_N\|_F\le \|\widehat{\bs B} - \bs B_N\|_F,
\]
it suffices to bound the difference between $\bs B_N$ and $\widehat{\bs B}$.

Since by Assumption~\ref{assmp:group}d,
\[
\left\|\left(\bs L_{\bs B_0}\right)^{-1}\right\|_2
\le 1/C_2,
\]
for sufficiently large $N$s, it holds that
\begin{equation}
\label{eq:b-diff-wrt-G}
\begin{split}
&\|\widehat{\bs A} - \bs A_N\|_F
\le\|\widehat{\bs B} - \bs B_N\|_F \\ 
&\le \left\|\left\{-\bs L_{\bs B_0}+o(1)\right\}^{-1}\right\|_2\left\|\frac{1}{N}\sum_{i=1}^N\left(\widehat{\bs p^0_i}-\bs p_i^0\right)\tilde{\bs f}_i^\top\right\|_F+O_p(1/N)+O_p(1/G) \\
&\le \frac{1}{N C_2}\left\|\sum_{i=1}^N\left({\widehat{\bs p_i^0}}-{\bs p_i^0}\right)\tilde{\bs f}_i^\top\right\|_F+O_p(1/N)+O_p(1/G)\\
&\leq \frac{1}{N C_2}\sum_{i=1}^N\left\|\tilde{\bs f}_i\right\|_2\left\|\widehat {\bs p^0_i} - \bs p_i^0\right\|_2 + O_p(1/N)+O_p(1/G)\\
& \leq \frac{\sqrt{C_1+1}}{C_2}\times\frac{1}{N}\sum_{i=1}^N\left\|\widehat{\bs p_i^0} - {\bs p}_i\right\|_2+O_p(1/N)+O_p(1/G)
= O_p(\frac{1}{\sqrt{G}}),    
\end{split}
\end{equation}
where the last inequality is by Assumption~\ref{assmp:group}c.
That is, $\|\widehat{\bs A} - \bs A_N\|_F = O_p(1/\sqrt G) = o_p(1/\sqrt N)$ when $N/G \to 0$.

\textbf{Part II: when $\bm{N/G^2\to0}$ and the global null holds.}
We intend to prove for the conclusion that $\|\widehat{\bs A} - \bs A_N\|_F = O_p(1/G)+O_p(1/N)$ when the global null holds. 

By \eqref{eq:reorg-taylor-b-diff}, when $\bs A_0=\bs 0$,
\begin{align*}
&\vect\left(\bs B_N-\widehat{\bs B}\right)
=\left\{\dot{\bs L}_N(\bs B_0)\right\}^{-1}\vect\left\{\frac{1}{N}\sum_{i=1}^N\left(\widehat{\bs p^0_i}-\bs p_i^0\right)\tilde{\bs f}_i^\top\right\}+O_p(1/N)+ O_p(1/G)\\
&=\left\{\frac{1}{N}\sum_{i=1}^N \tilde{\bs f}_i\tilde{\bs f}_i^\top\otimes\dot{\bs h^0}(\bs b_0)\right\}^{-1}\vect\left\{\frac{1}{N}\sum_{i=1}^N\left(\widehat{\bs p^0_i}-\bs p_i^0\right)\tilde{\bs f}_i^\top\right\}+O_p(1/N)+ O_p(1/G).
\end{align*}
Moreover, by $\sum_{i=1}^N\bs f_i=0$,
\begin{align*}
&\vect\left(\bs B_N-\widehat{\bs B}\right)
=\begin{bmatrix}
\bs b_N-\widehat{\bs b}\\
\vect\left(\bs A_N^\top-\widehat{\bs A}^\top\right)
\end{bmatrix},\\
&
\left\{\frac{1}{N}\sum_{i=1}^N \tilde{\bs f}_i\tilde{\bs f}_i^\top\otimes\dot{\bs h^0}(\bs b_0)\right\}^{-1}
\begin{bmatrix}
    \frac{1}{N}\sum_{i=1}^N(\widehat{\bs p^0_i}-\bs p_i^0)\\
    \vect\left\{\frac{1}{N}\sum_{i=1}^N\left(\widehat{\bs p^0_i}-\bs p_i^0\right)\tilde{\bs f}_i^\top\right\}
\end{bmatrix}\\
&=
\begin{bmatrix}
 \dot{\bs h^0}(\bs b_0)^{-1}  & \bs 0 \\
  \bs 0 & \left\{\frac{1}{N}\sum_{i=1}^N \bs f_i\bs f_i^\top\otimes\dot{\bs h^0}(\bs b_0)\right\}^{-1} 
\end{bmatrix}
\begin{bmatrix}
    \frac{1}{N}\sum_{i=1}^N(\widehat{\bs p^0_i}-\bs p_i^0)\\
    \vect\left\{\frac{1}{N}\sum_{i=1}^N\left(\widehat{\bs p^0_i}-\bs p_i^0\right)\bs f_i^\top\right\}
\end{bmatrix}\\
&=\begin{bmatrix}
    \dot{\bs h^0}(\bs b_0)^{-1}\frac{1}{N}\sum_{i=1}^N(\widehat{\bs p^0_i}-\bs p_i^0)\\
    \left\{\frac{1}{N}\sum_{i=1}^N \bs f_i\bs f_i^\top\otimes\dot{\bs h^0}(\bs b_0)\right\}^{-1} \vect\left\{\frac{1}{N}\sum_{i=1}^N\left(\widehat{\bs p^0_i}-\bs p_i^0\right)\bs f_i^\top\right\}
\end{bmatrix}.
\end{align*}
Since we care about the difference $\bs A_N-\widehat{\bs A}$, in the following part, we consider
\begin{equation}
\label{eq:a-diff-first-term}
\begin{split}
\vect\left(\bs A_N^\top-\widehat{\bs A}^\top\right)
=&\left\{\frac{1}{N}\sum_{i=1}^N \bs f_i\bs f_i^\top\otimes\dot{\bs h^0}(\bs b_0)\right\}^{-1} \vect\left\{\frac{1}{N}\sum_{i=1}^N\left(\widehat{\bs p^0_i}-\bs p_i^0\right)\bs f_i^\top\right\}\\
&\quad+O_p(1/G)+O_p(1/N).      
\end{split}
\end{equation}
By Assumption~\ref{assmp:group}d,
\[
\left\|\left(\frac{1}{N}\sum_{i=1}^N \bs f_i \bs f_i^\top\otimes\dot{\bs h^0}(\bs b_0)\right)^{-1}\right\|_2
=O(1).
\]
Therefore, we ignore this part in the following analysis and only focus on
\begin{equation}
\label{eq:simplified-goal}
\frac{1}{N}\sum_{i=1}^N\left(\widehat{\bs p}_i-\bs p_i\right)\bs f_i^\top
=\frac{1}{N}\sum_{i=1}^N\left\{\bs g(\widehat{\bs \beta}^\star_i)-\bs g(\bs \beta_i)\right\}\bs f_i^\top.
\end{equation}
Applying Lemma~\ref{lem:normalizing} and Assumption~\ref{assmp:group}c,
\begin{equation}\label{eq:decomp}
\frac{1}{N}\sum_{i=1}^N\bs f_i\left\{\bs g(\widehat{\bs \beta}^\star_i)-\bs g(\bs \beta_i)\right\}^\top
=\frac{1}{N}\sum_{i=1}^N \bs f_i\left(\widehat{\bs \beta}^\star_i-\bs \beta_i\right)^\top\nabla\bs g(\bs\beta_i)^\top
+O\left(\frac{1}{N}\sum_{i=1}^N ||\widehat{\bs\beta}_i^\star-\bs\beta_i||_2^2\right)
\end{equation}
Notice that from Theorem~\ref{thm:consistency}, $\widehat{\bs \Omega}^{-1} \overset{p}{\to} \bs \Omega^{-1}$, indicating $\|\widehat{\bs \Omega}^{-1}\|_2 = O_p(1)$. Using this and Lemma~\ref{lem:phi_unif}, we have
$$\frac{1}{N}\sum_{i = 1}^N\big\|\bs{\widehat \beta}_i - \bs \beta_i\big\|_2^2 \leq \|\widehat{\bs \Omega}^{-1}\|_2\frac{1}{N}\sum_{i=1}^N\big\|\bs \phi(\bs \beta_i)\big\|_2^2/G^2 = O_p(1/G)$$
and 
$$\max_i\big\|\bs{\widehat \beta}_i - \bs \beta_i\big\|_2 \leq \|\widehat{\bs \Omega}^{-1}\|_2\max_i\big\|\bs \phi(\bs \beta_i)\big\|_2/G = o_p(1).$$
Also, same as in the proof of Theorem~\ref{thm:two_group}, for any $\epsilon > 0$,
\begin{align*}
     \PP{\frac{G}{N}\sum_{i = 1}^N\big\|\bs{\widehat \beta}_i^\star - \widehat{\bs \beta}_i\big\|_2^2 > \epsilon} 
    \leq 
\PP{\cup_{i = 1}^N \{\bs{\widehat \beta}_i^\star \neq \bs{\widehat \beta}_i\}}
\leq \PP{\max_i\|\bs{\widehat \beta}_i - \bs{ \beta}_i\|_2 > \delta\}} \overset{G \to \infty}{\to} 0
\end{align*}
which indicates that $\frac{1}{N}\sum_{i=1}^N ||\widehat{\bs\beta}_i^\star-\widehat{\bs\beta}_i||_2^2 = o_p(1/G)$. Thus, $\frac{1}{N}\sum_{i=1}^N ||\widehat{\bs\beta}_i^\star-\bs\beta_i||_2^2 = O_p(1/G)$.

Next, notice that
{\small
\[
\frac{1}{N}\sum_{i=1}^N \bs f_i\left(\widehat{\bs \beta}^\star_i-\bs \beta_i\right)^\top\nabla\bs g(\bs\beta_i)^\top
=\frac{1}{N}\sum_{i=1}^N \bs f_i\left(\widehat{\bs \beta}_i-\bs \beta_i\right)^\top\nabla\bs g(\bs\beta_i)^\top
+\frac{1}{N}\sum_{i=1}^N \bs f_i\left(\widehat{\bs \beta}^\star_i-\widehat{\bs \beta}_i\right)^\top\nabla\bs g(\bs\beta_i)^\top
\]
}
As $\|\nabla\bs g(\bs\beta_i)\|_2$s are uniformly bounded above. With Assumption~\ref{assmp:group}a, we have
\[
\frac{1}{N}\sum_{i=1}^N \bs f_i(\bs{\widehat \beta}_i - \widehat{\bs \beta}_i^{\star})^\top\nabla\bs g(\bs\beta_i)^\top
= O\left(\frac{1}{N}\sum_{i=1}^N\|\bs{\widehat \beta}_i - \widehat{\bs \beta}_i^{\star}\|_2\right) = o_p(1/G).
\]
Based on the above results, we simplify \eqref{eq:decomp} and obtain
\[
\frac{1}{N}\sum_{i=1}^N \bs f_i\left\{\bs g(\widehat{\bs \beta}^\star_i)-\bs g(\bs \beta_i)\right\}^\top 
= \frac{1}{N}\sum_{i=1}^N \bs f_i(\bs{\widehat \beta}_i - \bs \beta_i)^\top\nabla\bs g(\bs\beta_i)^\top + O_p(1/G).
\]
Notice that $\widehat {\bs \beta}_i - \bs \beta_i = \widehat{\bs \Omega}^{-1}\bs\phi(\bs \beta_i)/G$, so 
\begin{align*}
& \frac{1}{N}\sum_{i=1}^N\nabla\bs g(\bs\beta_i)(\bs{\widehat \beta}_i - \bs \beta_i) \bs f_i^\top
= \frac{1}{NG}\sum_{i=1}^N\nabla\bs g(\bs\beta_i)\widehat{\bs \Omega}^{-1}\bs\phi(\bs \beta_i)\bs f_i^\top \\
&= \frac{1}{NG}\sum_{i=1}^N\nabla\bs g(\bs\beta_i){\bs \Omega}^{-1}\bs\phi(\bs \beta_i)\bs f_i^\top + \frac{1}{NG}\sum_{i=1}^N\nabla\bs g(\bs\beta_i)(\widehat{\bs \Omega}^{-1} - {\bs \Omega}^{-1}) \bs\phi(\bs \beta_i)\bs f_i^\top
\end{align*}
As both $\|\nabla\bs g(\bs\beta_i)\|_2$ and $\|\bs f_i\|_2$ are uniformly bounded across $i$,
we have
\begin{align*}
&\left\|\frac{1}{NG}\sum_{i=1}^N\nabla\bs g(\bs\beta_i)(\widehat{\bs \Omega}^{-1} - {\bs \Omega}^{-1}) \bs\phi(\bs \beta_i)\bs f_i^\top\right\|_2\\
&= O\left(\frac{1}{G}\|\widehat{\bs \Omega}^{-1} - {\bs \Omega}^{-1}\|_2 \frac{1}{N}\sum_{i=1}^N\|\bs\phi(\bs \beta_i)\|_2 \right)\\ 
&= O_p(1/G),
\end{align*}
where the last equality is based on Lemma~\ref{lem:con1}-\ref{lem:gap} and \ref{lem:phi_unif}.
So finally, we simply \eqref{eq:simplified-goal} to 
\begin{equation}\label{eq:decomp_final}
\frac{1}{N}\sum_{i=1}^N\nabla\bs g(\bs\beta_i)(\bs{\widehat \beta}_i - \bs \beta_i) \bs f_i^\top
= \frac{1}{NG}\sum_{i=1}^N\nabla\bs g(\bs\beta_i){\bs \Omega}^{-1}\bs\phi(\bs \beta_i)\bs f_i^\top + O_p(1/G).
\end{equation}
Accordingly, we only need to focus on proving that when $\bs A_0=\bs 0$
\[
\frac{1}{NG}\sum_{i=1}^N\nabla\bs g(\bs\beta_i)\bs\Omega^{-1}\bs\phi(\bs \beta_i)\bs f_i^\top = O_p(1/\sqrt{NG}) + O_p(1/G).
\]

As $\bs\phi(\bs \beta_i) = \widehat{\bm U}^\top \bm W \bm \epsilon_i' - \bm H \bm{\beta}_i$, we have 
\begin{equation}\label{eq:two_terms}
\sum_{i=1}^N\nabla\bs g(\bs\beta_i)\bs\Omega^{-1}\bs\phi(\bs \beta_i)\bs f_i^\top =  \sum_{i=1}^N\nabla\bs g(\bs\beta_i){\bs \Omega}^{-1}\widehat{\bm U}^\top \bm W \bm \epsilon_i'\bs f_i^\top - \sum_{i=1}^N\nabla\bs g(\bs\beta_i){\bs \Omega}^{-1}\bm H \bm{\beta}_i\bs f_i^\top
\end{equation}
We prove for each of the two terms. For the first term, first notice that in the proof of Lemma~\ref{lem:phi_unif}, we have already shown that 
$$\frac{1}{N}\sum_{i = 1}^N\left\|\widehat{\bm U}^{ T} \bm W \bm \epsilon_i' - \widehat{\bm U}^{\star \top} \bm W \bm \epsilon_i'\right\|_2 = O_p(1).$$
So given that $\|\nabla\bs g(\bs\beta_i)\|_2$s and $\|\tilde{\bs f}_i\|_2$s are uniformly bounded across $i$,
$$\sum_{i=1}^N\nabla\bs g(\bs\beta_i)\bs\Omega^{-1}\bs\phi(\bs \beta_i)\bs f_i^\top =  \sum_{i=1}^N\nabla\bs g(\bs\beta_i){\bs \Omega}^{-1}\widehat{\bm U}^{\star \top} \bm W \bm \epsilon_i'\bs f_i^\top + O_p(N).$$
Because $\widehat{\bm U}^{\star}$, $\bs g(\bs \beta_i)$ and $\bs \epsilon_i'$ are mutually independent based on Assumption~\ref{asp:homo_popu} and \ref{assmp:random_prop}, and $\EE{\bs \epsilon_i'} = \bs 0$ for each $i$, we have for any $i_1\neq i_2$,
\[
\Cov{\nabla\bs g(\bs\beta_{i_1}){\bs \Omega}^{-1}\widehat{\bm U}^{\star \top} \bm W \bm \epsilon_{i_1}', \nabla\bs g(\bs\beta_{i_2}){\bs \Omega}^{-1}\widehat{\bm U}^{\star \top} \bm W \bm \epsilon_{i_2}'} = \bs 0.
\]
In the proof of Lemma~\ref{lem:phi_unif}, we have also shown that
$\max_i\mathbb{E}\left\|\widehat{\bm U}^{\star \top} \bm W \bm \epsilon_i'\right\|_2^2 = O(G)$. So,
\[
\Var{\sum_{i=1}^N\nabla\bs g(\bs\beta_i){\bs \Omega}^{-1}\widehat{\bm U}^{\star \top} \bm W \bm \epsilon_i'\bs f_i^\top} = \sum_{i=1}^N \Var{\nabla\bs g(\bs\beta_i){\bs \Omega}^{-1}\widehat{\bm U}^{\star \top} \bm W \bm \epsilon_i'\bs f_i^\top} = O(NG).
\]
This indicates that $\sum_{i=1}^N\nabla\bs g(\bs\beta_i){\bs \Omega}^{-1}\widehat{\bm U}^{\star \top} \bm W \bm \epsilon_i'\bs f_i^\top = O_p(\sqrt{NG})$, so the first term of \eqref{eq:two_terms}
\begin{equation}\label{eq:first_term}
    \sum_{i=1}^N\nabla\bs g(\bs\beta_i){\bs \Omega}^{-1}\widehat{\bm U}^\top \bm W \bm \epsilon_i'\bs f_i^\top = O_p(\sqrt{NG}) + O_p(N).
\end{equation}
For the second term of \eqref{eq:two_terms}, note that $\nabla\bs g(\bs\beta_i) = \frac{1}{\bs\beta_i^\top \bs 1}\left(\bs I - \frac{\bs\beta_i\bs 1^\top}{\bs\beta_i^\top \bs 1}\right)$, so
\[
\nabla\bs g(\bs\beta_i){\bs \Omega}^{-1}\bm H \bm{\beta}_i\bs f_i^\top 
= (\bs I - \bs p_i \bs 1^\top) {\bs \Omega}^{-1}\bm H \bm{p}_i\bs f_i^\top \in \mathbb R^{K \times (S+1)}.
\]
Then, by the property of Kronecker product,
\[
\vect\left(\sum_{i=1}^N\nabla\bs g(\bs\beta_i){\bs \Omega}^{-1}\bm H \bm{\beta}_i\bs f_i^\top \right)
=\sum_{i=1}^N(\bs f_i\bs p_i^\top)\otimes(\bs I - \bs p_i \bs 1^\top)\vect\left({\bs \Omega}^{-1}\bm H\right) 
\]
Since $\bs H=O_p(\sqrt{G})$, we only need to check the order of $\sum_{i=1}^N(\bs f_i\bs p_i^\top)\otimes(\bs I - \bs p_i \bs 1^\top)$. The Frobenius norm of each term is
\begin{align*}
&\left\|\bs f_i\bs p_i^\top\otimes(\bs I - \bs p_i \bs 1^\top)\right\|_F^2
=\mathrm{tr}\left[\left\{(\bs p_i\bs f_i^\top)\otimes(\bs I - \bs 1\bs p_i ^\top)\right\}\left\{(\bs f_i\bs p_i^\top)\otimes(\bs I - \bs p_i \bs 1^\top)\right\}\right]\\
&=\mathrm{tr}\left\{\|\bs f_i\|_2^2(\bs p_i\bs p_i^\top)\otimes(\bs I - \bs 1\bs p_i ^\top)(\bs I - \bs p_i \bs 1^\top)\right\}
=\|\bs f_i\|_2^2\mathrm{tr}\left(\bs p_i\bs p_i^\top\right)\mathrm{tr}\left\{(\bs I - \bs 1\bs p_i ^\top)(\bs I - \bs p_i \bs 1^\top)\right\}\\
&=\|\bs f_i\|_2^2\|\bs p_i\|_2^2\left(K-2+\|\bs p_i\|_2^2K\right)
\le2\|\bs f_i\|_2^2(K-1)\le 2 C_1^2(K-1),
\end{align*}
where the inequality uses the fact that $\|\bs p_i\|_2^2\le\|\bs p_i\|_1=1$ and $\max_{i}\|\bs f_i\|_2\le C_1$ by Assumption~\ref{assmp:group}c. Since $\bs p_i$s are mutually independent,
\begin{align*}
&\mathrm{Var}\left[\sum_{i=1}^N\vect \left\{(\bs f_i\bs p_i^\top)\otimes(\bs I - \bs p_i \bs 1^\top)\right\}\right]
=\sum_{i=1}^N\mathrm{Var}\left[\vect\left\{(\bs f_i\bs p_i^\top)\otimes(\bs I - \bs p_i \bs 1^\top)\right\}\right]\\
&\le\sum_{i=1}^N\mathbb{E}\left\|(\bs f_i\bs p_i^\top)\otimes(\bs I - \bs p_i \bs 1^\top)\right\|_F^2
=O(N).
\end{align*}
Since under the global null, $\bs p_i$s share the same mean and covariance matrix, with centered features such that $\sum_{i=1}^N\bs f_i=0$,
\[
\mathbb{E}\left\{\sum_{i=1}^N (\bs f_i\bs p_i^\top)\otimes(\bs I-\bs p_i\bs 1^\top)\right\}
=\sum_{i=1}^N\left\{\bs f_i\mathbb{E}(\bs p_i)^\top\right\}\otimes\bs I-\sum_{i=1}^N\bs f_i\mathbb{E}\left\{\bs p_i^\top\otimes(\bs p_i\bs 1^\top)\right\}=\bs 0.
\]
Then by the Chebyshev inequality, it holds that
\[
\sum_{i=1}^N(\bs f_i\bs p_i^\top)\otimes(\bs I-\bs p_i\bs 1^\top)=O_p(\sqrt{N}).
\]
Therefore,
\begin{equation}\label{eq:second_term}
    \sum_{i=1}^N\nabla\bs g(\bs\beta_i){\bs \Omega}^{-1}\bm H \bm{\beta}_i\bs f_i^\top = O_p(\sqrt{NG})
\end{equation}
Combining \eqref{eq:a-diff-first-term}, \eqref{eq:decomp_final}, \eqref{eq:two_terms}, \eqref{eq:first_term} and \eqref{eq:second_term}, we have
$$\widehat{\bs A} - \bs A_N = O_p\left(\frac{1}{\sqrt{NG}}\right) + O_p\left(\frac{1}{G}\right)+O_p\left(\frac{1}{N}\right) = o_p\left(\frac{1}{\sqrt N}\right).$$
when $N/G^2 \to 0$.
\end{proof}

\begin{remark}
The condition that $\bs A_0 = \bs 0$ is only used to bound the second term \eqref{eq:second_term}. So in the general case when the null $\bs A_0 = \bs 0$ does not hold, we have 
$$\widehat{\bs A} - \bs A_0 = - \frac{1}{NG}\sum_{i=1}^N(\bs I - \bs p_i\bs 1^\top){\bs \Omega}^{-1}\bm H \bm{p}_i\bs f_i^\top + {\bs A}_N - \bs A_0 + O_p(1/G)+O_p(1/N)$$
where 
$$-\frac{1}{NG}\sum_{i=1}^N(\bs I - \bs p_i\bs 1^\top){\bs \Omega}^{-1}\bm H \bm{p}_i\bs f_i^\top = O_p(1/{\sqrt G}).$$
We can still establish the asymptotic normality of $\widehat{\bs A} - \bs A_0$ using our previous proof techniques, although estimating its asymptotic variance in practice will be very challenging.
\end{remark}

\end{document}